\newtheorem{theorem}{Theorem}
\newtheorem{remark}{Remark}
\newtheorem{definition}{Definition}
\newtheorem{assumption}{Assumption}
\newtheorem{corollary}{Corollary}
\newtheorem{lemma}{Lemma}
\newtheorem{proof}{Proof}
\title{The Role of Identification in Data-driven Policy Iteration: A System Theoretic Study
}
\author{
  Bowen Song \\
  Institute for Systems Theory and Automatic Control \\
  University of Stuttgart \\
  Stuttgart, Germany\\
  \texttt{bowen.song@ist.uni-stuttgart.de} \\
   \And
  Andrea Iannelli \\
  Institute for Systems Theory and Automatic Control \\
  University of Stuttgart \\
  Stuttgart, Germany\\
  \texttt{andrea.iannelli@ist.uni-stuttgart.de} \\
}
\begin{document}
\maketitle

\begin{abstract}
The goal of this article is to study fundamental mechanisms behind so-called indirect and direct data-driven control for unknown systems. Specifically, we consider policy iteration applied to the linear quadratic regulator problem. 
Two iterative procedures, where data collected from the system are repeatedly used to compute new estimates of the desired optimal controller, are considered. 
In indirect policy iteration, data are used to obtain an updated model estimate through a recursive identification scheme, which is used in a certainty-equivalent fashion to perform the classic policy iteration update. 
By casting the concurrent model identification and control design as a feedback interconnection between two algorithmic systems, we provide a closed-loop analysis that shows convergence and robustness properties for arbitrary levels of excitation in the data. 
In direct policy iteration, data are used to approximate the value function and design the associated controller without requiring the intermediate identification step. After proposing an extension to a recently proposed scheme that overcomes potential identifiability issues, we establish under which conditions this procedure is guaranteed to deliver the optimal controller. Based on these analyses we are able to compare the strengths and limitations of the two approaches, highlighting aspects such as the required samples, convergence properties, and excitation requirement. Simulations are also provided to illustrate the results.
\end{abstract}

\keywords{Data-driven Control\and Policy Iteration\and System Identification\and Robustness\and Nonlinear Systems}

\section{Introduction}
Data-driven control, which aims at providing control design approaches for all those cases where a complete mathematical description of the system is not available, is a very active research area \cite{HOU20133,khaki2023introduction,10156081,Doerfler_CSM_23_Pt2}. This has several reasons, including the difficulty in modeling complex systems dominating modern applications only relying on first-principles and the increasing availability of data. 
This area comprises a vast body of works, differing for problem settings, techniques, and control objectives. However, there are two categories that fundamentally distinguish almost any proposed method, namely indirect and direct approaches. 
In indirect data-driven control, data is collected and used to learn an estimated model (and possibly its uncertainty), which is then used in combination with classic model-based control methodologies. On the other hand, direct data-driven control uses data to design the controller directly, somehow avoiding the intermediate step of identifying a model. Traditionally, the former has close connections with system identification \cite{ljung1999system} and, when the controller changes during operation, with indirect adaptive control \cite{AC}. Examples of the latter are instead model-free reinforcement learning (RL) \cite{1111} and direct adaptive control \cite{Annaswamy_23_ARCRAS}. 
Comparisons between these approaches to data-driven control have recently been framed in the context of behavioral systems theory \cite{9705109} and learning theory \cite{pmlr-v99-tu19a}.
Identifying the strengths and weaknesses of these approaches is a topic of ongoing research in the community, and achieving a better understanding of the fundamental differences requires looking at this from a different angle. 

In this work, a system theoretic approach is employed to analyze the role of the identified model, which represents the conceptual distinction between direct and indirect methods, in establishing important properties such as number of required samples, convergence rates, and robustness to the quality of the data.
For this we use the observation that iterative algorithms in learning and optimization can be viewed as dynamical systems \cite{Bhaya2006_SIAM_book_AlgoDyn,dorfler2024systems}. This viewpoint has been used before for analysis of optimization algorithms using robust control tools \cite{Scherer_SJCO21-Synth,Lessard_CSM_22_Diss} and the interconnections between a plant and an algorithm \cite{9446633,ZANELLI2021109901}.

To be concrete, we select a prototypical problem setting that has received strong interest across the control and RL community, namely policy iteration (PI) for solving the linear quadratic regulator (LQR) problem.
Policy iteration is a dynamic programming algorithm for optimal control \cite{lewis2012optimal,Bertsekas2022_Abstr_DP} and 
is still 
extensively studied \cite{park2020structured,9794431,9444823,BERTSEKAS_Newton_method_22} as it is the backbone of many approximate dynamic programming algorithms used in RL. PI involves two steps: policy evaluation and policy improvement. 
In policy evaluation, the cost of a given policy is computed, which is then leveraged in the second step to obtain a new (better performing) policy. In its classic formulation, both steps require knowledge of a model, 
and convergence can be ensured under assumptions on cost and dynamics using monotonicity and contractivity properties of the underlying operators\cite{doi:10.1080/00207179.2015.1079737}. 

As for LQR, this is a cornerstone optimal control problem which is a very common setting to study data-driven algorithms \cite{pmlr-v99-tu19a,9444823,Mohammadi_TAC_Conv_mF_LQR,9691800,articlesimulation,Ferizbegovic_LCSS_2020,10383604,9511623}. 
Relevant works for this study are 
\cite{9444823}, which investigates the impact of addictive uncertainties in model-based PI for continuous-time LQR problems, and 
\cite{9691800}, which proposes a direct data-driven policy iteration for the case when system dynamics are unknown.  
In \cite{10383604} a data-driven policy gradient method which combines recursive least squares and model-based policy gradient is proposed, and the convergence properties of this interconnection are analyzed using averaging theory. In \cite{9511623} a model-free $\lambda-$policy iteration method which integrates both value iteration and policy iteration techniques is proposed. This method relaxes the initial stabilizing policy gains requirements typically imposed in PI.

The paper analyzes the properties of indirect and direct policy iteration methods to LQR for unknown systems. In both cases we consider online schemes whereby controller updates take place concurrently with the generation of new data from the system.  
In the indirect policy iteration (IPI) case, we consider the interconnection of a recursive least squares (RLS) estimation algorithm with a policy iteration scheme. The gain matrix is iteratively updated through the PI steps applied to the model estimate provided by RLS from recursive least squares which uses online data measured from the system. 
We propose an algorithmic dynamical systems viewpoint to this problem and see the iterative scheme as a nonlinear feedback interconnection for which we analyze, notably without making a priori assumptions on the level of excitation of the data, the closed-loop response. 
We can use these results to establish under which conditions the system converges to the desired values (optimal controller and true system's model) and, if not, we can provide a guaranteed upper bound on their distance. We can capture analytically the lack of excitation in the data used by the algorithm as a source of disturbance and we are thus able to provide an input-to-state stability-type result that has an intuitive and useful in practice interpretation. 
In the direct policy iteration (DPI) case, the PI steps are parametrized by quantities that can be estimated directly via the data, without explicit model identification. For this we build on a recently developed on-policy (i.e. data are generated with the policy that is being updated) direct policy iteration method \cite{9691800} and modify it to address identifiability issues arising in noise-free scenarios. On the basis of this analysis, we thoroughly discuss: fundamental limitations of the schemes compared to the case where a model is available; strengths and weaknesses of the indirect and direct formulations; and point out clear advantages, at least for this setting, of the indirect solution. These advantages can be summarized as a better sample complexity (in the sense of number of required samples to provide estimates with a guaranteed accuracy) and more robustness to lack of excitation in the data. We reflect on the role played by the availability of an estimated model. 

The main contributions of this work can be summarized as follows:
\begin{enumerate}
    \item investigation of the properties of the recursive least squares algorithm without traditional persistence of excitation requirements,    
    \item system-theoretical analysis of the indirect data-driven policy iteration method,     
    \item extension of a recently proposed direct data-driven policy iteration method to address identifiability issues,    
    \item quantitative comparative study of the indirect and direct data-driven policy iteration methods.         
\end{enumerate}

The paper is organized as follows. Section \ref{sec:PSP} introduces the problem setting and provides some preliminaries. Section \ref{sec:IDD} and Section \ref{sec:DDD} detail the methodologies of the indirect policy iteration and direct policy iteration approaches and analyze the convergence properties, respectively. Section \ref{sec:discussion} focuses on comparing the indirect and direct policy iteration approaches, while Section \ref{sec:simulation} shows the effectiveness of two proposed policy iteration methods and verifies the comparisons by simulations. Section \ref{sec:conclusion} serves as a concluding summary of the work. 

\subsection*{Notation}
We denote by $A\succeq 0$ and $A\succ0$ positive semidefinite and positive definite matrix $A$, respectively. The symbol $\mathbb{S}^n_+$ denotes the sets of real $n\times n$ symmetric and positive semidefinite matrices. $\mathbb{Z}_+$ and $\mathbb{Z}_{++}$ are the sets of non-negative integers and positive integers. Kronecker product is represented as $\otimes$ , $vec(A)=[a_1^\top,a_2^\top,...,a_n^\top]^\top$ stacks the columns of matrix $A$ into a vector, $vecv(v)=[v_1^2,2v_1v_2,...,2v_1v_n,v_2^2,...,2v_2v_n,...,v_n^2]^\top$ rearranges the entries of vector $v$ in this specific pattern, $vecs(P)=[p_{11},...,p_{1n},p_{22},...,p_{2n},...,p_{nn}]^\top$ stacks the upper-triangular part of matrix $P\in \mathbb{S}^n_+$. For matrices, $\lVert \cdot\rVert_F$, $\lVert \cdot\rVert_2$ and $\lVert \cdot\rVert_\infty$ denote respectively their Frobenius norm, induced $2$-norm and infinity norm. $I_n$ and $O_n$ are the identity matrix and zero matrix with $n$ row/columns, respectively. The symbols $\lfloor x \rfloor$ and $\lceil x \rceil$ denote the floor function which returns the greatest integer smaller or equal than $x\in\mathbb{R}$ and ceil function which returns the smallest integer greater or equal than $x\in\mathbb{R}$, respectively.  A function belongs to class $\mathcal{K}$ if it is continuous, strictly increasing, and vanishing at the origin. A function $\beta(x , t)$ is called $\mathcal{KL}$ function if $\beta(x , t)$ decrease to 0 as $t\rightarrow0$ for every $x\geq0$ and $\beta(\cdot , t)\in \mathcal{K}$ for all $t\geq 0$. A sequence is a map $\mathbb{Z}_+\rightarrow \mathbb{R}^{n \times m}$ and is denoted by $Y:=\{Y_t\}$. 
For a scalar sequence ($n=m=1$), we define its infinite norm as $\lVert Y \rVert_\infty=\sup\limits_{t\in{\mathbb{Z}_+}}\{Y_t\}$. 
The symbols $\lambda(A)$ and $\lambda_i(A)$ denote the set of eigenvalues and the smallest $i$-th eigenvalue of the square matrix $A$, respectively.   

\section{Problem Setting and Preliminaries}\label{sec:PSP}
We consider discrete-time linear time-invariant systems of the form
\begin{equation}\label{LTI}
  x_{t+1}=Ax_t+Bu_t,
\end{equation}
where $x_t\in \mathbb{R}^{n_x}$ is the system state, $u_t\in \mathbb{R}^{n_u}$ is the control input and $t$ is the timestep. The matrices $A\in \mathbb{R}^{n_x \times n_x}$ and $B \in \mathbb{R}^{n_x \times n_u}$ are unknown, but we assume that the pair $(A,B)$ is stabilizable. This is a standard assumption in data-driven control \cite{pmlr-v99-tu19a},\cite{9691800}. The objective is to design a state-feedback controller $u_t=Kx_t$ that minimizes the following infinite horizon cost:
\begin{equation}\label{Cost}
  J(x_t,K)=\sum\limits_{k=t}^{+\infty} r(x_k,u_k)=\sum\limits_{k=t}^{+\infty} x_k^\top Qx_k+u_k^\top Ru_k,
\end{equation}
where $R\succ 0$ and $Q\succeq 0$. 
Because the dynamics (\ref{LTI}) is linear and the cost (\ref{Cost}) is quadratic, this optimal control problem is referred to as an infinite horizon linear quadratic regulator (LQR).
Given a linear state-feedback control that is stabilizing (i.e. $u_t=K x_t$ and the matrix $A+BK$ is Schur stable), the corresponding cost $J(x_t, K)$  
can be expressed as $x_t^\top Px_t$, where $P\succ 0$ is also called the quadratic kernel of the cost function associated with $K$ \cite{9691800}. Starting from \eqref{Cost}, $P$ can be calculated by solving the following dynamic programming equation named value-based Bellman equation: 
\begin{equation}\label{Bellman}
    \underbrace{x_t^\top Px_t}_{J(x_t,K)}=\underbrace{x_t^\top Qx_t+u_t^\top Ru_t}_{r(x_t,u_t)}+\underbrace{x_{t+1}^\top Px_{t+1}}_{J(x_{t+1},K)}.
\end{equation} 
which does not depend on the system model, but only on the state response. 
Using \eqref{LTI} in \eqref{Bellman}, the model-based Bellman equation reads:
\begin{equation}\label{MBBE}
  P=Q+K^\top RK+(A+BK)^\top P(A+BK).
\end{equation} 
It is a well-known result from optimal control \cite{FERRANTE2013471} that the optimal controller solution to the LQR problem is a linear state-feedback, and the optimal feedback gain $K^*$ is obtained via:
\begin{subequations}\label{Kpolicyimprovement_EQ}
\begin{align}
K^*&=-(R+B^\top P^*B)^{-1}B^\top P^*A, \label{Kpolicyimprovement} \\
P^*&=Q+A^\top P^*A-A^\top P^*B(R+B^\top P^*B)^{-1}B^\top P^*A. \label{DARE} 
\end{align}
\end{subequations}
Here, $P^*$ is the quadratic kernel of the value function, i.e. of the cost associated with the optimal gain $K^*$, and is the unique solution of the discrete algebraic Riccati equation (DARE) \eqref{DARE}.

\subsection{Model-based policy iteration}
Even when the system's model is available, solving \eqref{DARE} directly is challenging especially when dealing with a high number of system states. Policy iteration offers an effective iterative approach to finding the optimal gain $K^*$. The basic version of the policy iteration algorithm\cite{1099755}, which requires knowledge of the system's matrices $A$ and $B$, is summarized in Algorithm \ref{Algo1}.
\begin{algorithm}
  \caption{Model-based policy iteration.}\label{Algo1}
  \begin{algorithmic}
      \Require $A,B$, a stabilizing policy gain $K_1$ 
      \For{$i=1,...,+\infty$}
        \State \textbf{Policy Evaluation: find $P_{i}$} 
        \State $P_{i}=Q+K_i^\top RK_{i}+(A+BK_{i})^\top P_{i}(A+BK_{i})$
        \State \textbf{Policy Improvement: update gain $K_{i+1}$}
        \State $K_{i+1}=-(R+B^\top P_iB)^{-1}B^\top P_iA$
      \EndFor
  \end{algorithmic}
\end{algorithm}

Algorithm \ref{Algo1} consists of two phases. In the policy evaluation phase, a stabilizing policy gain $K_i$ is evaluated, i.e. its performance according to the cost function \eqref{Cost} is calculated by computing the corresponding quadratic kernel $P_i$ with the model-based Bellman equation \eqref{MBBE}. In the policy improvement phase, a new gain $K_{i+1}$ is obtained by minimizing the following cost:
\begin{equation}\label{costPI}
  K_{i+1}=\mathop{\mathrm{arg~min}}\limits_{K}\underbrace{\left[ x_t^\top Qx_t+(Kx_t)^\top R(Kx_t)+(Ax_t+BKx_t)^\top P_i(Ax_t+BKx_t)\right]}_{=:J_{\mathrm{PI}}(K)}.
\end{equation}
Equation \eqref{costPI} can be interpreted as treating $P_i$ as if it was the quadratic kernel of the value function, 
and then minimizing the associated infinite horizon cost $J_{\mathrm{PI}}(K)$. 
Indeed this optimization problem is unconstrained and strictly convex because $R+B^\top P_iB \succ 0$. Thus, to minimize ${J_{\mathrm{PI}}(K)}$ the first order derivative  of \eqref{costPI} with respect to $K$ should be $0$ for every $x_t$, yielding:
\begin{equation}\label{costPI1}
   \frac{\partial J_{\mathrm{PI}}(K)}{\partial K}={\left(2\left( R+B^\top P_iB\right) K+2B^\top P_iA\right)} \left( x_t x_t^\top \right),
\end{equation}
and leads indeed to:
\begin{equation}\label{costPI2}
\begin{split}
     2(R+&B^\top P_iB) K_{i+1}+2B^\top P_iA=0 \Rightarrow    K_{i+1}=-(R+B^\top P_iB)^{-1}B^\top P_iA.
\end{split}
\end{equation} 
The properties of Algorithm \ref{Algo1} are summarized below.
\begin{theorem}{Properties of model-based PI \ref{Algo1}\cite{1099755}} \label{theorem1}
 \\
If the system dynamics $(A,B)$ are controllable, then 
  \begin{enumerate}
    \item $P_1\succeq P_2 \succeq ... \succeq P^*$;
    \item $K_i$ stabilizes the system $(A,B)$, for all $i=1,2,...$;
    \item $\lim\limits_{i\rightarrow\infty}P_i=P^*$, $\lim\limits_{i\rightarrow\infty}K_i=K^*$;
    \item Convergence rate of $P_i$ is given by $\lVert P_{i+1}-P^*\rVert_F \leq C\lVert P_{i}-P^*\rVert^2_F$, where $C$ is a constant.
  \end{enumerate}
\end{theorem}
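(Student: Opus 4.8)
The plan is to prove the four claims in the order listed, since each relies on the previous: an induction establishes that every $K_i$ is stabilizing and supplies well-posedness of the policy-evaluation step; monotonicity then follows from a Lyapunov comparison; the monotone-plus-bounded structure yields convergence to a solution of the DARE; and finally the Newton interpretation of the iteration delivers the quadratic rate. Throughout I would abbreviate $M_i := A + BK_i$ and exploit that, by construction, $K_{i+1}$ minimizes the cost $J_{\mathrm{PI}}(K)$ of \eqref{costPI} with kernel $P_i$.

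For claims (1) and (2) I would argue by induction, the base case being the hypothesis that $K_1$ is stabilizing. Assume $M_i$ is Schur, so that \eqref{MBBE} admits a unique $P_i\succeq 0$. Completing the square in $K$ shows that the matrix $\Phi_i(K) := Q + K^\top R K + (A+BK)^\top P_i (A+BK)$ is minimized, in the positive-semidefinite order, at $K=K_{i+1}$, while $\Phi_i(K_i)=P_i$ by \eqref{MBBE}. Hence $P_i = \Phi_i(K_i) \succeq \Phi_i(K_{i+1})$, i.e.
\begin{equation*}
 P_i \succeq Q + K_{i+1}^\top R K_{i+1} + M_{i+1}^\top P_i M_{i+1},
\end{equation*}
so that $P_i - M_{i+1}^\top P_i M_{i+1} \succeq Q + K_{i+1}^\top R K_{i+1} \succeq 0$. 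This Lyapunov inequality certifies that $M_{i+1}$ is Schur, proving (2) for $i+1$ (the strict decrease needed to exclude unit-circle eigenvalues follows from positive-definiteness of the right-hand side, which is where the controllability/detectability of the problem data enters). Subtracting the policy-evaluation equation for $P_{i+1}$ from the displayed inequality and writing $\Delta_i := P_i - P_{i+1}$ yields $\Delta_i - M_{i+1}^\top \Delta_i M_{i+1} \succeq 0$; since $M_{i+1}$ is Schur, the series $\Delta_i = \sum_{k\geq 0}(M_{i+1}^\top)^k(\Delta_i - M_{i+1}^\top \Delta_i M_{i+1})M_{i+1}^k$ converges and gives $\Delta_i \succeq 0$, i.e. $P_i \succeq P_{i+1}$. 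The lower bound $P_i \succeq P^*$ I would obtain directly from optimality: because $K_i$ is stabilizing, $x_t^\top P_i x_t = J(x_t,K_i) \geq \min_K J(x_t,K) = x_t^\top P^* x_t$ for all $x_t$. This establishes the full ordering in (1).

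For claim (3), the sequence $\{P_i\}$ is monotonically non-increasing in the positive-semidefinite order and bounded below by $P^*$; for fixed $x$ the scalars $x^\top P_i x$ form a bounded monotone sequence and hence converge, and by polarization $P_i \to P_\infty$ for some $P_\infty \succeq P^*$. Passing to the limit in the policy-improvement and policy-evaluation equations, using continuity of the map $P \mapsto -(R+B^\top P B)^{-1}B^\top P A$, shows that $P_\infty$ solves the DARE \eqref{DARE}; invoking uniqueness of its stabilizing positive-semidefinite solution identifies $P_\infty = P^*$, and then $K_i \to K^*$ follows by continuity of \eqref{Kpolicyimprovement}.

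The hard part is claim (4). Here I would make precise the classical observation that Algorithm \ref{Algo1} is exactly the Newton iteration applied to the Riccati operator whose zero is $P^*$. Writing $\tilde P_i := P_i - P^*$ and $G := R+B^\top P^* B$, a completion-of-squares identity gives the exact error recursion $\tilde P_{i+1} = M_{i+1}^\top \tilde P_{i+1} M_{i+1} + (K_{i+1}-K^*)^\top G (K_{i+1}-K^*)$, while the policy-improvement step makes $K_{i+1}-K^*$ linear in $\tilde P_i$ to leading order, so that $\lVert K_{i+1}-K^*\rVert_F \leq c\lVert \tilde P_i\rVert_F$ for $P_i$ near $P^*$. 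The forcing term is therefore second order in $\tilde P_i$, and solving the Lyapunov equation for $\tilde P_{i+1}$ with the Schur matrix $M_{i+1}$ yields $\lVert \tilde P_{i+1}\rVert_F \leq C \lVert \tilde P_i\rVert_F^2$. The delicate points I expect to spend the most effort on are verifying that the constant $C$ can be chosen uniformly along the trajectory—bounding both the resolvent $(R+B^\top P_i B)^{-1}$ and the solution operator of the Lyapunov equation for the (uniformly Schur, by part (2)) matrices $M_{i+1}$, which relies on the convergence already established in (3)—and carefully collecting the cross terms so that only the genuinely second-order contribution survives.
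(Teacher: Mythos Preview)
The paper does not prove Theorem~\ref{theorem1}: it is stated with a citation to Hewer's classical paper and used as a black box thereafter, so there is no in-paper argument to compare your proposal against. Your sketch is the standard route---the Lyapunov/monotonicity induction of Hewer for items (1)--(3) and the Newton--Kleinman interpretation for the quadratic rate in item (4)---and it is essentially correct.

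One point worth tightening if you carry this out in full: the step ``$P_i - M_{i+1}^\top P_i M_{i+1} \succeq Q + K_{i+1}^\top R K_{i+1}$ implies $M_{i+1}$ is Schur'' does not follow from the inequality alone when $Q$ is only positive semidefinite, since the right-hand side need not be strictly positive definite. You flag this (``which is where the controllability/detectability of the problem data enters''), but the clean way to close it is to first show $P_i \succ 0$ (e.g.\ via detectability of $(A,Q^{1/2})$, which the paper implicitly relies on though only controllability of $(A,B)$ is stated) and then use the strict Lyapunov inequality, or alternatively to invoke the detectability of $(M_{i+1},\,(Q+K_{i+1}^\top R K_{i+1})^{1/2})$ directly. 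Everything else---the completion-of-squares for monotonicity, the monotone-bounded convergence plus DARE uniqueness, and the second-order error recursion with uniformly bounded Lyapunov solution operator---is the textbook argument and is fine as outlined.
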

\begin{remark}
   The controllability requirement in Theorem \ref{theorem1} can be relaxed to stabilizability as per \cite[Theorem 3]{HEWER1973226}.
\end{remark}

\subsection{Online parameter estimation with RLS}
When the dynamics are unknown but linear, one can use linear regression 
to identify the system's model\cite{ljung1999system}. System \eqref{LTI} can be expressed as: 
\begin{equation}\label{ConvergenceSystemID1}
    x_{t+1}=Ax_t+Bu_t=\underbrace{\left[A~B\right]}_{=:\theta}\underbrace{\left[\begin{array}{cc}
         x_t  \\
         u_t 
    \end{array}\right]}_{=:d_t}.
\end{equation}
Given data $\{d_k,x_{k+1}\}^T_{k=1}$ from a trajectory of length $T$, a classic choice for estimating $\hat{\theta}$ is to minimize a least-squares loss function\cite{pmlr-v99-tu19a,articlesimulation,Ferizbegovic_LCSS_2020}:
\begin{equation}\label{Solution}
  \begin{split}
     \theta \in\mathop{\mathrm{arg~min}}\limits_{\hat{\theta}} \sum\limits_{k=1}^{T}(x_{k+1}-\hat{\theta} d_k)^\top(x_{k+1}-\hat{\theta} d_k).
  \end{split}
\end{equation}
When $H_T:=\left(\sum\limits_{k=1}^{T}d_kd_k^\top\right)$ is invertible, $\theta$ is estimated exactly via \eqref{Solution} as follows: 
\begin{equation}\label{Solution2}
  \begin{split}
     \theta=\left(\sum\limits_{k=1}^{T}x_{k+1}d_k^\top\right)H_T^{-1}
  \end{split}
\end{equation}
This is also called batch least squares because the unknown is obtained in one step using all the available data. In an online setting where data are not available all at once but are collected progressively, one can opt for
the recursive least squares (RLS) algorithm \cite{AC} which updates the estimate iteratively based on new measured information. Define $\hat{\theta}_0:=[\hat{A}_0,\hat{B}_0]$ the initialization of system dynamics and $\hat{\theta}_t:=[\hat{A}_t,\hat{B}_t]$ the estimates at time $t$ of the estimation process. 
In RLS the new estimate $\hat{\theta}_t$ based on the new trajectory $\{d_t,x_{t+1}\}$ and previous estimates $\hat{\theta}_{t-1}$ is obtained in analogy to (\ref{Solution2}) as:
\begin{equation}\label{COnvergenceSystemid8}
  \begin{split}
     \hat{\theta}_t &=\left(\sum\limits_{k=1}^{t}x_{k+1}d_k^\top\right)
     H_t^{-1}\\
       &=\left(\sum\limits_{k=1}^{t-1}x_{k+1}d_k^\top+x_{t+1}d_t\right)H_t^{-1} \\
       &=\left(\hat{\theta}_{t-1}H_{t-1}+x_{t+1}d_t\right)H_t^{-1} \\
       &=\left(\hat{\theta}_{t-1}(H_{t}-d_td_t^\top)+x_{t+1}d_t\right)H_t^{-1} \\
      &=\hat{\theta}_{t-1}+(x_{t+1}-\hat{\theta}_{t-1}d_{t})d^\top_{t}H_t^{-1}.
  \end{split}
\end{equation}
where $H_t$ has also a recursive update rule:
\begin{equation}\label{ConvergenceSystemid9}
  H_{t}=H_{t-1}+d_td_t^\top,
\end{equation}
and $H_0 \succ 0$ is an initialized matrix guaranteeing invertibility of $H_t$ at all $t$. As we will see shortly, the convergence of the RLS algorithm is governed by properties of the sequence $\{d_t d_t^\top\}$.

The RLS scheme is summarized in Algorithm \ref{Algo4}.
\begin{algorithm}
  \caption{Recursive least squares.}\label{Algo4}
  \begin{algorithmic}
      \Require an initialized estimated system dynamic $\hat{\theta}_0$ and positive definite matrix $H_0$ 
      \For{$t=1,...,\infty$}
          \State \textbf{Given $\{x_{t+1},d_t\}$}
          \State $H_{t}=H_{t-1}+d_td_t^\top$
          \State $\hat{\theta}_t=\hat{\theta}_{t-1}+(x_{t+1}-\hat{\theta}_{t-1}d_{t})d^\top_{t}H_t^{-1}$
      \EndFor
  \end{algorithmic}
\end{algorithm}
\begin{remark}\label{rank1update}
In practice, one never inverts $H_{t}$ in \eqref{COnvergenceSystemid8}. For example, by using Sherman-Morrison formula \cite{Petersen2008}, $H_{t}^{-1}$ can be recursively updated as:
   \begin{equation}\label{rank1}
     H_{t}^{-1}=(H_{t-1}+d_{t-1}d_{t-1}^\top)^{-1}=H_{t-1}^{-1}-\frac{H_{t-1}^{-1}d_{t-1}d^\top_{t-1}H_{t-1}^{-1}}{1+d^\top_{t-1}H_{t-1}^{-1}d_{t-1}},
   \end{equation}
   which only requires matrix multiplications between $H_{t-1}^{-1}$ and $d_{t-1}$. 
\end{remark}
The estimation error $\Delta\theta_t:=\hat{\theta}_t-\theta$ at time index $t$ is given as:
\begin{equation}\label{convergenceSystemid4}
\begin{aligned}
  \Delta\theta_t=\Delta\theta_{t-1}\left( I_{n_x+n_u}-d_{t}d^\top_{t}H_t^{-1}\right)=\Delta\theta_{t-1}H_{t-1}H_{t}^{-1}=\Delta\theta_0H_0H_{t}^{-1}.   
\end{aligned}
\end{equation}
From \eqref{convergenceSystemid4}, an upper bound of $\lVert \Delta\theta_t \rVert_F$ is given as follows:
\begin{equation}\label{convergence6}
    \lVert \Delta\theta_t \rVert_F\leq\lVert \Delta\theta_0H_0\rVert_F\lVert H_{t}^{-1} \rVert_F=:\Delta\theta_t^\mathrm{Upper},
\end{equation}
where $\{\Delta\theta^\mathrm{Upper}_t\}$ is a well-defined sequence because the inverse of $H_t$ always exists.

To characterize the convergence of the RLS estimation error, we introduce the following properties of sequences.
\begin{definition}{(Global persistency\cite{bruce2020convergence})}\label{def1}\\
    A sequence $\{Y_t\} \in \mathbb{S}^{n}_+$ is (globally) persistent if there exist $N\geq 1$ and $\alpha > 0$ such that, for all $j\geq 0$,
    \begin{equation}\label{ConvergenceSystemid7}
        \sum\limits_{i=0}^{N-1}Y_{i+j} \succeq \alpha I_n.
    \end{equation}
    The numbers $\alpha$ and $N$ are respectively the lower bound and persistency window of $\{Y_t\}$. \\
    If $Y_t$ is not symmetric, then we use the convention that $\{Y_t\} \in \mathbb{R}^{n\times m}$ is (globally) persistent if $\{Y_t Y_t^\top\}$ is (globally) persistent.
\end{definition}
\begin{definition}{(Local persistency)}\label{def2}\\
    A sequence $\{Y_t\} \in \mathbb{S}^{n}_+$ is locally persistent if there exist $N\geq 1, M\geq 1$ and $\alpha > 0$ such that , for all $j=Mk+1$ where $k\in\mathbb{Z}_+$,
    \begin{equation}\label{ConvergenceSystemid8}
        \sum\limits_{i=0}^{N-1}Y_{i+j} \succeq \alpha I_n.
    \end{equation}
    The numbers $\alpha$ and $N$ are respectively, the lower bound and persistency window of $\{Y_t\}$. $M$ is the persistency interval. As before, $\{Y_t\} \in \mathbb{R}^{n\times m}$ is locally persistent if $(Y_t Y_t^\top)$ is locally persistent.
\end{definition}

Whereas Definition \ref{def1} is classic in the adaptive control literature \cite{AC}, to the best of the authors' knowledge, Definition \ref{def2} has not been proposed so far and is a weaker requirement that will prove useful for the coming results. The difference between global and local persistency is that inequality \eqref{ConvergenceSystemid7} must hold for all $j$, whereas \eqref{ConvergenceSystemid8} must only hold on a subset of indices. If a sequence is globally persistent with window $N$, then it is also locally persistent with the same window $N$ and $M=1$. The reverse is not true. For example, the following periodic scalar sequence $\{1,0,0,1,1,0,0,1,...,\}$ is locally persistent with $N=2$ and $M=2$ but not globally persistent with $N=2$.

It turns out that local persistency is sufficient to show convergence of the RLS estimate.
\begin{theorem}\label{theorem2}
    If the sequence $\{d_t\}$ is locally persistent with any lower bound $\alpha>0$, any persistency window $N\in \mathbb{Z}_{++}$ and $M\in \mathbb{Z}_{++}$, then RLS satisfies $\lim\limits_{t\rightarrow+\infty} \Delta\theta_t^\mathrm{Upper}=0$ and thus $\lim\limits_{t\rightarrow+\infty} \hat{\theta}_t=\theta$. 
\end{theorem}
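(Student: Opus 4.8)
\emph{Proof plan.} The natural starting point is the a-priori bound \eqref{convergence6}. Since $\lVert \Delta\theta_0 H_0\rVert_F$ is a fixed constant, proving $\Delta\theta_t^{\mathrm{Upper}}\to 0$ (and hence, through \eqref{convergence6}, $\Delta\theta_t\to 0$ and $\hat{\theta}_t\to\theta$) is equivalent to showing $\lVert H_t^{-1}\rVert_F\to 0$. Because $H_t\succ 0$ for all $t$, one has $\lVert H_t^{-1}\rVert_F^2=\sum_i \lambda_i(H_t)^{-2}\le (n_x+n_u)\,\lambda_{\min}(H_t)^{-2}$, so the whole theorem reduces to the single claim $\lambda_{\min}(H_t)\to +\infty$.

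The next ingredient is the monotonicity of the recursion \eqref{ConvergenceSystemid9}: since $d_td_t^\top\succeq 0$, the sequence $\{H_t\}$ is nondecreasing in the positive-semidefinite order, so $\lambda_{\min}(H_t)$ is nondecreasing in $t$. Consequently it suffices to exhibit a single subsequence $\{H_{t_K}\}$ along which the minimum eigenvalue diverges, and the divergence of the whole sequence follows by sandwiching an arbitrary $t$ between two consecutive subsequence indices.

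The core step builds such a subsequence from the local persistency hypothesis. For every admissible $j=Mk+1$ the definition gives $\sum_{i=0}^{N-1} d_{j+i}d_{j+i}^\top\succeq \alpha I_{n_x+n_u}$. The difficulty is that when $N>M$ these windows overlap, so adding all of them would double-count shared samples and would not yield a valid lower bound on $\sum_j d_jd_j^\top$. To circumvent this I would introduce the spacing $L:=\lceil N/M\rceil$, chosen precisely so that the windows associated with $k\in\{0,L,2L,\dots,(K-1)L\}$ are pairwise disjoint (the condition $ML\ge N$ guarantees the start of each window exceeds the end of the previous one). Summing the $K$ corresponding persistency inequalities and discarding the remaining positive-semidefinite terms then gives $\sum_{j=1}^{MLK} d_jd_j^\top\succeq K\alpha I_{n_x+n_u}$, whence $H_{MLK}\succeq H_0+K\alpha I_{n_x+n_u}\succeq K\alpha I_{n_x+n_u}$ and therefore $\lambda_{\min}(H_{MLK})\ge K\alpha$.

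Setting $t_K=MLK$ yields $\lambda_{\min}(H_{t_K})\to +\infty$, and the monotonicity established above promotes this to $\lambda_{\min}(H_t)\to +\infty$ for all $t$, which closes the argument. I expect the only genuinely delicate point to be the overlap regime $N>M$: the naive sum over all persistency windows overcounts shared samples, so the disjoint-window selection through $L=\lceil N/M\rceil$ — together with the routine check that the largest index invoked, $ML(K-1)+N$, does not exceed the truncation point $MLK$ — is exactly what makes the lower bound rigorous.
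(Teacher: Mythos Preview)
Your proposal is correct and follows essentially the same route as the paper: both arguments select pairwise disjoint persistency windows with spacing $M\lceil N/M\rceil$, sum the corresponding lower bounds to obtain $H_t\succeq H_0+K\alpha I$ for $t\ge M\lceil N/M\rceil K$, and conclude that $\lVert H_t^{-1}\rVert_F\to 0$. The only cosmetic difference is that the paper carries the bound through directly (via the monotonicity lemma $A\succeq B\Rightarrow \lVert A^{-1}\rVert_F\le\lVert B^{-1}\rVert_F$) to obtain an explicit rate $\Delta\theta_t^{\mathrm{Upper}}=\mathcal{O}\bigl(\lfloor t/(M\lceil N/M\rceil)\rfloor^{-1/2}\bigr)$, whereas you phrase the same content as $\lambda_{\min}(H_t)\to\infty$ along a subsequence plus monotonicity; the underlying idea and the disjoint-window construction are identical.
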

The proof of Theorem \ref{theorem2} can be found in Appendix \ref{App1}.

\section{Indirect Data-driven Policy Iteration}\label{sec:IDD}

In this section, we analyze the indirect data-driven policy iteration (IPI), which combines the model-based policy iteration introduced in Algorithm \ref{Algo1} with the RLS Algorithm \ref{Algo4}. The proposed algorithm is a possible solution to perform PI in the case of unknown system dynamics by performing concurrently policy optimization and online system identification. Even though IPI is a straightforward combination of two standard methods, our primary focus is analyzing its convergence properties and limitations from a system theoretic viewpoint.

\subsection{Algorithm definition}

First, we introduce the structure of the algorithm, for which it is beneficial to use two sets of indices: one referring to \emph{episodes} (denoted by $i$) and the other to the conventional timesteps $t$.   
An episode is characterized by a collection of data $d_t$ from a start timestep to an end timestep. In IPI, the $i$-th episode is associated with the finite length sequence $\{d_t\}$ from time index $(i-1)\tau_\mathrm{IPI}+1$ to time index $i\tau_\mathrm{IPI}$, where $\tau_\mathrm{IPI}$ denotes the length of an episode. This is graphically illustrated in Figure \ref{timeepisode}.\\
\begin{figure}[htb]
  \centering
    \begin{tikzpicture}[auto, node distance=2.2cm]
\draw[->,ultra thick] (-3.5,1)--(2.5,1) node[right]{$i$-episode};
\draw[->,ultra thick] (-6,-1)--(6,-1) node[right]{$t$-timestep};
\draw[black] (-2.5,1) circle (2pt) ;
\filldraw[black] (-1.5,1) circle (2pt) node[above]{$i-1$};
\filldraw[black] (-0.5,1) circle (2pt) node[above]{$i$};
\filldraw[black] (0.5,1) circle (2pt) node[above]{$i+1$};
\draw[black] (1.5,1) circle (2pt) ;

\draw[black] (-5,-1) circle (2pt) ;
\draw[black] (-4,-1) circle (2pt) ;
\filldraw[black] (-3,-1) circle (2pt) node[below]{$\begin{array}{cc}
     ~  \\
     (i-1)\tau_{\mathrm{IPI}} 
\end{array}$};
\draw[black] (-2,-1) circle (2pt) ;
\draw[black] (-1,-1) circle (2pt) ;
\filldraw[black] (0,-1) circle (2pt) node[below]{$\begin{array}{cc}
     ~  \\
     i\tau_{\mathrm{IPI}} 
\end{array}$};

\draw[black] (1,-1) circle (2pt) ;
\draw[black] (2,-1) circle (2pt) ;
\draw[black] (3,-1) circle (2pt) ;
\filldraw[black] (4,-1) circle (2pt) node[below]{$\begin{array}{cc}
     ~  \\
     (i+1)\tau_{\mathrm{IPI}} 
\end{array}$};
\draw[black] (5,-1) circle (2pt) ;

\draw[thick, dashed] (-1.5,1)--(-4,-1);
\draw[thick, dashed] (-0.5,1)--(-3,-1);
\draw[thick, dashed] (-0.5,1)--(-0,-1);
\draw[thick, dashed] (0.5,1)--(1,-1);
\draw[thick, dashed] (0.5,1)--(4,-1);
 \end{tikzpicture}
  \caption{Illustration of episode indices $i$ and time indices $t$.}\label{timeepisode}
\end{figure}
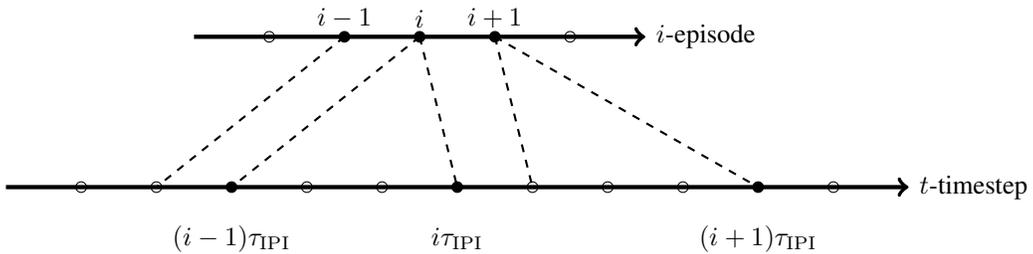

\begin{remark}
The reason to use this double index is that PI variables ($\hat{K}_{i}, \hat{P}_{i}$) and the RLS variables $(\hat{A}_i,\hat{B}_i)$ are updated after each episode $i$. However, the latter updates depend on data indexed by the timestep $t$ collected within each episode from time index $(i-1)\tau_\mathrm{IPI}+1$ to time index $i\tau_\mathrm{IPI}$. 
\end{remark}

At a generic episode $i\geq1$, the procedure of IPI involves the following steps: 

\begin{itemize}
  \item Given a stabilizing policy gain $\hat{K}_{i}$, which originates from the initialization $(i=0)$ or the previous episode, the evaluation step yielding $\hat{P}_i$ is done by solving the model-based Bellman equation \eqref{MBBE} using the available estimate $\left(\hat{A}_{i-1},\hat{B}_{i-1}\right)$:
      \begin{equation}\label{IPI2a}
        \hat{P}_{i}=Q+\hat{K}_i^\top R\hat{K}_{i}+\left(\hat{A}_{i-1}+\hat{B}_{i-1}\hat{K}_{i}\right)^\top\hat{P}_{i}\left(\hat{A}_{i-1}+\hat{B}_{i-1}\hat{K}_{i}\right).
      \end{equation}
      
  \item 
  The plant is excited with $u_t=\hat{K}_i x_t+e_t$, where $e_t$ is a potentially non-zero feedforward term, e.g. a dithering signal introduced for excitation purposes. The $\tau_\mathrm{IPI}$ data points collected from timestep $(i-1)\tau_\mathrm{IPI}+1$ to $i\tau_\mathrm{IPI}$ are then utilized to recursively update the estimates of $\left(\hat{A}_i,\hat{B}_i\right)$ through the RLS Algorithm:
      \begin{subequations}\label{procedure}
        \begin{align}
  H_{i} &=H_{i-1}+\left(\sum\limits_{t=(i-1)\tau_\mathrm{IPI}+1}^{i\tau_\mathrm{IPI}}d_td_t^\top\right),\label{IPI2b} \\
  \hat{\theta}_i &= \left(\hat{\theta}_{i-1}H_{i-1}+ \sum\limits_{t=(i-1)\tau_\mathrm{IPI}+1}^{i\tau_\mathrm{IPI}}x_{t+1}d_t^\top \right)H_{i} ^{-1}.\label{IPI2c}
         \end{align}
      \end{subequations}
  \item The policy is improved using the updated estimate $\left(\hat{A}_i,\hat{B}_i\right)$:
  \begin{equation}\label{IPI2d}
    \hat{K}_{i+1}=-\left(R+\hat{B}_{i}^\top\hat{P}_{i}\hat{B}_{i}\right)^{-1}\hat{B}_{i}^\top\hat{P}_{i}\hat{A}_{i}.
  \end{equation}
\end{itemize}

The choice of the excitation policy, which can be different than the one suggested here and does not even have to be \emph{on-policy}, as well as the impact of parameter $\tau_\mathrm{IPI}$, will be discussed in detail in Section \ref{selectiontau} and Section \ref{Selection} as part of the comparison with the direct counterpart. \\
\begin{remark}
    The term $e_t$ represents an additional degree of freedom of the online policy, which could be used for example as an exploratory signal normally or uniformly distributed, or made a function of $\hat{P}_{i}$ or $H_i$ for active learning purposes. We note here however that the subsequent analysis are agnostic to the choice of $e_t$ and remain valid also for the case $e_t=0$. 
\end{remark}

The IPI algorithm is graphically illustrated in Figure \ref{fig:stucture} and summarized in Algorithm \ref{Algo2}. The closed-loop system is formed by the physical system and the controller, connected by the solid black lines in Figure \ref{fig:stucture}. The algorithmic dynamics are formulated by PI and RLS, connected by the dashed black lines in Figure \ref{fig:stucture}. In the following section, our focus lies on the analysis of the algorithmic dynamics.


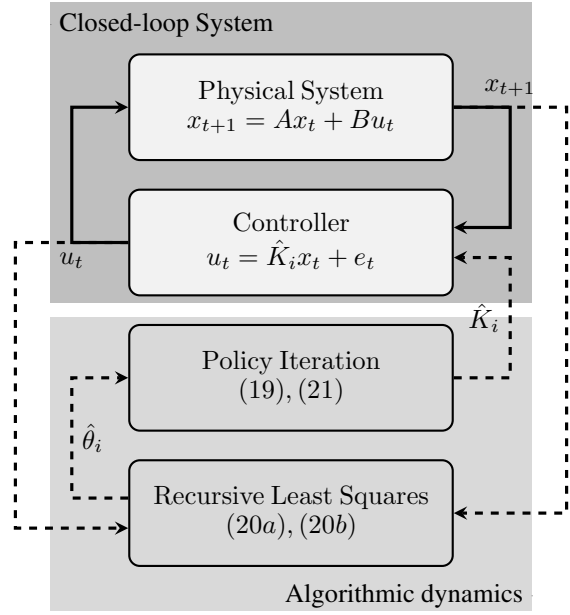
\begin{figure}[htb]
    \centering
    \begin{tikzpicture}[auto, node distance=1.8cm]
        \tikzstyle{block} = [rectangle, thick, draw=black!100, minimum width=4.3cm, minimum height=1.5cm,fill=gray!10, text centered, rounded corners, minimum height=4em]
        \tikzstyle{blockblue} = [rectangle, thick, draw=black!100, minimum width=4.3cm, minimum height=1.5cm, text centered, rounded corners, minimum height=4em]
        \tikzstyle{line} = [draw]
        \tikzstyle{arrowblack}=[->, >=stealth, very thick, black]
        \tikzstyle{arrowblue}=[->, dashed,>=stealth, very thick, black]     
        
        \fill[gray!50,semitransparent] (-3.2,-2.6) rectangle (3.2,1.4);
        \filldraw[black] (-3.2, 1.1) circle (0pt) node[below, right]{Closed-loop System};
        \fill[gray!30,nearly transparent] (-3.2,-6.7) rectangle (3.2,-2.8);
        \filldraw[black] (3.2, -6.5) circle (0pt) node[above,left]{Algorithmic dynamics};
        \node [block] (block1) {$\begin{array}{cc}\mathrm{Physical~System}\\ x_{t+1}=Ax_t+Bu_t \end{array}$};
        \node [block, below of=block1] (block2) {$\begin{array}{cc}\mathrm{Controller}\\ u_{t}=\hat{K}_ix_t+e_t \end{array}$};
        \node [blockblue, below of=block2] (block3) {$\begin{array}{cc}\mathrm{Policy~Iteration}\\ (19),(21) \end{array}$};
        \node [blockblue, below of=block3] (block4) {$\begin{array}{cc}\mathrm{Recursive~Least~Squares}\\ (20a),(20b) \end{array}$};
        \draw[arrowblack] (block1.east) -- ($(block1.east)+(0.75,0)$) -- ($(block2.east)+(0.75,0.2)$) -- ($(block2.east)+(0,0.2)$);
        \draw[arrowblack] (block2.west) -- ($(block2.west)+(-0.75,0)$) -- ($(block1.west)+(-0.75,0)$) -- (block1.west);
        \draw[arrowblue] (block1.east) --node[above,black]{$x_{t+1}$} ($(block1.east)+(1.5,0)$) -- ($(block4.east)+(1.5,0)$) -- (block4.east);
        \draw[arrowblue] ($(block4.west)+(0,0.2)$) -- ($(block4.west)+(-0.75,0.2)$) -- node[right,black]{$\hat{\theta}_i$}($(block3.west)+(-0.75,0)$) -- (block3.west);
        \draw[arrowblue] (block3.east) -- ($(block3.east)+(0.75,0)$) --node[left,black]{$\hat{K}_i$} ($(block2.east)+(0.75,-0.2)$) -- ($(block2.east)+(0,-0.2)$);
        \draw[arrowblue] (block2.west) --node[below,black]{$u_{t}$} ($(block2.west)+(-1.5,0)$) -- ($(block4.west)+(-1.5,-0.2)$) -- ($(block4.west)+(0,-0.2)$);

    \end{tikzpicture}
    \caption{Concurrent identification and policy iteration scheme.}
    \label{fig:stucture}
\end{figure}

\begin{algorithm}
  \caption{Indirect data-driven policy iteration.}\label{Algo2}
  \begin{algorithmic}
      \Require $\hat{A}_0,\hat{B}_0,H_0$, the initial stabilizing policy gain $\hat{K}_1$ 
      \For{$i=1,...,\infty$} (episode counter) 
        \State \textbf{Policy Evaluation: find $\hat{P}_{i}$} 
        \State $\hat{P}_{i}=Q+\hat{K}_i^\top R\hat{K}_{i}+\left(\hat{A}_{i-1}+\hat{B}_{i-1}\hat{K}_{i}\right)^\top\hat{P}_{i}\left(\hat{A}_{i-1}+\hat{B}_{i-1}\hat{K}_{i}\right)$
        \State \For{$j=1,...,\tau_\mathrm{IPI}$} (timestep counter inside episode $i$)
        \State \State \textbf{Excite the system with $u_{\tau_\mathrm{IPI}(i-1)+j}=\hat{K}_ix_{\tau_\mathrm{IPI}(i-1)+j}+e_{\tau_\mathrm{IPI}(i-1)+j}$}
        \State \State \textbf{Collect the data $\gets (x_{\tau_\mathrm{IPI}(i-1)+j},u_{\tau_\mathrm{IPI}(i-1)+j},x_{\tau_\mathrm{IPI}(i-1)+j+1})$}
        \State \State \textbf{Use RLS in Algorithm \ref{Algo4} to update $A_{temp},B_{temp},H_{temp}\gets A_{temp},B_{temp},H_{temp}, \mathrm{\textbf{data}}$}  
        \State \EndFor
        \State $\hat{A}_i,\hat{B}_i,H_i \gets A_{temp},B_{temp},H_{temp}$
        \State \textbf{Policy Improvement: update gain $\hat{K}_{i+1}$}
        \State $\hat{K}_{i+1}=-\left(R+\hat{B}_{i}^\top\hat{P}_{i}\hat{B}_{i}\right)^{-1}\hat{B}_{i}^\top\hat{P}_{i}\hat{A}_{i}$
      \EndFor
  \end{algorithmic}
\end{algorithm}


\subsection{Closed-loop analysis}\label{IPIConvergenceAnalysis}

As suggested by the representation in Figure \ref{fig:stucture}, we study policy iteration and recursive least-squares as a feedback interconnection of two coupled dynamical systems. In the "system PI", the inputs are the estimates $\left(\hat{A}_i,\hat{B}_i\right)$ obtained from the RLS, and the dynamics are described by \eqref{IPI2d} and \eqref{IPI2a}. 
In the "system RLS", the inputs are the data $\{d_t\}$ and $\{x_{t+1}\}$ collected online in an episode, and the dynamics are described by \eqref{IPI2b} and \eqref{IPI2c}.

We first analyze separately the two building blocks RLS (Section \ref{RLSsection}) and PI (Section \ref{PIsection}), and then provide the main results in Section \ref{IPIPI}. 
Before proceeding with these analyses, we introduce in Section \ref{PGS} new properties of (data) sequences that are convenient for the analysis of online algorithms when we cannot assume a priori that the properties of global or local persistence hold.

\subsubsection{Properties of general sequences}\label{PGS}
For the sake of this abstract discussion on sequences properties, we refer in the reminder to a generic sequence $\{S_i\}\in \mathbb{S}^{n}_+$. 
\begin{definition}{(Minimum persistency window length $N^\mathrm{PW}_i$ and lower bound $\alpha_i^{\mathrm{PW}}$)}\label{MinimumWindow}\\
  Given a sequence $\{S_i\}\in \mathbb{S}^{n}_+$, we define the minimum persistency window length $N^\mathrm{PW}_i \in \mathbb{Z}_{+}$ and its associated lower bound $\alpha_i^{\mathrm{PW}}\in\mathbb{R}_{+}$ at index $i$ as the two variables uniquely defined by these two conditions:
  \begin{itemize}
    \item the matrix $\sum\limits_{k=i}^{i+N^\mathrm{PW}_i-1}S_k$ is full rank, i.e. $\lambda_1\left(\sum\limits_{k=i}^{i+N^\mathrm{PW}_i-1}S_k\right)=:\alpha_i^{\mathrm{PW}}>0$,
    \item the matrix $\sum\limits_{k=i}^{i+N^\mathrm{PW}_i-2}S_k$ is singular, i.e. $\lambda_1\left(\sum\limits_{k=i}^{i+N^\mathrm{PW}_i-2}S_k\right)=0$.
  \end{itemize}
  If the sequence is not locally persistent, there are indices $i$ for which $N^\mathrm{PW}_i$ and $\alpha_i^{\mathrm{PW}}$ are not defined. In such cases, our convention is to set $N^\mathrm{PW}_i=0$ and $\alpha_i^{\mathrm{PW}}=+\infty$.
\end{definition}
Building on this notion, we define the largest persistency window length $\bar{N}$ and the smallest persistency window bound $\underline{\alpha}$ for a generic sequence $\{S_i\}$.
\begin{definition}{(Largest persistency window length $\bar{N}$ and smallest lower bound $\underline{\alpha}$)}\label{MinimumWindow2} \\
  Given a sequence $\{S_i\}\in \mathbb{S}^{n}_+$, we construct two sequences $N^{PW}=\{N^{PW}_i\}$ and $\alpha^{PW}=\{\alpha^{PW}_i\}$, whose entries are the minimum persistency window length $N^\mathrm{PW}_i$ and the persistency window lower bound $\alpha_i^{\mathrm{PW}}$ of the sequence $\{S_i\}$. The largest persistency window length and the smallest lower bound are then defined as $\bar{N}:=\sup\{N^\mathrm{PW}_i\}$ and $\underline{\alpha}:=\inf\{\alpha_i^{\mathrm{PW}}\}$
\end{definition} 

We then introduce an auxiliary locally persistent sequence $\{\tilde{S}_i\}$ associated with a generic sequence $\{S_i\}$ and 
defined by 
$N=N_{\max}$, $M=N_{\max}$ and $\alpha=\alpha_{\min}$ (recall Definition \ref{def2}), where
$N_{\max} \in [\bar{N},\infty)$ and $N_{\max} \neq 0$, $\alpha_{\min}\in (0,\underline{\alpha}]$ and $\alpha_{\min} \neq +\infty$. This sequence is shown as the thin line in Figure \ref{nonpersistent}. 
We define next the sequence of non-persistent eigenvalues associated with $\{S_i\}$, which is a useful measure of the difference between $\{S_i\}$ and its associated auxiliary locally persistent sequence $\{\tilde{S}_i\}$.

\begin{definition}{(Number of non-persistent eigenvalues $\{j^\mathrm{non}_i\}$ of a sequence $\{S_i\}$)}\label{def4}\\
   Given a sequence $\{S_i\}\in \mathbb{S}^{n}_+$ and the associated auxiliary locally persistent sequence with its entries $\{\tilde{S}_i\}$ defined by $N=N_{\max}$, $M=N_{\max}$ and $\alpha=\alpha_{\min}$. We can define the sequence of number
of non-persistent eigenvalues $\{j^\mathrm{non}_i\}$ with entries defined by:   
   \begin{equation}\label{definationeq1}
     \begin{split}
        j^\mathrm{non}_i=\max\limits_{j\in \{1,...,n\}}~j&  \\
        s.t.~ \lfloor\frac{i}{N_{\max}}\rfloor\alpha_{\min}-&\lambda_{j}\left(\sum\limits_{k=1}^{i}S_{k}\right) > 0. 
     \end{split} 
   \end{equation}
    Each entry $j_i^\mathrm{non}$ can be interpreted as the number of eigenvalues of $\sum\limits_{k=1}^{i}S_{k}$ which are upper-bounded by $\lfloor\frac{i}{N_{\max}}\rfloor\alpha_{\min}$, hence the name.    
When problem \eqref{definationeq1} is infeasible, our convention is to set $j^\mathrm{non}_i=0$. 
\end{definition}  
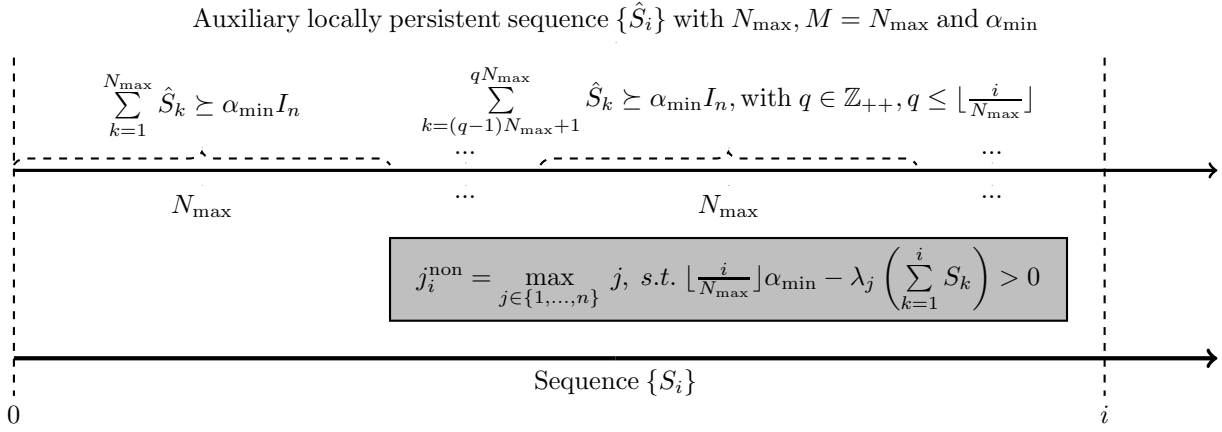
\begin{figure}
  \centering
      \begin{tikzpicture}[auto, node distance=2.2cm]
\draw[->,very thick, black] (-8,1)--(8,1);
\draw[ultra thick] (-8,-1.5)--(0,-1.5);
\draw[->,ultra thick] (0,-1.5) node[below]{$\mathrm{Sequence}~\{S_i\} $}--(8,-1.5);
\draw[thick,dashed] (-8,2.5)--(-8,-2) node[below]{$0$};
\draw[thick,dashed] (6.5,2.5)--(6.5,-2) node[below]{$i$};
\filldraw[black] (0,2.7) circle (0pt) node[above,black]{$\mathrm{Auxiliary~locally~persistent~sequence~} \{\hat{S}_i\} \mathrm{~with~} N_{\max},M=N_{\max} \mathrm{~and~} \alpha_{\min} $};

\draw[decorate, dashed, thick, decoration={brace, raise=2pt,aspect=0.5,amplitude=5pt}] (-8,1)--(-3,1);
\filldraw[black] (-5.5,1.3) circle (0pt) node[above,black]{${\sum\limits_{k=1}^{N_{\max}}\hat{S}_{k}}\succeq \alpha_{\min}I_n$};
\filldraw[black] (-5.5,0.8) circle (0pt) node[below,black]{${N_{\max}}$};


\filldraw[black] (-2,1.1) circle (0pt) node[above,black]{$...$};
\filldraw[black] (-2,0.8) circle (0pt) node[below,black]{$...$};

\draw[decorate, dashed, thick, decoration={brace, raise=2pt,aspect=0.5,amplitude=5pt}] (-1,1)--(4,1);
\filldraw[black] (1.5,1.3) circle (0pt) node[above,black]{${\sum\limits_{k=\left(q-1\right)N_{\max}+1}^{q N_{\max}}\hat{S}_{k}}\succeq \alpha_{\min}I_n, \mathrm{with}~q\in \mathbb{Z}_{++},q \leq \lfloor \frac{i}{N_{\max}}\rfloor$};
\filldraw[black] (1.5,0.8) circle (0pt) node[below,black]{${N_{\max}}$};

\filldraw[black] (5,1.1) circle (0pt) node[above,black]{$...$};
\filldraw[black] (5,0.8) circle (0pt) node[below,black]{$...$};

\fill[gray!50,nearly transparent] (-3,-1) rectangle (6,0.1);
\filldraw[black] (1.5,-1) circle (0pt) node[above]{$j^\mathrm{non}_i=\max\limits_{j\in \{1,...,n\}}~j, ~s.t.~ \lfloor\frac{i}{N_{\max}}\rfloor\alpha_{\min}-\lambda_{j}\left(\sum\limits_{k=1}^{i}S_{k}\right) > 0$};
\draw[thick] (-3,-1) rectangle (6,0.1); 

 \end{tikzpicture}
  \caption{Graphical representation of sequence properties.}\label{nonpersistent}
\end{figure}

\begin{remark}{(Interpretation of $j^\mathrm{non}$ and Figure \ref{nonpersistent})}\\
  We give here some insights on the meaning of the sequence $j^\mathrm{non}$ from Definition \ref{def4} and its pictorial representation in Figure \ref{nonpersistent}. The main intuition is that $\{j^\mathrm{non}_i\}$ is a measure of the difference between the generic sequence $\{S_i\}$ under investigation and its associated auxiliary locally persistent sequence $\{\tilde{S}_i\}$ introduced after Definition \ref{MinimumWindow}. Note first that, from its definition, $j^\mathrm{non}_i\in [0,n]$. Consider then the two \emph{extreme} cases 
    \begin{enumerate}
        \item $\{S_i\}$ is locally persistent with $N=N_{\max}$, $M=N_{\max}$ and $\alpha=\alpha_{\min}$. Then, it follows from Definition \ref{def2} that $\lambda_{1}\left(\sum\limits_{k=1}^{i}S_{k}\right) \geq\lfloor\frac{i}{N_{\max}}\rfloor\alpha_{\min}$, $\forall i\in \mathbb{Z}_+$, and thus problem \eqref{definationeq1} has no solution. This shows , for locally persistent sequences, $j^\mathrm{non}_i=0$, $\forall i\in \mathbb{Z}_{+}$.
                \item $\{S_i\}$ is a sequence with entries $S_i=0_n$, $\forall i\in \mathbb{Z}_+$.
        Then, $\bar{N}=0,\underline{\alpha}=+\infty$ (see Definition \ref{MinimumWindow} and \ref{MinimumWindow2}).   
        It follows from \eqref{definationeq1} that for any selection of $N_{\max}$ and $\alpha_{\min}$ it holds $j^\mathrm{non}_i=n$, $\forall~i\geq N_{\max}$. 
     \end{enumerate}
       For more general cases, observe that the auxiliary locally persistent sequence $\{\tilde{S}_{i}\}$ is characterized by the property that all $n$ eigenvalues of $\left(\sum\limits_{k=1}^{i}\tilde{S}_{k}\right)$ increase by at least $\alpha_{\min}$ after each persistent window. The latter, in view of the definition of local persistency, comprises $N_{\max}$ data points (indices from $i=mN_{\max}$ to $i=(m+1)N_{\max}$ with $m\in\mathbb{Z}_+$). This is illustrated by the thin line in Figure \ref{nonpersistent}. The smallest eigenvalue of $\left(\sum\limits_{k=1}^{i}\tilde{S}_{k}\right)$ can thus be lower-bounded by $\lfloor\frac{i}{N_{\max}}\rfloor\alpha_{\min}$. 
    If we then look carefully at \eqref{definationeq1}, which defines $j^\mathrm{non}_i$, we can interpret this integer variable as the number of eigenvalues of $\left(\sum\limits_{k=1}^{i}S_{k}\right)$ that are smaller than $\left(\lfloor\frac{i}{N_{\max}}\rfloor\alpha_{\min}\right)$. This explains why we can use this notion to quantify the difference between the sequence $\{S_i\}$ and its associated auxiliary locally persistent sequence $\{\tilde{S}_i\}$.\\
    A couple of other observations are in order. If a sequence is non-persistent, i.e. there exists $i_\mathrm{non}\in \mathbb{Z}_+$ such that at least one eigenvalue of $\left(\sum\limits_{k=1}^{i}S_{k}\right)$ does not increase for $i>i_\mathrm{non}$, then $j^\mathrm{non}_i\neq0$ $\forall~i \geq i_\mathrm{non}+N_{\max}$. Another interesting property of $j^\mathrm{non}_i$ is that $j^\mathrm{non}_i \leq j^\mathrm{non}_{i+1}$ $\forall i$, because the number of non-increasing eigenvalues is necessarily non-decreasing.
\end{remark}

\subsubsection{RLS analysis with a generic data sequence in the episodic setting}\label{RLSsection}
Standard convergence analysis of RLS assumes that the data sequence is globally \cite{bruce2020convergence} or, as in Theorem \ref{theorem2}, locally persistent. In this section, we study the properties of RLS applied to a generic data sequence by leveraging the definitions introduced in Section \ref{PGS}.  
To analyze RLS across episode $i$, it is convenient to define the sequence $\{D_i\}$ with entries $D_i$ given by:
\begin{equation}\label{DefinationDi}
  D_i:=\sum\limits_{j=(i-1)\tau_\mathrm{IPI}+1}^{i\tau_\mathrm{IPI}}d_jd_j^\top.
\end{equation} 
The main result can then be stated as follows.
\begin{theorem}\label{theorem3}
Given the sequence $\{D_i\}\in \mathbb{S}^{(n_x+n_u)}_+$ with entries defined in \eqref{DefinationDi}, the estimation error of the recursive least squares initialized with $\hat{\theta}_0$ and $H_0=aI, a>0$ is bounded by:
  \begin{equation}\label{RLS3}
  \begin{split}
      \lVert \hat{\theta}_i-\theta\rVert_F\leq \Delta\theta^\mathrm{Upper}_i\leq f\left(\lVert \hat{\theta}_0-\theta \rVert_F,i\right)+g\left(\lVert j^\mathrm{non}\rVert_{\infty}\right),
  \end{split}
\end{equation}
where: $f\left(\lVert \hat{\theta}_0-\theta \rVert_F,i\right):= a(n_x+n_u)\frac{\lVert \hat{\theta}_0-\theta\rVert_F}{a+\lfloor \frac{i}{N_{\max}}\rfloor{\alpha_{\min}}}$ and $g\left(\lVert j^\mathrm{non}\rVert_{\infty}\right):=\lVert \Delta\theta_0\rVert_F\lVert j^\mathrm{non}\rVert_{\infty}$; $\alpha_{\min}$ and $N_{\max}$ are parameters associated with the auxiliary locally persistent sequence of $\{D_i\}$; and $\{j^\mathrm{non}_i\}$ is the associated sequence (see Definition \ref{def4}). 
\end{theorem}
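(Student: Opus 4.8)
The plan is to reduce the whole estimate to a spectral bound on $H_i^{-1}$ and then split the spectrum of the accumulated data matrix exactly along the threshold used to define $j^\mathrm{non}_i$ in Definition \ref{def4}.

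First I would verify that the exact error recursion \eqref{convergenceSystemid4} carries over verbatim to the episodic update \eqref{IPI2b}--\eqref{IPI2c}. Using $x_{t+1}=\theta d_t$ in \eqref{IPI2c} gives $\sum_{t} x_{t+1}d_t^\top = \theta D_i$, and a short telescoping computation then yields $\Delta\theta_i = \Delta\theta_{i-1}H_{i-1}H_i^{-1}$, hence $\Delta\theta_i = \Delta\theta_0 H_0 H_i^{-1}$ as in \eqref{convergenceSystemid4}. Consequently the bound \eqref{convergence6} applies, $\Delta\theta_i^\mathrm{Upper} = \lVert \Delta\theta_0 H_0\rVert_F \lVert H_i^{-1}\rVert_F$, so everything reduces to controlling $\lVert H_i^{-1}\rVert_F$.

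Next I exploit the initialization $H_0 = aI$. Iterating \eqref{IPI2b} gives $H_i = aI + \Sigma_i$ with $\Sigma_i := \sum_{k=1}^{i} D_k \in \mathbb{S}^{(n_x+n_u)}_+$, so that $\lVert \Delta\theta_0 H_0\rVert_F = a\lVert\Delta\theta_0\rVert_F$ and, since adding $aI$ merely shifts the spectrum, $\lambda_j(H_i)=a+\lambda_j(\Sigma_i)$ for every $j$. Because $H_i^{-1}$ is symmetric positive definite, its Frobenius norm is the $\ell^2$ norm of its eigenvalues, which I bound by the $\ell^1$ norm: $\lVert H_i^{-1}\rVert_F \le \sum_{j=1}^{n_x+n_u} \big(a+\lambda_j(\Sigma_i)\big)^{-1}$.

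The crux is splitting this sum at the threshold $\lfloor i/N_{\max}\rfloor\alpha_{\min}$ appearing in Definition \ref{def4}. Under the ordering convention $\lambda_1\le\cdots\le\lambda_{n_x+n_u}$, the integer $j_i^\mathrm{non}$ is precisely the number of eigenvalues of $\Sigma_i$ lying strictly below this threshold, while all remaining (at most $n_x+n_u$) eigenvalues are at or above it. For the $j_i^\mathrm{non}$ non-persistent eigenvalues I use only $\lambda_j(\Sigma_i)\ge 0$ to get $\big(a+\lambda_j(\Sigma_i)\big)^{-1}\le a^{-1}$, contributing at most $j_i^\mathrm{non}/a$; for the persistent ones I use $\lambda_j(\Sigma_i)\ge \lfloor i/N_{\max}\rfloor\alpha_{\min}$ to get a contribution of at most $(n_x+n_u)\big(a+\lfloor i/N_{\max}\rfloor\alpha_{\min}\big)^{-1}$. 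Multiplying the resulting two-term bound on $\lVert H_i^{-1}\rVert_F$ by $a\lVert\Delta\theta_0\rVert_F$ produces exactly $f(\lVert\hat\theta_0-\theta\rVert_F,i)$ from the persistent group and $\lVert\Delta\theta_0\rVert_F\, j_i^\mathrm{non}$ from the other; bounding $j_i^\mathrm{non}\le \lVert j^\mathrm{non}\rVert_\infty$ then gives $g(\lVert j^\mathrm{non}\rVert_\infty)$ and closes the estimate. I expect the main obstacle to be making this spectral splitting rigorous: one must justify from the ordering convention and Definition \ref{def4} that exactly the smallest $j_i^\mathrm{non}$ eigenvalues fall strictly below the threshold (including the edge case where \eqref{definationeq1} is infeasible and $j_i^\mathrm{non}=0$), and tie $\alpha_{\min}$ and $N_{\max}$ to the auxiliary locally persistent sequence so that $\lfloor i/N_{\max}\rfloor\alpha_{\min}$ is a legitimate lower bound for the persistent part of the spectrum. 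Everything else is the $\ell^2\le\ell^1$ inequality and bookkeeping.
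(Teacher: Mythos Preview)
Your proposal is correct and follows essentially the same route as the paper: reduce to $\Delta\theta_i^{\mathrm{Upper}}=a\lVert\Delta\theta_0\rVert_F\lVert H_i^{-1}\rVert_F$, bound $\lVert H_i^{-1}\rVert_F$ by the $\ell^1$-sum $\sum_j (a+\lambda_j(\Sigma_i))^{-1}$, and then split the spectrum at the threshold $\lfloor i/N_{\max}\rfloor\alpha_{\min}$ from Definition~\ref{def4}. The only cosmetic difference is that the paper first applies the exact identity $\tfrac{1}{a+\lambda_j}=\tfrac{1}{a+T}+\tfrac{T-\lambda_j}{(a+\lambda_j)(a+T)}$ with $T=\lfloor i/N_{\max}\rfloor\alpha_{\min}$ and then drops the non-positive summands, whereas you bound the persistent and non-persistent terms directly; your variant is marginally cleaner and, as you note, also covers $i<N_{\max}$ without a separate case.
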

The proof of Theorem \ref{theorem3} can be found in Appendix \ref{ISSRLS}. 
This result recovers Theorem \ref{theorem2} when its conditions hold.
\begin{corollary}\label{Coro2}
  If the data sequence $\{d_t\}$ is locally persistent with any persistency window $N_{d_t}$, persistency interval $M_{d_t}$ and lower bound $\alpha_{d_t}$, for any positive integer $\tau_\mathrm{IPI} \in \mathbb{Z}_{++}$, then $\mathop{lim}\limits_{i\rightarrow\infty}\lVert \hat{\theta}_i-\theta\rVert_F=0$ and thus $\mathop{lim}\limits_{i\rightarrow\infty}\Delta\theta^\mathrm{Upper}_i=0$. 
\end{corollary}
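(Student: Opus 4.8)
The plan is to derive the corollary directly from the bound in Theorem \ref{theorem3} by showing that, under the stated hypothesis, both terms on its right-hand side vanish as $i\to\infty$: the residual term $g(\lVert j^\mathrm{non}\rVert_{\infty})$ becomes identically zero, while $f(\lVert\hat\theta_0-\theta\rVert_F,i)$ decays to zero. Thus no new estimate of the error is needed; the whole task reduces to controlling the two quantities $\lVert j^\mathrm{non}\rVert_{\infty}$, $N_{\max}$ and $\alpha_{\min}$ that enter the theorem, all of which are properties of the episode-aggregated sequence $\{D_i\}$ of \eqref{DefinationDi}.

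The central step is to establish that $\{D_i\}$ inherits (local, in fact global) persistency from $\{d_t\}$. By the convention of Definition \ref{def2}, local persistency of $\{d_t\}$ means that $\{d_t d_t^\top\}$ is locally persistent, i.e. $\sum_{l=0}^{N_{d_t}-1} d_{l+j}d_{l+j}^\top \succeq \alpha_{d_t} I$ for every $j=M_{d_t}k+1$, $k\in\mathbb{Z}_+$. I would first observe, from \eqref{DefinationDi}, that the sum of $N_D$ consecutive entries of $\{D_i\}$ starting at episode $p$ equals the timestep sum $\sum_{t=(p-1)\tau_\mathrm{IPI}+1}^{(p+N_D-1)\tau_\mathrm{IPI}} d_t d_t^\top$ over a block of $N_D\tau_\mathrm{IPI}$ consecutive timesteps. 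The key claim is that if $N_D\tau_\mathrm{IPI}\ge M_{d_t}+N_{d_t}$, then any such block, \emph{regardless of its starting index}, fully contains at least one persistency window $[M_{d_t}k+1,\,M_{d_t}k+N_{d_t}]$ of $\{d_t d_t^\top\}$; this follows from a short alignment argument on the window starts $M_{d_t}k+1$, which recur every $M_{d_t}$ timesteps. Since every summand $d_t d_t^\top$ is positive semidefinite, discarding all terms outside this sub-window and invoking the persistency inequality yields $\sum_{l=0}^{N_D-1}D_{l+p}\succeq \alpha_{d_t}I$ for all $p$. Choosing $N_D=\lceil (M_{d_t}+N_{d_t})/\tau_\mathrm{IPI}\rceil$, this shows $\{D_i\}$ is globally (hence locally) persistent with window $N_D$ and lower bound $\alpha_{d_t}$.

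With $\{D_i\}$ persistent, I would conclude as follows. Summing $\lfloor i/N_D\rfloor$ disjoint windows $[1,N_D],[N_D+1,2N_D],\dots$ and dropping the positive semidefinite remainder gives $\lambda_1\!\left(\sum_{k=1}^i D_k\right)\ge \lfloor i/N_D\rfloor\,\alpha_{d_t}$. Selecting the admissible auxiliary-sequence parameters $N_{\max}=N_D<\infty$ and $\alpha_{\min}=\alpha_{d_t}>0$ (admissible since persistency forces $\bar N\le N_D$ and $\underline\alpha>0$), this inequality reads $\lambda_1(\sum_{k=1}^i D_k)\ge \lfloor i/N_{\max}\rfloor\alpha_{\min}$, which makes problem \eqref{definationeq1} infeasible for every $i$; hence $j^\mathrm{non}_i=0$ for all $i$, exactly the first extreme case discussed after Definition \ref{def4}. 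Consequently $\lVert j^\mathrm{non}\rVert_{\infty}=0$ and $g(\lVert j^\mathrm{non}\rVert_{\infty})=0$. Finally, since $N_{\max}$ is finite and $\alpha_{\min}>0$, the denominator $a+\lfloor i/N_{\max}\rfloor\alpha_{\min}\to\infty$, so $f(\lVert\hat\theta_0-\theta\rVert_F,i)\to 0$. Theorem \ref{theorem3} then gives $\lVert\hat\theta_i-\theta\rVert_F\le\Delta\theta^\mathrm{Upper}_i\le f(\cdot,i)+0\to 0$, which is the claim.

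I expect the main obstacle to be the alignment argument in the second paragraph: care is needed to verify that the chosen block length $N_D\tau_\mathrm{IPI}$ guarantees a fully contained persistency window for \emph{every} starting episode $p$, not merely for some, since this uniformity is precisely what upgrades the inherited property to global, and hence local, persistency and in turn forces $j^\mathrm{non}_i\equiv 0$ for all $i$. The remaining manipulations — positive semidefinite monotonicity in the disjoint-window sum and the elementary limit of $f$ — are routine.
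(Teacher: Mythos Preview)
Your proposal is correct and follows essentially the same route as the paper: show that the episode-aggregated sequence $\{D_i\}$ inherits (local) persistency from $\{d_t\}$, deduce $j^{\mathrm{non}}_i\equiv 0$, and then let the bound of Theorem~\ref{theorem3} collapse to the vanishing term $f(\cdot,i)$. The only cosmetic difference is the choice of persistency parameters for $\{D_i\}$ --- you take window $N_D=\lceil (M_{d_t}+N_{d_t})/\tau_{\mathrm{IPI}}\rceil$ and bound $\alpha_{d_t}$ via a careful alignment argument, whereas the paper quotes $M_{d_t}\lceil N_{d_t}/M_{d_t}\rceil$ and $\alpha_{d_t}\tau_{\mathrm{IPI}}$ without detail; your version is in fact the more transparent of the two.
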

\begin{proof}
      From Theorem \ref{theorem3}, we know that if the sequence $\{d_t\}$ is locally persistent as defined in Definition \ref{def2}, then the sequence $\{D_i\}$ is also locally persistent with persistency window length $M_{d_t}\lceil\frac{N_{d_t}}{M_{d_t}}\rceil$, persistency window interval $M_{d_t}\lceil\frac{N_{d_t}}{M_{d_t}}\rceil$and $\alpha_{d_t}\tau_\mathrm{IPI}$. As a result, $j_i^\mathrm{non}=0$ and thus the estimation error asymptotically converges to 0.
\end{proof}
When $\{d_t\}$ is not locally persistent, then Theorem \ref{theorem3} provides an upper bound for the estimation error which can be computed based on the properties of $\{D_i\}$ and its auxialiary sequence.

We finally remark that Theorem \ref{theorem2} stated the convergence of RLS with respect to the time index $t$, while Corollary \ref{Coro2} emphasizes the convergence of RLS with respect to the episode index $i$, which is relevant to the IPI setting. There is no conceptual difference between the two, as clear from the proof of Corollary \ref{Coro2}. In particular, it is important to observe that convergence is guaranteed for any selection of $\tau_\mathrm{IPI}$.  

\subsubsection{PI analysis with nominal system}\label{PIsection}

In a similar decoupling spirit, and as an introduction to the main result covered in the next section, we study here the convergence of policy iteration applied to the true system $(A,B)$. 

Let $\alpha(P_i):=B^\top P_iA$ and $\beta(P_i):=R+B^\top P_iB$ (where $\beta(P_i)$ is a positive definite matrix and thus is always invertible). Inserting the equation of policy improvement step \eqref{MBBE} into the policy evaluation step \eqref{costPI2}, the equation between $P_i$ and $P_{i+1}$ formulated by $A,B,Q,R$ reads:
\begin{equation}\label{relationPseq}
\begin{aligned}
   P_{i+1}&=Q+A^\top P_{i+1}A+\alpha^\top (P_i)\beta^{-1}(P_i)\beta(P_{i+1})\beta^{-1}(P_i)\alpha(P_i) \\
     &-\alpha^\top (P_{i+1})\beta^{-1}(P_i)\alpha(P_i)-\alpha^\top (P_{i})\beta^{-1}(P_i)\alpha(P_{i+1}).
\end{aligned}
\end{equation}
Using $ vec(EFG)=(F^\top  \otimes E) vec(G) $ from \cite{Petersen2008}, we can rewrite \eqref{relationPseq} as:
\begin{equation}\label{rerelationPseq2}
\begin{aligned}
      \underbrace{\left[I_{n_x}\otimes I_{n_x}-\left(A^\top - \alpha^\top (P_i)\beta^{-1}(P_i)B^\top \right)\otimes \left(A^\top -\alpha^\top (P_i)\beta^{-1}(P_i)B^\top \right)\right]}_{=:\mathcal{A}(P_i)}vec(P_{i+1})\\
      =vec\left(Q+\alpha^\top (P_i)\beta^{-1}(P_i)R\beta^{-1}(P_i)\alpha(P_i)\right).
  \end{aligned}
\end{equation}
If $\mathcal{A}(P_i)$ is invertible, we can define an operator $\mathcal{L}^{-1}_{(A,B,P_i)}$ that connects $P_i$ and $P_{i+1}$ with system matrix $(A,B)$ from \eqref{rerelationPseq2}:
\begin{equation}\label{ISS3}
   P_{i+1}=\mathcal{L}^{-1}_{(A,B,P_i)}\left(Q+\alpha^\top\left(P_i\right)\beta^{-1}\left(P_i\right)R\beta^{-1}\left(P_i\right)\alpha\left(P_i\right)\right).
\end{equation}
By connecting Theorem $\ref{theorem1}$ and \eqref{rerelationPseq2}, we develop the following theorem:
\begin{theorem}\label{theorem4}
    Given a stabilizable system $(A,B)$ and $P_0 \succeq P^*$ as the initial condition, for any $P_i$ which is recursively computed from $P_{i-1}$ using \eqref{rerelationPseq2} with $i\in\mathbb{Z}_{++}$, $\mathcal{A}(P_i)$ is always invertible and there always exists a constant $c\in (0,1)$, such that:
    \begin{equation}
        \left \lVert P_{i+1}-P^* \right \rVert_F \leq c\left \lVert P_{i}-P^* \right \rVert_F.
    \end{equation}
\end{theorem}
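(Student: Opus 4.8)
The plan is to establish both claims by combining the algebraic structure of the operator $\mathcal{A}(P_i)$ with the properties of model-based PI already collected in Theorem \ref{theorem1}. The first step is to recognize the matrix sitting inside the Kronecker products. Since $\alpha(P_i)=B^\top P_iA$ and $\beta(P_i)=R+B^\top P_iB$, the improvement step \eqref{costPI2} reads $K_{i+1}=-\beta^{-1}(P_i)\alpha(P_i)$, so a direct computation gives $A^\top-\alpha^\top(P_i)\beta^{-1}(P_i)B^\top=\left(A-B\beta^{-1}(P_i)\alpha(P_i)\right)^\top=(A+BK_{i+1})^\top$. Substituting into the definition of $\mathcal{A}(P_i)$ yields
\[
  \mathcal{A}(P_i)=I_{n_x}\otimes I_{n_x}-(A+BK_{i+1})^\top\otimes(A+BK_{i+1})^\top,
\]
which is exactly the discrete Lyapunov operator associated with the closed loop $A+BK_{i+1}$.

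For invertibility I would use the spectral property of the Kronecker product: the eigenvalues of $\mathcal{A}(P_i)$ are precisely $1-\mu_j\mu_k$, where $\mu_j,\mu_k$ range over $\lambda(A+BK_{i+1})$ (neither transposition nor the Kronecker product alters the spectra of the factors). An induction starting from $P_0\succeq P^*$ shows that the iterates of \eqref{rerelationPseq2} coincide with those of Algorithm \ref{Algo1}, so the second statement of Theorem \ref{theorem1} guarantees that each $K_{i+1}$ is stabilizing, i.e. $|\mu_j|<1$. Consequently $|\mu_j\mu_k|<1$, hence $1-\mu_j\mu_k\neq0$ for all $j,k$ and $\mathcal{A}(P_i)$ is nonsingular; this in turn makes \eqref{rerelationPseq2} well posed and closes the induction.

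For the contraction, set $r_i:=\lVert P_i-P^*\rVert_F$. The key preliminary fact is that the Frobenius norm is monotone on the positive semidefinite cone: if $0\preceq X\preceq Y$ then $\lVert X\rVert_F\le\lVert Y\rVert_F$, which follows from $\lVert Y\rVert_F^2-\lVert X\rVert_F^2=2\,\mathrm{tr}\!\left(X(Y-X)\right)+\lVert Y-X\rVert_F^2\ge0$, since the trace of a product of positive semidefinite matrices is nonnegative. Applying this to $0\preceq P_{i+1}-P^*\preceq P_i-P^*$ (first statement of Theorem \ref{theorem1}) gives $r_{i+1}\le r_i$. Moreover, with $D_i:=P_i-P_{i+1}\succeq0$ the same identity gives $r_i^2-r_{i+1}^2\ge\lVert D_i\rVert_F^2$, so $r_{i+1}<r_i$ strictly whenever $P_{i+1}\neq P_i$; and $P_{i+1}=P_i$ forces $P_i$ to solve the DARE \eqref{DARE}, hence $P_i=P^*$. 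Thus $r_{i+1}<r_i$ for every index with $r_i>0$.

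Finally I would turn these strict one-step decreases into a single factor $c\in(0,1)$. Defining the ratios $c_i:=r_{i+1}/r_i\in[0,1)$ for $r_i>0$, the quadratic estimate (fourth statement of Theorem \ref{theorem1}) gives $c_i\le C\,r_i$, and since $r_i\to0$ (third statement of Theorem \ref{theorem1}) we get $c_i\to0$. Hence only finitely many ratios can exceed any fixed threshold, and since each of these finitely many ratios is strictly below $1$, their supremum $c:=\sup_i c_i$ is itself strictly below $1$; the degenerate cases $r_0=0$ or $r_i=0$ from some index onward are handled trivially with any $c\in(0,1)$. This $c$ satisfies $\lVert P_{i+1}-P^*\rVert_F\le c\lVert P_i-P^*\rVert_F$ for all $i$. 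I expect the main obstacle to be exactly this last step: the quadratic rate alone does not bound the early ratios below $1$, while monotonicity alone only yields $c\le1$; the delicate point is fusing the strict per-step decrease with the decay $c_i\to0$ to extract a uniform $c<1$, a step that rests on both the Frobenius monotonicity on the positive semidefinite cone and on the identification of PI fixed points with $P^*$.
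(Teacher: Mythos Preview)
Your proof is correct and follows essentially the same route as the paper: identify $A^\top-\alpha^\top(P_i)\beta^{-1}(P_i)B^\top=(A+BK_{i+1})^\top$, use the Kronecker spectrum together with the stabilizing property from Theorem~\ref{theorem1} to get invertibility of $\mathcal{A}(P_i)$, and then combine the monotone ordering $P_{i+1}-P^*\preceq P_i-P^*$ with the quadratic rate to extract a uniform $c\in(0,1)$. Your argument for strict decrease via $r_i^2-r_{i+1}^2\ge\lVert P_i-P_{i+1}\rVert_F^2$ is in fact more explicit than the paper's, which simply asserts the existence of $\xi_i\in(0,1)$ from $P_i\neq P_{i+1}$.
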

The proof of Theorem \ref{theorem4} is given in Appendix \ref{Prooftheorem6}. Whereas convergence of PI in this setting is a well-known result, Theorem \ref{theorem4} shows in addition that $P_i$ converges exponentially to $P^*$ when initialized as in the statement. A similar result was recently shown in \cite{9794431}, but we use here a different norm and proof technique, so this result can be considered novel and of independent interest.

\subsubsection{Convergence analysis of IPI}\label{IPIPI}

We are now at the stage of analyzing the convergence properties of the interconnection in Figure \ref{fig:stucture2} between the RLS and PI systems, studied independently in the previous Sections. 
\begin{figure}[htb]
 \centering
    \begin{tikzpicture}[auto, node distance=2.2cm]
        \tikzstyle{blockblue} = [rectangle, thick, draw=black!100, minimum width=4.5cm, minimum height=2cm,fill=gray!10, text centered, rounded corners, minimum height=4em]
        \tikzstyle{line} = [draw]

        \tikzstyle{arrowblue}=[->, >=stealth, very thick, black]
        \node [blockblue] (block3) {Policy Iteration};
        \node [blockblue, below of=block3] (block4) {Recursive Least Squares};
        \draw[arrowblue] (block3.east) --node[above,black]{$\hat{K}_i$} ($(block3.east)+(0.75,0)$);
        \draw[arrowblue] ($(block4.west)+(0,0)$) -- ($(block4.west)+(-0.75,0)$) -- node[right,black]{$\hat{\theta}_i$}($(block3.west)+(-0.75,0)$) -- (block3.west);
        \draw[arrowblue] ($(block4.east)+(0.75,-0.0)$) -- node[below,black]{$d_t$}  ($(block4.east)+(0,-0.0)$);
        \draw[thick, dashed] (3,1)--(3,-3.1);        
        \filldraw[black] (3, -1.1) circle (0pt) node[right]{$\begin{array}{cc}\mathrm{Closed-loop}\\ \mathrm{System} \end{array}$};
    \end{tikzpicture}
    \caption{Convergence analysis of coupled dynamical systems.}
    \label{fig:stucture2}
\end{figure}
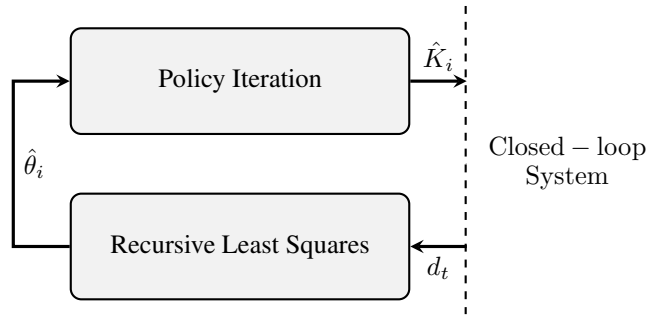

Before stating the main result, we introduce two assumptions.
\begin{assumption}\label{assumption 1}
   The estimates $\left(\hat{A}_i,\hat{B}_i\right)$ obtained from RLS are stabilizable $\forall i\in \mathbb{Z}_+$. 
\end{assumption}
We note that the employed RLS scheme does not provide any a priori guarantee on the satisfaction of this assumption. However, this condition can be checked before using the estimated model in the PI routine, and thus in practice, whenever this condition is violated, the system is excited and RLS is applied
until a stabilizable estimate is obtained. 
Because the true system $(A,B)$ is stabilizable, under the assumption of a locally persistent data sequence, we can also prove the existence of an index $i_\mathrm{st}$ such that $(\hat{A}_i,\hat{B}_i)$ obtained from RLS is stabilizable for all $i\geq i_\mathrm{st}$. This is formally stated next.
\begin{theorem}\label{theorem5}
   For any initialization $\left(\hat{A}_0,\hat{B}_0\right)$ and stabilizable pair $(A,B)$, given a locally persistent sequence $\{D_i\}$ with any lower bound $\alpha>0$ and persistency window $N\in \mathbb{Z}_{++}$ and $M\in \mathbb{Z}_{++}$, there always exists an index $i_\mathrm{st}$ such that $\forall i\geq i_\mathrm{st}$, the estimate obtained from RLS $\left(\hat{A}_i,\hat{B}_i\right)$, is stabilizable. 
\end{theorem}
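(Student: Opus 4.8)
The plan is to combine the convergence of the RLS estimate already established under local persistency with the well-known robustness (openness) of the stabilizability property under small parameter perturbations. First I would invoke Corollary \ref{Coro2}: since $\{D_i\}$ is locally persistent, the RLS estimate converges, so that $\lim_{i\rightarrow\infty}\lVert \hat{\theta}_i-\theta\rVert_F=0$. Recalling $\hat{\theta}_i=[\hat{A}_i\;\hat{B}_i]$ and $\theta=[A\;B]$, this means the estimated pair $(\hat{A}_i,\hat{B}_i)$ converges entrywise to the true pair $(A,B)$ as $i\rightarrow\infty$.

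Next I would exploit stabilizability of $(A,B)$. By definition there exists a fixed gain $K_0$ such that $A+BK_0$ is Schur stable, i.e. its spectral radius $\rho(A+BK_0)$ (the largest modulus among its eigenvalues) satisfies $\rho(A+BK_0)<1$. The key observation is that the map $(\hat{A},\hat{B})\mapsto \rho(\hat{A}+\hat{B}K_0)$ is continuous, because $\hat{A}+\hat{B}K_0$ depends linearly, hence continuously, on $(\hat{A},\hat{B})$ and the spectral radius depends continuously on the matrix entries. Evaluated at the true pair this map returns $\rho(A+BK_0)<1$, so by continuity there exists $\delta>0$ such that every pair with $\lVert [\hat{A}\;\hat{B}]-[A\;B]\rVert_F<\delta$ satisfies $\rho(\hat{A}+\hat{B}K_0)<1$. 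For any such pair the single gain $K_0$ renders $\hat{A}+\hat{B}K_0$ Schur, so $(\hat{A},\hat{B})$ is itself stabilizable.

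Finally I would combine the two pieces. Convergence of $\{\hat{\theta}_i\}$ guarantees the existence of an index $i_\mathrm{st}$ such that $\lVert \hat{\theta}_i-\theta\rVert_F<\delta$ for all $i\geq i_\mathrm{st}$; by the preceding paragraph every such estimate $(\hat{A}_i,\hat{B}_i)$ is stabilizable, which is exactly the claim.

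I expect the only real technical point to be the robustness-of-stabilizability argument, that is, justifying that stabilizability is an open condition in the space of pairs $(A,B)$. The cleanest route is the one sketched above, namely exhibiting a fixed stabilizing gain for the true system and invoking continuity of the spectral radius, rather than arguing through the PBH test or controllability Gramians, which would force one to track rank conditions under perturbation. Everything else follows directly from the already-established convergence result, so no delicate estimate of the convergence rate or explicit value of $i_\mathrm{st}$ is required; only its existence matters.
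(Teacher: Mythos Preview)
Your proposal is correct and follows essentially the same route as the paper: both fix a stabilizing gain for the true pair, invoke continuity of the closed-loop eigenvalues to obtain a neighborhood of stabilizable pairs, and then use RLS convergence under local persistency to land the estimates in that neighborhood for all large $i$. The only cosmetic difference is that you work with the spectral radius $\rho(\hat A+\hat B K_0)$ whereas the paper phrases the perturbation bound via the induced $2$-norm and an explicit threshold $\lVert\Delta\theta_i\rVert_2\le \lVert\Delta L\rVert_2/(1+\lVert K_{\mathrm{st}}\rVert_2)$; your formulation is arguably cleaner since Schur stability is exactly $\rho<1$, not $\lVert\cdot\rVert_2<1$.
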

The proof of Theorem \ref{theorem5} is given in Appendix \ref{prooftheorem5}. This theorem ensures that there always exists an $i_\mathrm{st}$ such that after $i_\mathrm{st}$, Assumption \ref{assumption 1} is always satisfied. 

\begin{assumption}\label{assumption 2}
   Given a stabilizable estimate $\left(\hat{A}_i,\hat{B}_i\right)$, we assume that $\hat{P}_i \succeq P^*_{\left(\hat{A}_i,\hat{B}_i\right)}$ $\forall i\in \mathbb{Z}_+$, where $\hat{P}_i$ is obtained via \eqref{IPI2a} and $P^*_{\left(\hat{A}_i,\hat{B}_i\right)}$ is the quadratic kernel of the value function associated with $\left(\hat{A}_i,\hat{B}_i\right)$ and is calculated by solving 
   \eqref{DARE}.   
\end{assumption}
Assumption \ref{assumption 2} guarantees that item 1 and item 2 in Theorem \ref{theorem1} hold, which is convenient for the subsequent analysis. Moreover, if $\hat{P}_i \succeq P^*_{\left(\hat{A}_i,\hat{B}_i\right)}$ and we iteratively refine $\hat{P}_i$ via Algorithm \ref{Algo1} using $\left(\hat{A}_i,\hat{B}_i\right)$, then $\hat{P}_i$ converges exponentially to $P^*_{\left(\hat{A}_i,\hat{B}_i\right)}$. 
On the contrary, if $\hat{P}_i \nsucceq P^*_{\left(\hat{A}_i,\hat{B}_i\right)}$, there are no guarantees that $\hat{K}_i$ stabilizes $\left(\hat{A}_i,\hat{B}_i\right)$ and the subsequent policy evaluation might be infeasible. In practice, one only needs to have that $\hat{K}_i$ stabilizes the associated estimate, and this can be checked directly. Heuristics to achieve 
Assumption \ref{assumption 2} include again  
choosing $\tau_\mathrm{IPI}$ sufficiently large, or re-initializing $\hat{K}_i$ with a gain that stabilizes the current estimated system dynamic $\left(\hat{A}_i,\hat{B}_i\right)$. Relaxing this requirement with tighter sufficient conditions is left for future work.

We state next the main result of the section.

\begin{theorem}\label{theorem6}
  If Assumption \ref{assumption 1} and Assumption \ref{assumption 2} are satisfied, then the coupled recursive least squares and policy iteration system formulated by \eqref{IPI2a}-\eqref{IPI2d} admits the equivalent dynamical system representation: 
  \begin{subequations}\label{IPI2}
  \begin{align}
  \hat{\theta}_{i+1} &= \left(\hat{\theta}_{i}\left(\sum\limits_{t=1}^{i\tau_\mathrm{IPI}}d_td_t^\top \right)+ \sum\limits_{t=i\tau_\mathrm{IPI}+1}^{(i+1)\tau_\mathrm{IPI}}x_{t+1}d_t^\top \right)\left(\sum\limits_{t=1}^{(i+1)\tau_\mathrm{IPI}}d_td_t^\top \right) ^{-1},\label{IPI21c}\\
  \hat{P}_{i+1}&=\mathcal{L}^{-1}_{\left(\hat{A}_i,\hat{B}_i,\hat{P}_i\right)}\left(Q+\alpha_i^\top\left(\hat{P}_i\right)\beta_i^{-1}\left(\hat{P}_i\right)R\beta_i^{-1}\left(\hat{P}_i\right)\alpha_i\left(\hat{P}_i\right)\right),\label{IPI21d}
  \end{align}
\end{subequations}
where $\alpha_i\left(\hat{P}_i\right):=\hat{B}_i^\top \hat{P}_i\hat{A}_i$ and $\beta_i\left(\hat{P}_i\right)=\beta_i^\top\left(\hat{P}_i\right):=R+\hat{B}_i^\top \hat{P}_i\hat{B}_i$. Then, for any positive integer $\tau_\mathrm{IPI} \in \mathbb{Z}_{++}$ the estimate $\hat{P}_i$ and $\hat{\theta}_i$ satisfy the following relationships: 
  \begin{subequations}\label{ISSCA_TOT}
  \begin{align}
   \left \lVert \hat{P}_{i}- P^*\right \rVert_F &\leq \beta\left(\left \lVert \hat{P}_{0}- P^*\right \rVert_F,i\right)+\gamma\left(\left \lVert \Omega\right \rVert_{\infty}\right),\label{ISSCA}\\
   \left \lVert \hat{\theta}_i-\theta\right \rVert_F&\leq \Delta\theta^\mathrm{Upper}_i\leq f\left(\left \lVert \hat{\theta}_0-\theta \right \rVert_F,i\right)+g\left(\left \lVert j^\mathrm{non}\right \rVert_{\infty}\right),\label{ISSCA2}
  \end{align}
\end{subequations}
where: $\beta\left(\cdot, \cdot \right):=c^i\left \lVert \hat{P}^{0}- P^*\right \rVert_F$ is a $\mathcal{KL}$ function with $c \in (0,1)$; $\gamma\left(\left \lVert \Omega\right \rVert_{\infty}\right):=\frac{1}{1-c}\left \lVert \Omega\right \rVert_{\infty}$ is a $\mathcal{K}$ function; $\Omega$ is a scalar sequence with $\Omega_i:=\sigma_i\Delta\theta^\mathrm{Upper}_i$ and $\sigma_i:=\sum\limits_{k=0}^{9}\rho_k\left(\Delta\theta^\mathrm{Upper}_i\right)^k$; and the detailed expressions of $\rho_k >0$ are provided in Appendix \ref{App2}. The functions $f(\cdot)$ and $g(\cdot)$ are given in Theorem \ref{theorem3}. 
\end{theorem}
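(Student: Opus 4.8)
The plan is to read \eqref{IPI2} as a one-directional feedback in which the recursive least-squares block drives the policy-iteration block, and to establish the two bounds in \eqref{ISSCA_TOT} separately. The representation \eqref{IPI21c} is obtained by unwinding the episodic RLS recursion \eqref{IPI2b}--\eqref{IPI2c} across the episode boundary: substituting the accumulated Gramian $\sum_{t=1}^{i\tau_\mathrm{IPI}} d_t d_t^\top$ and collecting the innovation terms generated between indices $i\tau_\mathrm{IPI}+1$ and $(i+1)\tau_\mathrm{IPI}$ reproduces the episode-level least-squares update \eqref{COnvergenceSystemid8}. The representation \eqref{IPI21d} follows from inserting the policy-improvement step \eqref{IPI2d} into the policy-evaluation step \eqref{IPI2a}; this is the hatted counterpart of the derivation \eqref{relationPseq}--\eqref{ISS3}, and Assumption \ref{assumption 1} guarantees that the Kronecker operator $\mathcal{A}(\hat{P}_i)$ of \eqref{rerelationPseq2} is invertible, so that $\mathcal{L}^{-1}_{(\hat{A}_i,\hat{B}_i,\hat{P}_i)}$ is well defined exactly as in Theorem \ref{theorem4} applied to the estimated pair. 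With this representation in place the bound \eqref{ISSCA2} on $\lVert \hat{\theta}_i-\theta\rVert_F$ is immediate: it is precisely the statement of Theorem \ref{theorem3} for the episodic sequence $\{D_i\}$ of \eqref{DefinationDi}, so nothing new is required there.

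The core of the proof is \eqref{ISSCA}, which I would obtain by treating the model error as an exogenous disturbance entering the PI recursion. The key device is a one-step comparison between the iterate $\hat{P}_{i+1}$ produced with the estimated model and the iterate the \emph{same} policy-iteration step would produce from $\hat{P}_i$ using the true model $(A,B)$; denote the latter by $P_{i+1}^{\mathrm{nom}}:=\mathcal{L}^{-1}_{(A,B,\hat{P}_i)}(Q+\alpha^\top(\hat{P}_i)\beta^{-1}(\hat{P}_i)R\beta^{-1}(\hat{P}_i)\alpha(\hat{P}_i))$. A triangle inequality then gives
\begin{equation*}
  \lVert \hat{P}_{i+1}-P^*\rVert_F \leq \lVert P_{i+1}^{\mathrm{nom}}-P^*\rVert_F + \lVert \hat{P}_{i+1}-P_{i+1}^{\mathrm{nom}}\rVert_F.
\end{equation*}
By the contraction of the true-system PI map established in Theorem \ref{theorem4}, the first term is bounded by $c\lVert \hat{P}_i-P^*\rVert_F$ with $c\in(0,1)$; Assumptions \ref{assumption 1}--\ref{assumption 2}, together with Theorem \ref{theorem5}, are what keep $\hat{P}_i$ uniformly bounded and inside the cone on which this contraction estimate is valid. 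The second term is a pure model-mismatch term, independent of the distance to $P^*$, and the goal is to show it is dominated by the scalar $\Omega_i:=\sigma_i\Delta\theta^\mathrm{Upper}_i$ of the statement, yielding the scalar recursion $a_{i+1}\leq c\,a_i+\Omega_i$ with $a_i:=\lVert \hat{P}_i-P^*\rVert_F$.

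Bounding the mismatch $\lVert \hat{P}_{i+1}-P_{i+1}^{\mathrm{nom}}\rVert_F$ is the main obstacle. Both iterates share the seed $\hat{P}_i$, so their difference is entirely due to replacing $(A,B)$ by $(\hat{A}_i,\hat{B}_i)=(A,B)+\Delta\theta_i$. I would propagate $\Delta\theta_i$ through each algebraic operation in \eqref{IPI21d}: first through $\alpha_i(\hat{P}_i)-\alpha(\hat{P}_i)$ and $\beta_i(\hat{P}_i)-\beta(\hat{P}_i)$ (products linear and quadratic in $\Delta\theta_i$); then through the inverse via $\beta_i^{-1}-\beta^{-1}=-\beta^{-1}(\beta_i-\beta)\beta_i^{-1}$ with the uniform bound $\beta_i\succeq R\succ 0$; and finally through the operator difference $\mathcal{L}^{-1}_{(\hat{A}_i,\hat{B}_i,\hat{P}_i)}-\mathcal{L}^{-1}_{(A,B,\hat{P}_i)}$ using the resolvent identity $\mathcal{A}_1^{-1}-\mathcal{A}_2^{-1}=\mathcal{A}_1^{-1}(\mathcal{A}_2-\mathcal{A}_1)\mathcal{A}_2^{-1}$ on the operator $\mathcal{A}$ of \eqref{rerelationPseq2}. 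Submultiplicativity of $\lVert\cdot\rVert_F$ turns the nested products and the two inverse expansions into a polynomial in $\lVert \Delta\theta_i\rVert_F$; because $\mathcal{A}$ is quadratic (a Kronecker square) in the estimated closed-loop matrix, which itself carries $\hat{A}_i,\hat{B}_i$ and $\beta_i^{-1}$, collecting terms by degree produces precisely the degree-nine polynomial $\sigma_i=\sum_{k=0}^{9}\rho_k(\Delta\theta^\mathrm{Upper}_i)^k$, with coefficients $\rho_k>0$ depending on $Q,R$, the uniform bound on $\lVert\hat{P}_i\rVert_F$, and $\lVert A\rVert_F,\lVert B\rVert_F$, computed in Appendix \ref{App2}; here $\lVert\Delta\theta_i\rVert_F\leq \Delta\theta^\mathrm{Upper}_i$ from \eqref{convergence6} is used to pass to the computable bound. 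The delicate points are exactly maintaining a uniform-in-$i$ bound on $\lVert\hat{P}_i\rVert_F$ (so that the $\rho_k$ are genuine constants) and a uniform invertibility of $\mathcal{A}(\hat{P}_i)$, both of which rest on Assumptions \ref{assumption 1}--\ref{assumption 2} and Theorem \ref{theorem5}.

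Finally I would close the argument by unrolling the scalar recursion $a_{i+1}\leq c\,a_i+\Omega_i$: iterating gives $a_i\leq c^i a_0+\sum_{k=0}^{i-1}c^{\,i-1-k}\Omega_k\leq c^i a_0+\frac{1}{1-c}\lVert \Omega\rVert_\infty$, which is exactly \eqref{ISSCA} with the $\mathcal{KL}$ term $\beta(\lVert\hat{P}_0-P^*\rVert_F,i)=c^i\lVert\hat{P}_0-P^*\rVert_F$ and the $\mathcal{K}$ term $\gamma(\lVert\Omega\rVert_\infty)=\frac{1}{1-c}\lVert\Omega\rVert_\infty$. This is the sought input-to-state-stability estimate, with the model-estimation quality entering as the disturbance $\Omega$. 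It degrades gracefully to exact convergence $\hat{P}_i\to P^*$ whenever $\Delta\theta^\mathrm{Upper}_i\to 0$: since $\Omega_i$ carries a factor $\Delta\theta^\mathrm{Upper}_i$, one has $\Omega_i\to 0$, and the recursion then forces $a_i\to 0$; by Corollary \ref{Coro2} this occurs in particular whenever the data are locally persistent (so that $j^\mathrm{non}\equiv 0$).
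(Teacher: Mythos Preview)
Your proposal is correct and follows essentially the same route as the paper: the same triangle decomposition $\lVert\hat{P}_{i+1}-P^*\rVert_F\leq c\lVert\hat{P}_i-P^*\rVert_F+\lVert\varepsilon(\Delta A_i,\Delta B_i)\rVert_F$ via Theorem~\ref{theorem4}, the same propagation of $\Delta\theta_i$ through $\alpha,\beta,\beta^{-1},\mathcal{A}^{-1}$ to obtain the degree-nine polynomial bound $\sigma_i\Delta\theta_i^{\mathrm{Upper}}$, and the same unrolling of the scalar recursion into the ISS estimate. The only cosmetic differences are that the paper packages the inverse perturbation as the single inequality \eqref{ISS5} rather than the resolvent identity, and relies on Lemma~\ref{lemma2} (not Theorem~\ref{theorem5}) for the uniform bounds on $\lVert\hat{P}_i\rVert_F$ and $\lVert\mathcal{A}^{-1}\rVert_F$.
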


The proof of Theorem \ref{theorem6} can be found in Appendix \ref{App2}.
\begin{remark}{(ISS-type interpretation of \eqref{ISSCA_TOT} in Theorem \ref{theorem6})}\\
A few comments are given in order to better interpret the result of Theorem \ref{theorem6}. For this, we would like to briefly recall input-to-state stability (ISS), which is a notion of stability widely employed to analyze the stability of nonlinear systems subject to external disturbances \cite{Sontag2008} \cite{JIANG2001857}. Given a nonlinear system $y_{t+1}=f(y_t,w_t)$, the system is said to be input to state stable if there exist $\beta_{\mathrm{ISS}}\in\mathcal{KL}$ and $\gamma_{\mathrm{ISS}}\in\mathcal{K}$ such that
\begin{equation}\label{standardISS}
    \lVert y_t \rVert \leq \beta_{\mathrm{ISS}}\left(\lVert y_0 \rVert,t\right)+\gamma_{\mathrm{ISS}}\left(\lVert w \rVert_{\infty}\right).
\end{equation}
As can be seen from the equation above, if a system is ISS, then it is globally asymptotically stable in the absence of external inputs and its state is bounded by a function of the size of the external inputs.

Let us go back now to equation \eqref{ISSCA}. 
Using boundedness of $\Delta\theta^\mathrm{Upper}_i$ (see Lemma \ref{theorem21} in the Appendix \ref{App11}) and thus of $\sigma_i$, we can rewrite it equivalently as:
     \begin{equation}\label{reformulated}
         \left \lVert \hat{P}_{i}- P^*\right \rVert_F \leq \beta\left(\left \lVert \hat{P}_{0}- P^*\right \rVert_F,i\right)+\bar{\gamma}\left(\left \lVert \Delta\theta^\mathrm{Upper}\right \rVert_{\infty}\right).
     \end{equation}
where $\bar{\gamma}\left(\lVert \Delta\theta^\mathrm{Upper}\rVert_{\infty}\right):=\frac{\sigma_0}{1-c}\lVert \Delta\theta^\mathrm{Upper}\rVert_{\infty}$. Eq.\eqref{reformulated} says that the policy iteration system formulated by \eqref{IPI21d} is input-to-state stable with respect to the sequence $\Delta\theta^\mathrm{Upper}$ representing the RLS estimation error. 
     
Let us now turn our attention to equation \eqref{ISSCA2}. The function $f(x,i)$ is not a $\mathcal{KL}$ function unless $N_{\max}=1$ because of the floor function. However, note that $f(x,i)$ is always non-increasing with respect to $i$, $f(x,i)\rightarrow 0$ when $i\rightarrow\infty$ and $f(x,i)\rightarrow0$ when $x\rightarrow0$. The function $g(x)$ is not a $\mathcal{K}$ function because $j_i^{\mathrm{non}} \in \mathbb{Z}_+$, but satisfies $g(x)\rightarrow0$ when $x\rightarrow0$. 
Therefore, we can qualitatively interpret \eqref{ISSCA2} as an ISS-type stability result. 

The interesting aspect of this informal interpretation is that we can view the sequences associated with estimation error and lack of persistent excitation as disturbance terms acting on the PI and RLS dynamics, respectively. 
 
\end{remark}

Building on this interpretation, we finally provide the following two corollaries.
\begin{corollary}\label{Coro3}
Given the assumptions and notations of Theorem \ref{theorem6}, and an episode $i_{\mathrm{re}}>1$, the following holds 
    \begin{equation}\label{reISSCA3}
    \left \lVert \hat{P}_{i}- P^*\right \rVert_F \leq \beta\left(\left\lVert \hat{P}_{i_{\mathrm{re}}}- P^*\right \rVert_F,i-i_{\mathrm{re}}\right)+\frac{\sigma_{i_{\mathrm{re}}}}{1-c} \Delta\theta^\mathrm{Upper}_{i_{\mathrm{re}}},\quad i\geq i_{\mathrm{re}}.
\end{equation}
\end{corollary}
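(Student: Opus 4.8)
The plan is to exploit the one-step contraction recursion that underlies the proof of Theorem \ref{theorem6}, but to unroll it starting from the episode $i_{\mathrm{re}}$ rather than from the initialization $i=0$. Concretely, the ISS-type bound \eqref{ISSCA} is obtained from a recursion of the form $\lVert \hat{P}_{i+1}-P^*\rVert_F \leq c\lVert \hat{P}_{i}-P^*\rVert_F+\Omega_i$, where $c\in(0,1)$ is the contraction factor inherited from Theorem \ref{theorem4} and $\Omega_i=\sigma_i\Delta\theta^\mathrm{Upper}_i$ is the disturbance injected by the RLS estimation error. Iterating this inequality from index $i_{\mathrm{re}}$ up to $i$ yields
\begin{equation*}
  \lVert \hat{P}_{i}-P^*\rVert_F \leq c^{\,i-i_{\mathrm{re}}}\lVert \hat{P}_{i_{\mathrm{re}}}-P^*\rVert_F + \sum_{k=i_{\mathrm{re}}}^{i-1}c^{\,i-1-k}\,\Omega_k,
\end{equation*}
where the first term is exactly $\beta\left(\lVert \hat{P}_{i_{\mathrm{re}}}-P^*\rVert_F,\,i-i_{\mathrm{re}}\right)$.

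The key additional ingredient, and what makes the disturbance term collapse to the single value at $i_{\mathrm{re}}$, is the monotonicity of the RLS error bound. Since $H_i=H_{i-1}+D_i\succeq H_{i-1}$, we have $H_i^{-1}\preceq H_{i-1}^{-1}$, and because the ordering of positive definite matrices carries over to their ordered eigenvalues, $\lVert H_i^{-1}\rVert_F$ is non-increasing in $i$; hence $\Delta\theta^\mathrm{Upper}_i=\lVert \Delta\theta_0 H_0\rVert_F\,\lVert H_i^{-1}\rVert_F$ is non-increasing. As $\sigma_i=\sum_{k=0}^{9}\rho_k(\Delta\theta^\mathrm{Upper}_i)^k$ is a polynomial with positive coefficients $\rho_k$ evaluated at a non-negative, non-increasing argument, $\sigma_i$ is non-increasing as well, and therefore so is $\Omega_i$. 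Consequently, for every $k\geq i_{\mathrm{re}}$ one has $\Omega_k\leq \Omega_{i_{\mathrm{re}}}=\sigma_{i_{\mathrm{re}}}\Delta\theta^\mathrm{Upper}_{i_{\mathrm{re}}}$.

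Substituting this bound and summing the geometric tail $\sum_{k=i_{\mathrm{re}}}^{i-1}c^{\,i-1-k}\leq \frac{1}{1-c}$ produces the disturbance term $\frac{\sigma_{i_{\mathrm{re}}}}{1-c}\Delta\theta^\mathrm{Upper}_{i_{\mathrm{re}}}$, which together with the contraction term establishes \eqref{reISSCA3}. I expect the only delicate point to be the justification of the monotonicity of the Frobenius-norm error bound: one must invoke that $A\preceq B$ for positive definite $A,B$ implies the ordered-eigenvalue inequalities $\lambda_j(A)\leq\lambda_j(B)$, so that $\lVert H_i^{-1}\rVert_F$ genuinely decreases along the iteration. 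The remainder is a routine re-indexing of the contraction recursion already established in the proof of Theorem \ref{theorem6}, with the supremum $\lVert\Omega\rVert_\infty$ over all episodes replaced, thanks to monotonicity, by its tail value $\Omega_{i_{\mathrm{re}}}$.
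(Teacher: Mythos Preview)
Your proposal is correct and follows exactly the approach implicit in the paper. The paper does not give a separate proof of Corollary \ref{Coro3}; it is left as an immediate consequence of the one-step recursion \eqref{ISS16} established in the proof of Theorem \ref{theorem6}, unrolled from $i_{\mathrm{re}}$ instead of $0$, together with the monotonicity of $\Delta\theta^{\mathrm{Upper}}_i$ (Lemma \ref{theorem21} via Lemma \ref{lemma1}) and hence of $\sigma_i$ and $\Omega_i$, precisely as you describe.
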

Because $\sigma_{i_{\mathrm{re}}}\leq \sigma_{0}$, we can see also analytically that a faster convergence of $\Delta\theta^\mathrm{Upper}_i$ contributes to a faster convergence of $\hat{P}_{i}$.
\begin{corollary}\label{Coro1}
  Under the assumptions and notations of Theorem \ref{theorem6}, if the data sequence $\{d_t\}$ is locally persistent with any $N,M,\alpha$, then $\mathop{lim}\limits_{i\rightarrow\infty}\hat{\theta}_i=\theta $,$\mathop{lim}\limits_{i\rightarrow\infty}\hat{P_i}=P^*$ and $\mathop{lim}\limits_{i\rightarrow\infty}\hat{K_i}=K^*$. 
\end{corollary}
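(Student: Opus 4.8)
The plan is to combine the two ISS-type bounds of Theorem \ref{theorem6} with the RLS convergence already established in Corollary \ref{Coro2} and, crucially, with the restart inequality of Corollary \ref{Coro3}. First I would dispatch the parameter estimate. By Corollary \ref{Coro2}, local persistency of $\{d_t\}$ makes the associated sequence $\{D_i\}$ locally persistent, so $j^\mathrm{non}_i=0$ for all $i$ and hence $g(\lVert j^\mathrm{non}\rVert_\infty)=0$. Since $f(\cdot,i)\to 0$ as $i\to\infty$ (the floor term in its denominator grows without bound), the right-hand side of \eqref{ISSCA2} vanishes, giving $\lim_{i\to\infty}\hat{\theta}_i=\theta$, equivalently $\lim_{i\to\infty}\Delta\theta^\mathrm{Upper}_i=0$, and in particular $\hat{A}_i\to A$, $\hat{B}_i\to B$.

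The delicate point is the convergence of $\hat{P}_i$, and I expect this to be the main obstacle. The bound \eqref{ISSCA} cannot be used verbatim: the disturbance enters through $\lVert\Omega\rVert_\infty=\sup_i\Omega_i$, a supremum over \emph{all} episodes that does not decay even though $\Omega_i=\sigma_i\Delta\theta^\mathrm{Upper}_i\to 0$ (here $\sigma_i$ is bounded via Lemma \ref{theorem21}, so the product indeed vanishes). This is precisely the gap between a plain ISS estimate and the converging-input converging-state property. To close it I would invoke the restart inequality \eqref{reISSCA3}. Given $\varepsilon>0$, first choose $i_\mathrm{re}$ so large that the steady-state term $\frac{\sigma_{i_\mathrm{re}}}{1-c}\Delta\theta^\mathrm{Upper}_{i_\mathrm{re}}<\varepsilon/2$; this is possible because $\sigma_{i_\mathrm{re}}\leq\sigma_0$ while $\Delta\theta^\mathrm{Upper}_{i_\mathrm{re}}\to 0$ from the previous step. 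With $i_\mathrm{re}$ now fixed, the transient term $c^{\,i-i_\mathrm{re}}\lVert\hat{P}_{i_\mathrm{re}}-P^*\rVert_F$ tends to $0$ as $i\to\infty$ since $c\in(0,1)$, so there exists $i^*\geq i_\mathrm{re}$ beyond which it is below $\varepsilon/2$. Combining the two, $\lVert\hat{P}_i-P^*\rVert_F<\varepsilon$ for all $i\geq i^*$, which proves $\lim_{i\to\infty}\hat{P}_i=P^*$.

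Finally, for the gain I would appeal to continuity of the policy-improvement map \eqref{IPI2d}. Because $R\succ 0$ and each $\hat{P}_i\succeq 0$, the matrix $R+\hat{B}_i^\top\hat{P}_i\hat{B}_i\succeq R\succ 0$ is uniformly invertible, so the map $(A,B,P)\mapsto -(R+B^\top P B)^{-1}B^\top P A$ is continuous on the relevant domain (and the limiting $R+B^\top P^*B\succ 0$ stays in it). Substituting the limits $\hat{A}_i\to A$, $\hat{B}_i\to B$, $\hat{P}_i\to P^*$ and noting that a shift by one index is immaterial for limits yields $\lim_{i\to\infty}\hat{K}_i=K^*$, completing the argument. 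The essential insight is therefore the second paragraph: the raw ISS bound \eqref{ISSCA} is not enough, and one must exploit the time-shift structure of Corollary \ref{Coro3} to turn a vanishing disturbance into a vanishing state.
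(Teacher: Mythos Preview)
Your proposal is correct and essentially mirrors the paper's own proof: both first use local persistency to get $\Delta\theta^\mathrm{Upper}_i\to 0$ via Corollary~\ref{Coro2}, then exploit the time-shifted ISS bound (you via Corollary~\ref{Coro3}, the paper via the equivalent inequality~\eqref{reformulated} restarted at a late index) to turn the vanishing disturbance into $\hat{P}_i\to P^*$, and finally deduce $\hat{K}_i\to K^*$ from the convergence of $(\hat{A}_i,\hat{B}_i,\hat{P}_i)$. Your explicit continuity argument for the gain map is slightly more detailed than the paper's one-line conclusion, but the overall structure and the key insight about why the raw ISS bound~\eqref{ISSCA} alone is insufficient are identical.
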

Corollary \ref{Coro1}, which is proven in Appendix \ref{App3}, shows that IPI asymptotically finds the optimal gain $K^*$. We remark that this holds for any value of $\tau_\mathrm{IPI}$. 

When the sequence is not locally persistent, Theorem \ref{theorem6} is still informative and allows a bound on the distance $\left \lVert \hat{P}_{i}- P^*\right \rVert_F$ to be quantified as:
\begin{equation}\label{composition_error}
    \left \lVert \hat{P}_{i}- P^*\right \rVert_F\leq \beta\left(\left \lVert \hat{P}_{0}- P^*\right \rVert_F,i\right)+\bar{\gamma}\left(f\left(\left \lVert \hat{\theta}_0-\theta \right \rVert_F,i\right)+g\left(\left \lVert j^\mathrm{non}\right \rVert_{\infty}\right)\right).
\end{equation}

\section{Direct Data-driven Policy Iteration}\label{sec:DDD}

This Section investigates a possible approach to perform policy iteration using data but without having to explicitly identify a model. 
This approach, referred to here as DPI to mark its direct nature, is a modified version of the algorithm recently presented in \cite{9691800} to address identifiability issues arising in noise-free scenarios. 
The DPI algorithm has the same episodic structure introduced in Section \ref{sec:IDD} with reference to IPI, and therefore we also use here the two indices timesteps $t$ and episode step $i$ with the same meaning. The key difference is that the data collected in each episode, of length $\tau_\mathrm{DPI}$, are used here to learn the parameters required in the policy iteration update involved in the PI steps.

\subsection{Preliminaries of model-free policy iteration}
In this section, we review the necessary background material from \cite{9691800}. At a generic episode $i \geq 1$, the DPI procedure involves the two steps described next.
\subsubsection{Model-free policy evaluation}\label{MFPEsection}
In order to perform the policy evaluation step in Algorithm \ref{Algo1} in a model-free fashion, the key observation is that \eqref{Bellman}, used to explain the basic idea of PI, does not require the system dynamics. Introducing $vecs(P_i)$, where $x_t$ is the system state at timestep $t$ and $P_i$ is the quadratic kernel at episode $i$, this equation can be rewritten as follows: 
\begin{equation}\label{PEbellmanequation2}
  {\underbrace{\left(vecv(x_t)-vecv(x_{t+1})\right)}_{=:\Phi_t} }^\top vecs(P_i)=x_t^\top Qx_t+u_t^\top Ru_t=r(x_t,K_ix_t)
\end{equation}
Estimating $vecs(P_i)$ requires solving the linear equation \eqref{PEbellmanequation2}, which consists of $\frac{n_x(n_x+1)}{2}$ unknown entries defining the upper-triangle part of a symmetric matrix. 
To this end, the system is excited using the control input $u_t=K_ix_t$ for at least $\tau_\mathrm{PE}=\frac{n_x(n_x+1)}{2}$ instances. 
If the matrix $\left(\sum\limits_{k=t}^{t+\tau_\mathrm{PE}-1}\Phi_k\Phi_k^\top\right)$ is invertible,
then $vecs(P_i)$ can be exactly estimated from the data $\{x_t,...,x_{t+\tau_\mathrm{PE}}\}$ as follows:
\begin{equation}\label{Leastsquare1}
\begin{split}
   vecs({P}_i)=\left(\sum\limits_{k=t}^{t+\tau_\mathrm{PE}-1}\Phi_k\Phi_k^\top\right)^{-1}\left( \sum\limits_{k=t}^{t+\tau_\mathrm{PE}-1}\Phi_k r(x_k,K_ix_k) \right).
\end{split}
\end{equation}
\begin{remark}\label{remark2}
   This method is only applicable to a system subject to process noise, as for example in $x_{t+1}=Ax_t+Bu_t+w_t$. In noise-free cases (or cases with a very low amount of process noise), introducing the simple linear feedback results in the evolution $x_{t+1}=(A+BK_i)x_t$ which determines a lack of identifiability of $P_i$. The discussion on decoupling the data dependency and refining this method to resolve this issue will be presented in the following subsection \ref{Prodecure:DDD}.
\end{remark}
\subsubsection{Model-free policy improvement}\label{2PI}
The objective in this step is to find the updated policy $K_{i+1}$ based on the evaluation $P_i$. The system is excited by $u_t=K_ix_t+\eta_t$, where $\eta_t \sim \mathcal{N}(0, W_\eta)$ serves as additional excitation and must be present. The reason for the need of introducing this additional excitation in the policy will be evident soon. The corresponding value-based Bellman equation can be expressed as:
\begin{equation}\label{PIbellmanequation1}
  (x_{t+1}-B\eta_t)^\top P_i(x_{t+1}-B\eta_k)=x_t^\top(P_i-Q-K_i^\top RK_{i})x_t.
\end{equation}
This leads to:
\begin{equation}\label{PIbellmanequation2}
  \begin{split}
2x_t^\top {A^\top P_iB}\eta_t+(2K_ix_t+\eta_t){B^\top P_iB}(u_t-K_ix_t)&=x_t^\top (Q+K_i^\top RK_{i}-P_i)x_t+x_{t+1}^\top P_ix_{t+1},\\
{\underbrace{\left [\begin{array}{c}
       2x_t \otimes \eta_t \\
       vecv(u_t)-vecv(K_ix_t) 
     \end{array}\right ]}_{=:\Gamma_t}}^\top \underbrace{\left [\begin{array}{c}
       vec(B^\top P_iA) \\
       vecs(B^\top P_iB) 
     \end{array}\right ]}_{=:\xi_i}&=\underbrace{x_t^\top (Q+K_i^\top RK_{i}-P_i)x_t+x_{t+1}^\top P_ix_{t+1}}_{=:c_t}.
  \end{split}
\end{equation}
Eq. \eqref{PIbellmanequation2} shows that introducing $\eta_t$ is critical to ensure that $B^\top P_iB, B^\top P_iA$ can be estimated, otherwise $\Gamma_t$ is always zero matrix. 
The vector $vecs(B^\top P_iB)$ consists of $\frac{n_u(n_u+1)}{2}$ unknown entries (because of symmetry) whereas the vector $vec(B^\top P_iA)$ presents $n_xn_u$ unknown. If the matrix $\left(\sum\limits_{k=t}^{t+\tau_\mathrm{PI}-1}\Gamma_t\Gamma_t^\top\right)$ is invertible,
then $\xi_i$ can be obtained as follows:
\begin{equation}\label{Leastsquare2}
  \xi_i=\left( \sum\limits_{k=t}^{t+\tau_\mathrm{PI}-1}\Gamma_k\Gamma_k^\top  \right)^{-1}\left( \sum\limits_{k=t}^{t+\tau_\mathrm{PI}-1}\Gamma_k c_k \right)
\end{equation} 
After estimating $\xi_i$, the gain can be updated with $K_{i+1}=-(R+B^\top P_iB)^{-1}B^\top P_iA$. 
\begin{remark}
 If the number of minimum samples discussed before are fulfilled and the sequences $\{\Phi_t\}$ and $\{\Gamma_t\}$ are globally persistent with window length $N$ defined earlier, $P_i$ and $\xi_i$ are estimated exactly at each step and thus the properties of model-based policy iteration in Theorem \ref{theorem1} apply here as well.  
\end{remark}

\subsection{Algorithm definition}\label{Prodecure:DDD}
As mentioned in Remark \ref{remark2}, an issue in \cite{9691800} preventing its use in a noise-free scenario is the loss of identifiability of $P_i$. To overcome it, one solution is to add a signal $\eta_t$ targeted to decouple the dependence across data points. The issue with using an additional excitation $\eta_t$ for the policy evaluation step is that the evaluated policy is the affine policy $\hat{K}_ix_t+\eta_t$ and not just the linear state-feedback one, thus introducing an error. To solve this problem, we introduce an average system. 

The average system is formulated as follows: when $t$ is odd, $\eta_t$ is randomly selected from a Gaussian distribution, and when $t$ is even, $\eta_t$ is selected as $\eta_t=-\eta_{t-1}$. The state evolutions at two consecutive times when $t$ is odd are given as follows:
\begin{subequations}\label{MFPI1}
   \begin{align}
     x_{t+1} &= Ax_{t}+Bu_t=Ax_{t}+B (\hat{K}_ix_{t}+\eta_t),\label{MFPIa} \\
     x_{t+2}= Ax_{t+1}+Bu_t=&Ax_{t+1}+B (\hat{K}_ix_{t+1}+\eta_{t+1})=Ax_{t+1}+B (\hat{K}_ix_{t+1}-\eta_{t}).\label{MFPIb}
   \end{align}
\end{subequations}
By summing up the state evolutions, we obtain the average system dynamics:
\begin{equation}\label{MFPI2}
   (x_{t+2}+x_{t+1})=A(x_{t+1}+x_{t})+B\hat{K}_i(x_{t+1}+x_{t}).
\end{equation}
Because $t$ is odd, we can define $t:=2k-1$ with $k\in \mathbb{Z}_+$. This allows the use of the data pairs $\{x_{2k}+x_{2k-1}\}$ together with $\{x_{2k+1}+x_{2k}\}$ to evaluate the gain $\hat{K}_i$, as described in Section \ref{MFPEsection}. Because the policy evaluation is executed based on the average system (data pairs), each episode must consist of an even number of data points. The stage cost for the average system is defined as:
\begin{equation}\label{MFPI3}
   R_k:=(x_{2k-1}+x_{2k})^\top Q(x_{2k-1}+x_{2k})+(x_{2k-1}+x_{2k})^\top\hat{K}_{i}^{\top}R\hat{K}_i(x_{2k-1}+x_{2k}).
\end{equation}
This leads to a newly formulated value-based Bellman equation:
\begin{equation}\label{MFPI4}
   (x_{2k}+x_{2k+1})^\top\hat{P}_i(x_{2k}+x_{2k+1})+R_k=(x_{2k-1}+x_{2k})^\top\hat{P}_i(x_{2k-1}+x_{2k}).
\end{equation}
Consequently, we can obtain $\hat{P}_i$ by solving the following equation:
\begin{equation}\label{MFPI5}
  \begin{aligned}
     {\underbrace{\left(vecv(x_{2k-1}+x_{2k})-vecv(x_{2k}+x_{2k+1})\right)}_{=:\phi_k}}^\top vecs(\hat{P}_i)=R_k.
  \end{aligned}
\end{equation}
We collect the data from time index $t$ to $t+\tau_\mathrm{DPI}-1$, constituting an episode with length $\tau_\mathrm{DPI}$. The collected data provide $\frac{\tau_\mathrm{DPI}}{2}$ data pairs. If the matrix $\left( \sum\limits_{k=\frac{t+1}{2}}^{\frac{t+1}{2}+\frac{\tau_\mathrm{DPI}}{2}}\phi_k\phi_k^\top \right)$ is invertible,
then the solution to equation \eqref{MFPI5} is given by:
\begin{equation}\label{MFPI6}
  vecs(\hat{P}_i)=\left( \sum\limits_{k=\frac{t+1}{2}}^{\frac{t+1}{2}+\frac{\tau_\mathrm{DPI}}{2}-1}\phi_k\phi_k^\top \right)^{-1}\left( \sum\limits_{k=\frac{t+1}{2}}^{\frac{t+1}{2}+\frac{\tau_\mathrm{DPI}}{2}-1}\phi_k R_k \right).
\end{equation}
The policy improvement step does not require modifications and thus is taken identically to Section \ref{2PI}. 
The overall procedure of DPI is summarized in Algorithm \ref{Algo3} and the invertibility conditions in \eqref{Leastsquare2} and \eqref{MFPI6} are discussed in the following section.\\
\begin{algorithm}
  \caption{Direct data-driven policy iteration.}\label{Algo3}
  \begin{algorithmic}
      \Require stabilizing policy gain $K_1$ 
      \For{$i=1,...,\infty$} (episode counter)
        \State \For{$j=1,...,\tau_\mathrm{DPI}$} (timestep counter inside episode $i$)
        \State \State \textbf{Excite the system with $u_{\tau_\mathrm{DPI}(i-1)+2j-1}=\hat{K}_ix_{\tau_\mathrm{DPI}(i-1)+2j-1}+\eta_{\tau_\mathrm{DPI}(i-1)+2j-1}$, $\eta_t\sim \mathcal{N}(0,W_\eta)$}
        \State \State \textbf{Collect the data $D_{PI} \gets (x_{\tau_\mathrm{DPI}(i-1)+2j-1},u_{\tau_\mathrm{DPI}(i-1)+2j-1},x_{\tau_\mathrm{DPI}(i-1)+2j})$}
        \State \State \textbf{Excite the system with $u_{\tau_\mathrm{DPI}(i-1)+2j}=\hat{K}_ix_{\tau_\mathrm{DPI}(i-1)+2j}-\eta_{\tau_\mathrm{DPI}(i-1)+2j-1}$  }
        \State \State \textbf{Collect the data $D_{PI} \gets (x_{\tau_\mathrm{DPI}(i-1)+2j},u_{\tau_\mathrm{DPI}(i-1)+2j},x_{\tau_\mathrm{DPI}(i-1)+2j+1})$}
        \State \State \textbf{Collect the data $D_{PE} \gets (x_{\tau_\mathrm{DPI}(i-1)+2j-1}+x_{\tau_\mathrm{DPI}(i-1)+2j},u_{\tau_\mathrm{DPI}(i-1)+2j-1}+u_{\tau_\mathrm{DPI}(i-1)+2j},x_{\tau_\mathrm{DPI}(i-1)+2j}+x_{\tau_\mathrm{DPI}(i-1)+2j+1})$}
        \State \EndFor
        \State \textbf{Policy Evaluation: find $\hat{P}_{i}$} 
        \State \textbf{Calculate $\hat{P}_{i}$ via \eqref{MFPI6} with $D_{PE}$}
        \State \textbf{Policy Improvement: update gain $\hat{K}_{i+1}$}
        \State \textbf{Calculate $B^\top P_iB, B^\top P_iA$ via \eqref{Leastsquare2} with $D_{PI}$}
        \State $\hat{K}_{i+1}=-(R+B^\top P_iB)^{-1}B^\top P_iA$
      \EndFor
  \end{algorithmic}
\end{algorithm}

\begin{remark}{(Notation of $\hat{P}_i$ in DPI)}\label{remark_notation_pi}\\
    In the Section \ref{Prodecure:DDD} and also the subsequent sections, the output of the policy evaluation in DPI is denoted as $\hat{P}_i$. When the inverse operation in the \eqref{MFPI6} exists, $\hat{P}_i$ coincides with $P_i$ from the model-based policy iteration with the same initialization $K_1$, because of the noise-free scenarios, i.e. $\hat{P}_i=P_i$. However, we adopt here the notation $\hat{P}_i$ to emphasize that this is an iterated quantity inside the corresponding learning-based algorithm, as it is the case for the variable $\hat{P}_i$ in IPI. 
\end{remark}
\subsection{Conditions for convergence of DPI}\label{DPIConvergenceAnalysis}
First, we analyze the minimum number of samples required in each episode. The symmetric matrix $\hat{P}_i$ requiring a minimum number of $\frac{n_x(n_x+1)}{2}$ data pairs for its exact computation. As a result, $n_x(n_x+1)$ data points are necessary during the policy evaluation phase. In the policy improvement phase, $B^\top P_iB$ and $B^\top P_iA$ are estimated. We note that $B^\top P_iB$ is also a symmetric matrix. Hence, a total of $\frac{n_u(n_u+1)}{2}+n_un_x$ data points are necessary for the policy improvement. Crucially, the same data points are employed for both the policy evaluation and policy improvement simultaneously. Thus, the minimum number of required samples for each episode, which influences the choice of $\tau_\mathrm{DPI}$, is determined by $\max[n_x(n_x+1),\frac{n_u(n_u+1)}{2}+n_un_x]$, i.e. $\tau_\mathrm{DPI}\geq\max[n_x(n_x+1),\frac{n_u(n_u+1)}{2}+n_un_x]$.

We then turn our attention to the invertibility conditions in \eqref{Leastsquare2} and \eqref{MFPI6}, which are related to the \emph{informativity} of the data sequence used there. When the required number of samples in each episode are provided and in addition the data sequences $\phi$ and $\Gamma$ are locally persistent with $N=M=\frac{\tau_{\mathrm{DPI}}}{2}$ and $N=M=\tau_{\mathrm{DPI}}$,
respectively, then from Definition \ref{def2} it holds 
\begin{equation}\label{DPIcondition}
  \left( \sum\limits_{k=\frac{t+1}{2}}^{\frac{t+1}{2}+\frac{\tau_\mathrm{DPI}}{2}-1}\phi_k\phi_k^\top \right)\succeq0, \left( \sum\limits_{k=t}^{t+\tau_\mathrm{DPI}-1}\Gamma_k\Gamma_k^\top  \right)\succeq0, \forall t=i\tau_{\mathrm{DPI}}+1, \forall i\in \mathbb{Z}_{+}.
\end{equation}
Then the inverse operations in \eqref{Leastsquare2} and \eqref{MFPI6} can be performed in each episode and the estimations of $\hat{P}_i$, $B^\top P_iB$ and $B^\top P_iA$ are exact. In this case, as noted in Remark \ref{remark_notation_pi}, the algorithm produces the same iterates of classic PI with model knowledge. Therefore, invoking Theorem \ref{theorem1}, we can conclude that $\hat{P}_i$ and $\hat{K}_i$ converge to $P^*$ and $K^*$, respectively.

When the length of the episode is insufficient to meet the overall required number of samples per episode, Eq. \eqref{MFPI5} and Eq. \eqref{PIbellmanequation2} become underdetermined. This implies that $\hat{P}_i$, $B^\top P_iB$, and $B^\top P_iA$ cannot be precisely solved, leading to the divergence of $\hat{K}_i$ and $\hat{P}_i$.

\section{Discussion of IPI and DPI}\label{sec:discussion}

In this section, we compare the salient features of the analyses performed to IPI and DPI in Section \ref{sec:IDD} and \ref{sec:DDD}, respectively. A summary of the key aspects is presented in Table \ref{table} at the end of this section. 
\subsection{Selection of episode length}\label{selectiontau}

The selection of $\tau_\mathrm{IPI}$ has two major implications. First, it can have a beneficial effect on the achievement of Assumptions \ref{assumption 1} and \ref{assumption 2}, as already discussed therein. On the other hand, it is a trade-off parameter between the rate at which the system dynamics are learned and how often the policy iteration is performed. The theoretical minimum value for $\tau_\mathrm{IPI}$ is $1$, and its choice is a system-dependent compromise among the aforementioned aspects. The selection of $\tau_\mathrm{DPI}$ must satisfy the minimum number of required samples of DPI, i.e. $\tau_\mathrm{DPI}\geq\mathrm{max}[n_x(n_x+1),\frac{n_u(n_u+1)}{2}+n_un_x]$ and $\tau_\mathrm{DPI}$ must be even. In all other cases, convergence cannot be guaranteed.
In Table \ref{table} we refer to the minimum number of requires samples in each episode of the two methods with \textit{minimum episode length}.

\subsection{Persistency condition}\label{Informativity}
Corollary \ref{Coro1} guarantees convergence of IPI when the data sequence $\{d_t\}$ is locally persistent with any $N,M,\alpha$. 
This can be ensured by choosing an appropriate dithering signal $e_t$, e.g. $e_t \sim \mathcal{N}(0, W_e)$ for appropriate $W_e \succ 0$. Even when the local persistency requirement is not met, recursive least squares can still be applied because as mentioned in Remark \ref{rank1update} the inverse involved in its step always exists and is in fact never explicitly computed. Moreover, Theorem \ref{theorem6} provides for IPI computable bounds on the error between estimates and true values due to a lack of excitation in the sequence. These bounds explicitly point out the contribution to the error due to the non-exciting part of the sequence (see terms depending on the function $g$ in \eqref{composition_error}).\\
In contrast, convergence of the DPI algorithm requires exact computation of $\hat{P}_i$, $B^\top P_iB$ and $B^\top P_iA$ in each episode, which in turn requires that the invertibility conditions in \eqref{Leastsquare2} and \eqref{MFPI6} hold. As discussed in Section \ref{DPIConvergenceAnalysis}, this requires a minimum number of samples (discussed in Section \ref{DPIConvergenceAnalysis}) and a sufficient richness of the data. Precisely, the data sequence $\phi$ (used to determine $\hat{P}_i$) must be locally persistent with $N=M=\frac{\tau_\mathrm{DPI}}{2}$, whereas the data sequence $\Gamma$ (used to determine $B^\top P_iB$ and $B^\top P_iA$) must be locally persistent with $N=M=\tau_\mathrm{DPI}$ (recall Section \ref{DPIConvergenceAnalysis}).  
One specific issue of DPI, as opposed to IPI, is that, because of the nonlinear vector operation $vecs(\cdot)$ affecting $\phi$ and $\Gamma$, local persistency of the data sequences cannot be guaranteed a-priori by design of the dithering signal. 
Because local persistency conditions are necessary to establish convergence, establishing a-priori guarantees for DPI is still an open problem. 
The content of this subsection is summarized in Table \ref{table} with the items \textit{required persistency condition} and \textit{guarantee on persistency condition}.

\subsection{Convergence rate}\label{IPI_DPI_rates}
If the excitation conditions discussed in Section \ref{Informativity} are met in \emph{each} episode, then DPI follows the convergence rate exhibited by policy iteration using the true model. Under this condition, the upper bound on the iterate error of DPI, denoted as $f_\mathrm{DPI}^{\mathrm{Upper}}(i_\mathrm{DPI})$,  can then be derived using Theorem \ref{theorem4} as follows:
\begin{equation}\label{convergenceDPI}
  \begin{split}
    \lVert \hat{P}_{i_\mathrm{DPI}}-P^* \rVert_F &\leq c\lVert \hat{P}_{i_\mathrm{DPI}-1}-P^* \rVert_F \leq...\leq c^{{i_\mathrm{DPI}}}\lVert \hat{P}_{0}-P^* \rVert_F=:f_\mathrm{DPI}^{\mathrm{Upper}}(i_\mathrm{DPI}),
  \end{split}
\end{equation}
where $i_\mathrm{DPI}$ is the episode index of DPI. Defining the episode index of IPI as $i_\mathrm{IPI}$, the upper bound on the iterate error of IPI denoted as $f_\mathrm{IPI}^{\mathrm{Upper}}(i_\mathrm{IPI})$, outlined in Theorem \ref{theorem6}, can be obtained as follows:
\begin{equation}\label{convergenceIPI}
  \lVert \hat{P}_{i_\mathrm{IPI}}- P^*\rVert_F \leq c^{i_\mathrm{IPI}}\lVert \hat{P}_{0}- P^*\rVert_F+\frac{\sigma_{0}}{1-c}\lVert \Delta\theta^\mathrm{Upper}\rVert_{\infty}=:f_\mathrm{IPI}^{\mathrm{Upper}}(i_\mathrm{IPI}).
\end{equation} 
We can then draw a comparison between the upper bounds on the error of DPI and IPI starting from an identical initialization $\hat{P_0}$ at 
the same episode counter $i_\mathrm{DPI}=i_\mathrm{IPI}$:
\begin{equation}\label{Compare1}
  f_\mathrm{DPI}^{\mathrm{Upper}}(i_\mathrm{DPI})=c^{{i_\mathrm{DPI}}}\lVert \hat{P}_{0}-P^* \rVert_F \leq c^{i_\mathrm{IPI}}\lVert \hat{P}_{0}- P^*\rVert_F+\frac{\sigma_{0}}{1-c}\lVert \Delta\theta^\mathrm{Upper}\rVert_{\infty}=f_\mathrm{IPI}^{\mathrm{Upper}}(i_\mathrm{IPI}).
\end{equation}
where the inequality comes from the fact that, as seen in \eqref{convergenceIPI}, $f_\mathrm{IPI}^{\mathrm{Upper}}(i_\mathrm{IPI})$ has an additional non-negative term due to the model mismatch. As a result, at the same episode $f_\mathrm{DPI}^{\mathrm{Upper}}(i_\mathrm{DPI})$ is always smaller than $f_\mathrm{IPI}^{\mathrm{Upper}}(i_\mathrm{IPI})$. 
However, it is essential to observe that the minimum episode length required by IPI and DPI, discussed in the previous section, is markedly different. While for IPI the theoretical minimum number of samples is $1$, for DPI this depends polynomially on the state and input sizes. This means that $\tau_\mathrm{IPI}$ can be much smaller than $\tau_\mathrm{DPI}$, allowing more episodes to take place within the same number of total timesteps. If we compare the two methods given the same fixed time budget, that is
$\tau_\mathrm{DPI}i_\mathrm{DPI}=\tau_\mathrm{IPI}i_\mathrm{IPI}$, then IPI can perform a much larger number of episode.

The explanation for this much better \emph{sample complexity} of IPI, when compared to DPI, has to do with the different learning objectives of the algorithms. 
Precisely, IPI utilizes data only to learn the system dynamics $(A,B)$, while DPI employs data to learn in each episode the value function and the improved policy $(P_i,K_i)$. Whereas the system dynamics $(A,B)$ is time-invariant and thus all the available data contribute towards its estimation, $(P_i,K_i)$ are episode-related quantities and hence the increase of information attained by each data point is limited to the current episode.
From this perspective, identifying a model allows a more efficient use of the data to be made, because effectively it decreases the number of unknowns that should be estimated from data. The information provided by this subsection is summarized and labeled as \textit{convergence rate} in Table \ref{table}.

\subsection{Excitation policy}\label{Selection}

In IPI, the control input is expressed as $u_t=\hat{K}_ix_t+e_t$, while in DPI, the control input is given as $u_t=\hat{K}_ix_t+\eta_t$. However, there is a notable distinction in the selection of the individual terms of this policy. In DPI, $\eta_t$ needs to satisfy specific conditions to form data pairs, and $\hat{K}_i$ must be the policy gain obtained from the policy improvement step of the previous episode, i.e. data should be "on-policy". On the other hand, IPI allows for a more flexible selection of $\hat{K}_i$. 
The use of a model to perform the steps in PI allows the controller design and online data collection to be decoupled, because the collected data are used to learn a model and not the evaluation and improvement of the controller. 
In this work, the choice of controller $u_t=\hat{K}_ix_t+e_t$ used during IPI is only driven by the aim to minimize the cost function \eqref{Cost} also during learning.
However, IPI does not require "on-policy" data, and thus any controller, e.g. any pre-stabilizing gain $\hat{K}$, would be an acceptable solution. From this perspective, the availability of a model provides for this problem greater flexibility in selecting the online controller. In this work, we do not specify the selection of $e_t$ in IPI, while our analysis holds for any selection of $\hat{K}_i$ and $e_t$. In the following simulations, we select $e_t$ as a Gaussian distributed signal to ensure persistency and also $e_t=0$ to verify the performance of IPI with a non-persistent sequence. Understanding how to best design $e_t$ based on online available information such as $\hat{P}_{i}$ or $H_i$ is a natural follow-up to this work. The content of this subsection is summarized in Table \ref{table} with the items \textit{selection of excitation}. 

\subsection{Closed-loop stability of the plant and algorithmic dynamics}
In Section \ref{IPIConvergenceAnalysis}, we analyze the algorithmic closed-loop dynamic formed by RLS and PI. This is also illustrated with the RLS and PI blocks connected by the dashed lines in Figure \ref{fig:stucture}. Whereas Theorem \ref{theorem6}, under certain assumptions, shows a bounded response for this interconnection, we do not discuss the closed-loop stability of the interconnection with the physical plant, depicted by the system dynamics and controller blocks connected with solid lines. For the "off-policy" case where IPI runs with a given pre-stabilizing gain $\hat{K}$, this analysis is trivial and we could conclude closed-loop stability of the physical-algorithmic system because the plant is effectively decoupled from the algorithm. In the "on-policy" case, this is an interesting question that could be studied as part of future work.

In DPI, when the excitation and sample requirements discussed in Section \ref{DPIConvergenceAnalysis} are satisfied, the feedback gain $\hat{K}_i$ obtained from the Algorithm \ref{Algo4} provably stabilizes the true plant (this comes from standard true model-based PI results \cite{1099755}). Conversely, when the requirements on the convergence are not met, there are no closed-loop stability guarantees. 

\begin{table}
  \centering
  \begin{tabular}{|c|c|c|}
    \hline
    ~ & DPI & IPI \\\hline
    Assumptions & ~ & Assumption \ref{assumption 1},\ref{assumption 2} \\\hline
    Minimum episode length & $\tau_{\mathrm{DPI}}\geq\max\{n_x(n_x+1),\frac{n_u(n_u+1)}{2}+n_un_x\}$ & $\tau_{\mathrm{IPI}}\geq1$ \\ \hline
    \makecell[c]{Required persistency condition} & \makecell[c]{Local persistency,\\ $\phi$ with $N=M=\frac{\tau_\mathrm{DPI}}{2}$, \\$\Gamma$ with $N=M=\tau_\mathrm{DPI}$ }& \makecell[c]{Local persistency,\\ $d$ with any $N\geq1$, $M\geq1$}\\ \hline 
    \makecell[c]{Guarantee on \\persistency condition} & Not available a-priori & \makecell[c]{Achievable by selecting \\appropriate $e_t$}\\ \hline     
    Convergence rate & \eqref{convergenceDPI}&\eqref{convergenceIPI}\\ \hline
    Selection of excitation&  $u_t=\hat{K}_ix_t+\eta_t$ (on-policy) & Flexible (on/off-policy)\\ \hline    
  \end{tabular}
  \caption{Comparison between DPI and IPI.}\label{table}
\end{table}
\section{Simulation Results}\label{sec:simulation}
In this section, we present simulation results\footnote{The Matlab codes used to generate these results are accessible from the repository: \href{https://gitlab.com/col-tasas/datadrivenpolicyiteration-ipi-dpi}{https://gitlab.com/col-tasas/datadrivenpolicyiteration-ipi-dpi}} to illustrate some of the properties of IPI and DPI methods discussed in the previous sections.
\subsection{Comparison of DPI and IPI}
We consider the following system which was already used in prior studies \cite{1111,9691800,articlesimulation,AbbasiYadkori2018ModelFreeLQ}: 
\begin{equation}\label{LTIsimulation}
  x_{t+1}=\underbrace{\left[\begin{array}{ccc}
            1.01 & 0.01 & 0 \\
            0.01 & 1.01 & 0.01 \\
            0 & 0.01 & 1.01 
          \end{array}\right]}_A x_t+\underbrace{\left[\begin{array}{ccc}
            1 & 0 & 0 \\
            0 & 1 & 0 \\
            0 & 0 & 1 
          \end{array}\right]}_B u_t.
\end{equation}
The weighting matrices $Q$ and $R$ are set to $0.001I_3$ and $I_3$, respectively. The estimated $A$ and $B$ are initialized as zero matrices $0_3$ and $H_0$ for RLS is set to $0.01 I_6$. The initial stabilizing policy gain $K_1$ is set to $diag\{-1.5,-1.0,-0.5\}$ which stabilizes (\ref{LTIsimulation}) and also the initial estimate $\hat{A}_0$ and $\hat{B}_0$. 
The choice of dithering signals in the two algorithms is done as follows. In IPI 
$$e_t=\mathcal{N}(0, I_3),$$ whereas in DPI 
$$\eta_t =\left\{\begin{aligned}
   &\mathcal{N}(0, I_3), & &\mathrm{t~is~odd},\\
    &-\eta_{t-1}, & &\mathrm{t~is~even}.
\end{aligned}\right. $$

The simulation results are depicted in Figure \ref{fig:simulation1}.\\

\begin{figure}[h]
    \centering
    \includegraphics[width=0.6\textwidth]{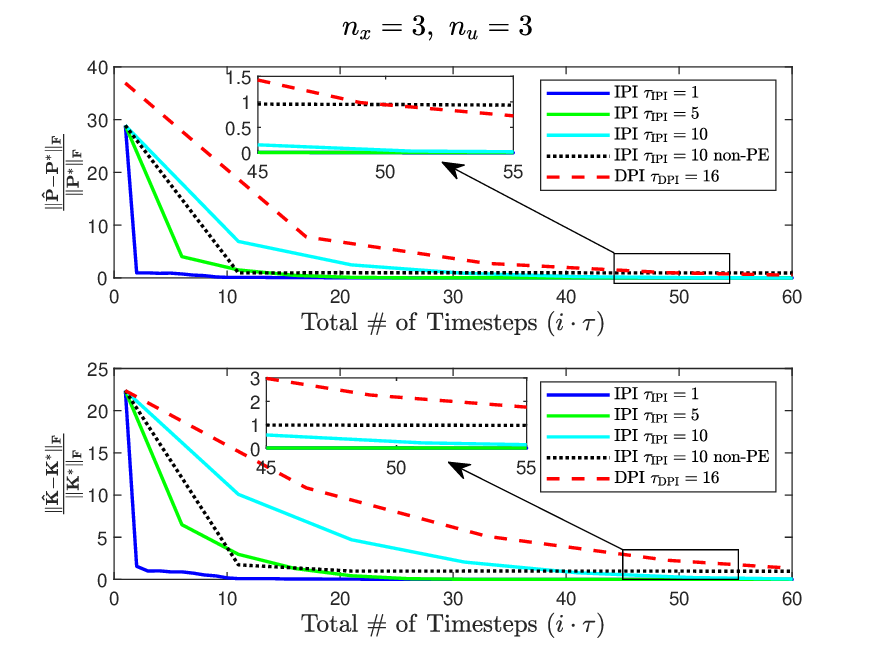}
    \caption{Comparison of DPI and IPI with different episode lengths $\tau_\mathrm{IPI}$ and minimum length $\tau_\mathrm{DPI}$ (the symbol $\tau$ in the $x$-axis label denotes the length of the episode in the respective approach).}
    \label{fig:simulation1}
\end{figure}

Figure \ref{fig:simulation1} illustrates the convergence of the quadratic kernel of the value function $\hat{P}_i$ and feedback gain $\hat{K}_i$ for different choices of episode length $\tau_\mathrm{IPI}$ in IPI. For this problem, the minimum number of samples per episode required by DPI is $15$, thus we always use $\tau_\mathrm{DPI}=16$, which includes redundant data to meet the requirement for the formulation of data pairs. For any $\tau_\mathrm{DPI}<16$, both $\hat{P}_i$ and $\hat{K}_i$ diverge. It can be observed that IPI has a faster convergence rate to the optimum when compared to DPI, in accordance with the discussion in Section \ref{IPI_DPI_rates}. The black dotted line shows the case when in IPI no dithering signal is used, i.e. $e_t=0$ and thus the data sequence is not locally persistent. Consequently, the estimation error of $\hat{\theta}_i$ is never zero, and $\hat{K}_i$ stays within a non-zero error from the optimal $K^*$, as expected from the theoretical analysis in Section \ref{IPIPI}. Overall the numerical experiments are in agreement with the analytical results discussed in the previous two Sections.\\

\subsection{Comparison with data-driven policy gradient}
We compare the IPI and DPI methods with a recently proposed method \cite{10383604} combining RLS with the model-based policy gradient (PG) method. We denote this method as RLS+PG. The system dynamics $(A,B)$ and the weighting matrices $Q,R$ are set according to the numerical experiment in \cite{10383604} as follows: 
\begin{align*}
    \begin{array}{cccc}
      A=\left[\begin{array}{ccc}
         -0.53  & 0.42  & -0.44\\
         0.42  &  -0.56 & -0.65\\
         -0.44  & -0.65 & 0.35
      \end{array}  \right], &  B=\left[\begin{array}{ccc}
         0.43  & -0.82  \\
         0.53  &  -0.78 \\
         0.26  & -0.40 
      \end{array}  \right], & Q=\left[\begin{array}{ccc}
         6.12  & 1.72  & 0.53\\
         1.72  &  6.86& 1.72\\
         0.53  & 1.72 & 5.73
      \end{array}  \right],  & R=\left[\begin{array}{cc}
         1.15  & -0.23  \\
         -0.23  &  3.62
      \end{array}  \right]
    \end{array}.
\end{align*}
The initial estimates $\hat{A}_0$,$\hat{B}_0$ and the matrix $H_0$, required in both IPI and RLS+PG, are set as zero matrices and $H_0=0.001 I_5$, respectively. The initial $\hat{K}_0$ is set to the optimal feedback gain for the LQR problem with $(A,B,100Q,R)$.
In RLS+PG, the system is excited as in IPI, i.e. with a feedback term plus dithering signal $e_t$ to ensure sufficiently informative data. 
The choice of dithering signals in the three algorithms is done as follows. In IPI and RLS+PG, 
$$e_t=\mathcal{N}(0, 3I_2), $$ whereas in DPI 
$$\eta_t =\left\{\begin{aligned}
   &\mathcal{N}(0, 3I_2), & &\mathrm{t~is~odd},\\
    &-\eta_{t-1}, & &\mathrm{t~is~even}.
\end{aligned}\right. $$
In the RLS part of the algorithm in \cite{10383604} the parameter $\lambda$ is set to $1$ , which recovers the classic RLS approach also used in this work. 
The effect of the step size $\gamma$ used in the PG part is investigated by simulating the values $0.01$ and $0.001$. This parameter determines the size of the gradient step and thus its value is critical to trade off convergence rate and robustness of the scheme. It is mentioned in \cite{10383604} that the selection of $\gamma$ is done empirically based on the initialization parameters. An analytical investigation of the selection of $\gamma$ is performed in \cite{pmlr-v80-fazel18a} for model-based PG. Determining an upper bound on the step size $\gamma$ which guarantees 
convergence of data-driven PG is an open problem.\\

\begin{figure}[h]
    \centering
    \includegraphics[width=0.6\textwidth]{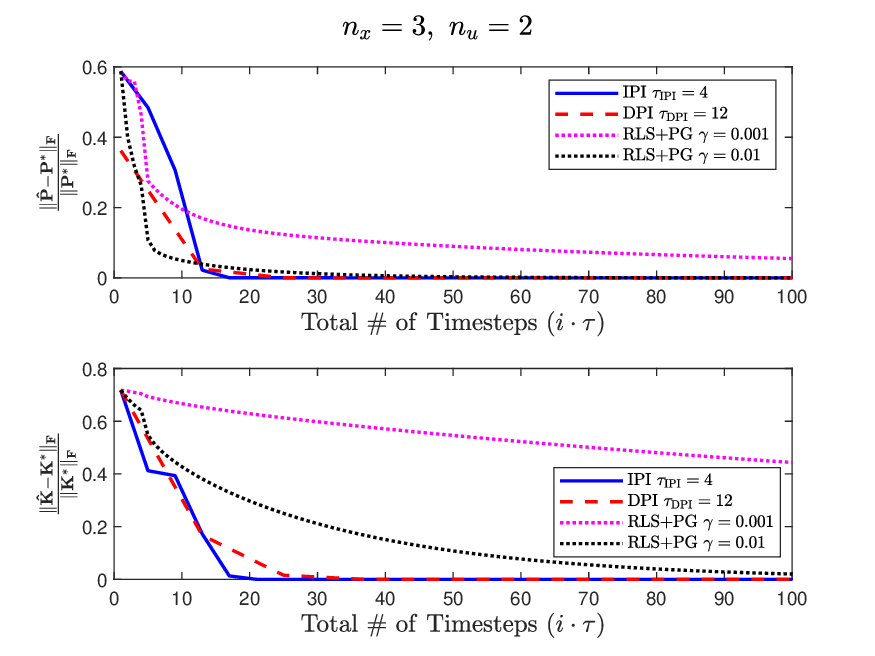}
    \caption{Comparison of IPI, DPI and RLS+PG}
    \label{fig:simulation2}
\end{figure}

Figure \ref{fig:simulation2} shows the convergence of the quadratic kernel $\hat{P}_i$ and gain $\hat{K}_i$ with DPI, IPI and the RLS+PG methods. It can be observed that both DPI and IPI exhibit faster convergence of $\hat{P}_i$ and $\hat{K}_i$ compared to the policy gradient methods. This observation is justifiable based on the fact that PI has an interpretation as Gauss-Newton method  \cite{9254115}, and thus it is expected to have better convergence rates. An analytic comparative study between these approaches in the data-based setting is an interesting future direction of research.
\section{Conclusion}\label{sec:conclusion}
In this work we studied indirect and direct approaches to data-driven LQR policy iteration with the goal to understand what are the main implications of having or not a model. We viewed the indirect approach as a feedback interconnection between an identification scheme and the policy iteration algorithm, both seen as instances of algorithmic dynamical systems responsible for the concurrent learning and control. 
We analysed this nonlinear closed-loop system without making any assumption on the level of excitation of the data and notably captured non-persistent data sequences as dynamic perturbations to the algorithm. 
For the direct approach, we extended a recently proposed method from the literature and precisely assessed its data and excitation requirements. 
In this problem setting, the availability of an intermediate model determines a lower number of required samples, the possibility to employ more flexible policies, and an increased robustness to lack of excitation in the data. 
The goal of the work was to isolate and understand the fundamental effects associated with the availability of a model, therefore no noise is considered in the analysis. This is a limitation of the current study and a comparison between indirect and direct approaches in the presence of process and measurement noise should be considered in future work. Moreover, an adaptive design of the excitation signal $e_t$ in IPI, which is guided by the current system theoretic analysis, is another important future direction.
\section*{Acknowledgements}
Bowen Song acknowledges the support of the International Max Planck Research School for Intelligent Systems (IMPRS-IS). Andrea Iannelli acknowledges the German Research Foundation (DFG) for support of this work under Germany’s Excellence Strategy - EXC 2075 – 390740016.

\bibliographystyle{unsrt}  
\bibliography{references}  

\begin{thebibliography}{10}

\bibitem{HOU20133}
Zhong-Sheng Hou and Zhuo Wang.
\newblock From model-based control to data-driven control: Survey,
  classification and perspective.
\newblock {\em Information Sciences}, 235:3--35, 2013.
\newblock Data-based Control, Decision, Scheduling and Fault Diagnostics.

\bibitem{khaki2023introduction}
A.~Khaki-Sedigh.
\newblock {\em An Introduction to Data-Driven Control Systems}.
\newblock Wiley, 2023.

\bibitem{10156081}
Damoon Soudbakhsh, Anuradha~M. Annaswamy, Yan Wang, Steven~L. Brunton, Joseph
  Gaudio, Heather Hussain, Draguna Vrabie, Jan Drgona, and Dimitar Filev.
\newblock Data-driven control: Theory and applications.
\newblock In {\em 2023 American Control Conference (ACC)}, pages 1922--1939.
  2023 American Control Conference (ACC), 2023.

\bibitem{Doerfler_CSM_23_Pt2}
Florian Dörfler.
\newblock Data-driven control: Part two of two: Hot take: Why not go with
  models?
\newblock {\em IEEE Control Systems Magazine}, 43(6):27--31, 2023.

\bibitem{ljung1999system}
L.~Ljung.
\newblock {\em System Identification: Theory for the User}.
\newblock Prentice Hall information and system sciences series. Prentice Hall
  PTR, 1999.

\bibitem{AC}
K.J. {\AA}str{\"o}m and B.~Wittenmark.
\newblock {\em Adaptive Control}.
\newblock Dover Books on Electrical Engineering. Dover Publications, 2008.

\bibitem{1111}
Benjamin Recht.
\newblock A tour of reinforcement learning: The view from continuous control.
\newblock {\em Annual Review of Control, Robotics, and Autonomous Systems},
  2(1):253--279, 2019.

\bibitem{Annaswamy_23_ARCRAS}
Anuradha~M. Annaswamy.
\newblock Adaptive control and intersections with reinforcement learning.
\newblock {\em Annual Review of Control, Robotics, and Autonomous Systems},
  6(1):65--93, 2023.

\bibitem{9705109}
Florian Dörfler, Jeremy Coulson, and Ivan Markovsky.
\newblock Bridging direct and indirect data-driven control formulations via
  regularizations and relaxations.
\newblock {\em IEEE Transactions on Automatic Control}, 68(2):883--897, 2023.

\bibitem{pmlr-v99-tu19a}
Stephen Tu and Benjamin Recht.
\newblock The gap between model-based and model-free methods on the linear
  quadratic regulator: An asymptotic viewpoint.
\newblock In Alina Beygelzimer and Daniel Hsu, editors, {\em Proceedings of the
  Thirty-Second Conference on Learning Theory}, volume~99, pages 3036--3083.
  PMLR, 25--28 Jun 2019.

\bibitem{Bhaya2006_SIAM_book_AlgoDyn}
A.~Bhaya and E.~Kaszkurewicz.
\newblock {\em Control Perspectives on Numerical Algorithms and Matrix
  Problems}.
\newblock Advances in Design and Control. Society for Industrial and Applied
  Mathematics, 2006.

\bibitem{dorfler2024systems}
Florian Dörfler, Zhiyu He, Giuseppe Belgioioso, Saverio Bolognani, John
  Lygeros, and Michael Muehlebach.
\newblock Towards a systems theory of algorithms.
\newblock arXiv preprint arXiv:2401.14029, 2024.

\bibitem{Scherer_SJCO21-Synth}
Carsten Scherer and Christian Ebenbauer.
\newblock Convex synthesis of accelerated gradient algorithms.
\newblock {\em SIAM Journal on Control and Optimization}, 59(6):4615--4645,
  2021.

\bibitem{Lessard_CSM_22_Diss}
Laurent Lessard.
\newblock The analysis of optimization algorithms: A dissipativity approach.
\newblock {\em IEEE Control Systems Magazine}, 42(3):58--72, 2022.

\bibitem{9446633}
Dominic Liao-McPherson, Terrence Skibik, Jordan Leung, Ilya Kolmanovsky, and
  Marco~M. Nicotra.
\newblock An analysis of closed-loop stability for linear model predictive
  control based on time-distributed optimization.
\newblock {\em IEEE Transactions on Automatic Control}, 67(5):2618--2625, 2022.

\bibitem{ZANELLI2021109901}
Andrea Zanelli, Quoc Tran-Dinh, and Moritz Diehl.
\newblock A lyapunov function for the combined system-optimizer dynamics in
  inexact model predictive control.
\newblock {\em Automatica}, 134:109901, 2021.

\bibitem{lewis2012optimal}
Frank~L Lewis, Draguna Vrabie, and Vassilis~L Syrmos.
\newblock {\em Optimal control}.
\newblock John Wiley \& Sons, 2012.

\bibitem{Bertsekas2022_Abstr_DP}
D.~Bertsekas.
\newblock {\em Abstract Dynamic Programming: 3rd Edition}.
\newblock Athena Scientific., 2022.

\bibitem{park2020structured}
Youngsuk Park, Ryan~A. Rossi, Zheng Wen, Gang Wu, and Handong Zhao.
\newblock Structured policy iteration for linear quadratic regulator.
\newblock In {\em Proceedings of the 37th International Conference on Machine
  Learning}, ICML'20. Proceedings of the 37th International Conference on
  Machine Learning, JMLR.org, 2020.

\bibitem{9794431}
Donghwan Lee.
\newblock Convergence of dynamic programming on the semidefinite cone for
  discrete-time infinite-horizon lqr.
\newblock {\em IEEE Transactions on Automatic Control}, 67(10):5661--5668,
  2022.

\bibitem{9444823}
Bo~Pang, Tao Bian, and Zhong-Ping Jiang.
\newblock Robust policy iteration for continuous-time linear quadratic
  regulation.
\newblock {\em IEEE Transactions on Automatic Control}, 67(1):504--511, 2022.

\bibitem{BERTSEKAS_Newton_method_22}
Dimitri Bertsekas.
\newblock Newton’s method for reinforcement learning and model predictive
  control.
\newblock {\em Results in Control and Optimization}, 7:100121, 2022.

\bibitem{doi:10.1080/00207179.2015.1079737}
Tae~Yoon Chun, Jae~Young Lee, Jin~Bae Park, and Yoon~Ho Choi.
\newblock Stability and monotone convergence of generalised policy iteration
  for discrete-time linear quadratic regulations.
\newblock {\em International Journal of Control}, 89(3):437--450, 2016.

\bibitem{Mohammadi_TAC_Conv_mF_LQR}
Hesameddin Mohammadi, Armin Zare, Mahdi Soltanolkotabi, and Mihailo~R.
  Jovanović.
\newblock Convergence and sample complexity of gradient methods for the
  model-free linear–quadratic regulator problem.
\newblock {\em IEEE Transactions on Automatic Control}, 67(5):2435--2450, 2022.

\bibitem{9691800}
Farnaz~Adib Yaghmaie, Fredrik Gustafsson, and Lennart Ljung.
\newblock Linear quadratic control using model-free reinforcement learning.
\newblock {\em IEEE Transactions on Automatic Control}, 68(2):737--752, Feb
  2023.

\bibitem{articlesimulation}
Sarah Dean, Horia Mania, Nikolai Matni, Benjamin Recht, and Stephen Tu.
\newblock On the sample complexity of the linear quadratic regulator.
\newblock {\em Foundations of Computational Mathematics}, 20, 10 2017.

\bibitem{Ferizbegovic_LCSS_2020}
Mina Ferizbegovic, Jack Umenberger, Håkan Hjalmarsson, and Thomas~B. Schön.
\newblock Learning robust lq-controllers using application oriented
  exploration.
\newblock {\em IEEE Control Systems Letters}, 4(1):19--24, 2020.

\bibitem{10383604}
Lorenzo Sforni, Guido Carnevale, Ivano Notarnicola, and Giuseppe Notarstefano.
\newblock On-policy data-driven linear quadratic regulator via combined policy
  iteration and recursive least squares.
\newblock In {\em 2023 62nd IEEE Conference on Decision and Control (CDC)},
  pages 5047--5052. 2023 62nd IEEE Conference on Decision and Control (CDC),
  2023.

\bibitem{9511623}
Yongliang Yang, Bahare Kiumarsi, Hamidreza Modares, and Chengzhong Xu.
\newblock Model-free $\lambda$-policy iteration for discrete-time linear
  quadratic regulation.
\newblock {\em IEEE Transactions on Neural Networks and Learning Systems},
  34(2):635--649, 2023.

\bibitem{FERRANTE2013471}
Augusto Ferrante and Lorenzo Ntogramatzidis.
\newblock The generalised discrete algebraic riccati equation in
  linear-quadratic optimal control.
\newblock {\em Automatica}, 49(2):471--478, 2013.

\bibitem{1099755}
G.~Hewer.
\newblock An iterative technique for the computation of the steady state gains
  for the discrete optimal regulator.
\newblock {\em IEEE Transactions on Automatic Control}, 16(4):382--384, 1971.

\bibitem{HEWER1973226}
G.A Hewer.
\newblock Analysis of a discrete matrix riccati equation of linear control and
  kalman filtering.
\newblock {\em Journal of Mathematical Analysis and Applications},
  42(1):226--236, 1973.

\bibitem{Petersen2008}
K.~B. Petersen and M.~S. Pedersen.
\newblock The matrix cookbook.
\newblock Technical University of Denmark, October 2008.
\newblock Version 20081110.

\bibitem{bruce2020convergence}
Adam~L. Bruce, Ankit Goel, and Dennis~S. Bernstein.
\newblock Convergence and consistency of recursive least squares with
  variable-rate forgetting.
\newblock {\em Automatica}, 119:109052, 2020.

\bibitem{Sontag2008}
Eduardo~D. Sontag.
\newblock {\em Input to State Stability: Basic Concepts and Results}, pages
  163--220.
\newblock Springer Berlin Heidelberg, Berlin, Heidelberg, 2008.

\bibitem{JIANG2001857}
Zhong-Ping Jiang and Yuan Wang.
\newblock Input-to-state stability for discrete-time nonlinear systems.
\newblock {\em Automatica}, 37(6):857--869, 2001.

\bibitem{AbbasiYadkori2018ModelFreeLQ}
Yasin Abbasi-Yadkori, Nevena Lazic, and Csaba Szepesvari.
\newblock Model-free linear quadratic control via reduction to expert
  prediction.
\newblock In {\em International Conference on Artificial Intelligence and
  Statistics}. International Conference on Artificial Intelligence and
  Statistics, 2018.

\bibitem{pmlr-v80-fazel18a}
Maryam Fazel, Rong Ge, Sham Kakade, and Mehran Mesbahi.
\newblock Global convergence of policy gradient methods for the linear
  quadratic regulator.
\newblock In Jennifer Dy and Andreas Krause, editors, {\em Proceedings of the
  35th International Conference on Machine Learning}, volume~80 of {\em
  Proceedings of Machine Learning Research}, pages 1467--1476. PMLR, 10--15 Jul
  2018.

\bibitem{9254115}
Benjamin Gravell, Peyman~Mohajerin Esfahani, and Tyler Summers.
\newblock Learning optimal controllers for linear systems with multiplicative
  noise via policy gradient.
\newblock {\em IEEE Transactions on Automatic Control}, 66(11):5283--5298,
  2021.

\bibitem{articlematrix}
Chi-Kwong Li and Roy Mathias.
\newblock The lidskii-mirsky-wielandt theorem -- additive and multiplicative
  versions.
\newblock {\em Numerische Mathematik}, 81, 06 2001.

\bibitem{meyer2023matrix}
C.D. Meyer.
\newblock {\em Matrix Analysis and Applied Linear Algebra}.
\newblock Society for Industrial and Applied Mathematics, 2023.

\end{thebibliography}

\newpage
\appendix
\section{Appendix 1}
\subsection{Lemma \ref{theorem21}} \label{App11}
\begin{lemma}\label{theorem21}
    For any sequence $\{d_t\}$, the upper bound of the estimation error $\Delta\theta^\mathrm{Upper}_t$ of the RLS is non-increasing with respect to $t$, i.e.
    \begin{equation}
        \Delta\theta_t^\mathrm{Upper} \geq \Delta\theta_{t+1}^\mathrm{Upper}, \forall t \in \mathbb{Z}_{+}
    \end{equation}
\end{lemma}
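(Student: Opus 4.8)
The plan is to reduce the claim to a monotonicity property of $\lVert H_t^{-1}\rVert_F$. Recall from \eqref{convergence6} that $\Delta\theta_t^\mathrm{Upper}=\lVert \Delta\theta_0 H_0\rVert_F\,\lVert H_t^{-1}\rVert_F$, where the prefactor $\lVert \Delta\theta_0 H_0\rVert_F$ is a constant independent of $t$. Hence it suffices to prove that $\lVert H_{t+1}^{-1}\rVert_F\leq \lVert H_t^{-1}\rVert_F$ for all $t$, after which the result follows by multiplying through by the constant prefactor.

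First I would exploit the recursive update \eqref{ConvergenceSystemid9}, namely $H_{t+1}=H_t+d_{t+1}d_{t+1}^\top$. Since $d_{t+1}d_{t+1}^\top\succeq 0$, and $H_0\succ 0$ guarantees $H_t\succ 0$ for all $t$, this immediately yields the Loewner ordering $0\prec H_t\preceq H_{t+1}$. Applying antitonicity of the matrix inverse on the positive definite cone then gives $0\prec H_{t+1}^{-1}\preceq H_t^{-1}$.

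The key remaining step is to pass from this Loewner ordering to an inequality of Frobenius norms. I would establish the general fact that for symmetric matrices with $0\preceq B\preceq A$ one has $\lVert B\rVert_F\leq\lVert A\rVert_F$. Writing $C:=A-B\succeq 0$ and expanding the squared norms via the trace,
\begin{equation*}
\lVert A\rVert_F^2-\lVert B\rVert_F^2=\mathrm{tr}(A^2)-\mathrm{tr}(B^2)=2\,\mathrm{tr}(BC)+\mathrm{tr}(C^2).
\end{equation*}
Both terms on the right are non-negative: $\mathrm{tr}(C^2)\geq 0$ because $C$ is symmetric, and $\mathrm{tr}(BC)=\mathrm{tr}(B^{1/2}CB^{1/2})\geq 0$ since $B^{1/2}CB^{1/2}\succeq 0$ whenever $B,C\succeq 0$. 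Specializing to $A=H_t^{-1}$ and $B=H_{t+1}^{-1}$ delivers $\lVert H_{t+1}^{-1}\rVert_F\leq\lVert H_t^{-1}\rVert_F$, which is exactly what is needed.

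I expect the only genuine subtlety to be this last step: Frobenius-norm monotonicity under the Loewner order \emph{fails} for general indefinite or non-symmetric matrices, so the argument must crucially use that both $H_t^{-1}$ and $H_{t+1}^{-1}$ are positive definite. The fact carrying the real weight is the non-negativity of $\mathrm{tr}(BC)$ for a pair of positive semidefinite matrices; the inverse-antitonicity step and the trace expansion are standard and pose no difficulty.
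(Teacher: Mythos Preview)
Your proof is correct and follows essentially the same structure as the paper's: both reduce the claim to the monotonicity $H_t\preceq H_{t+1}$ from the recursive update and then to a Frobenius-norm inequality for inverses of Loewner-ordered positive definite matrices. The only cosmetic difference is in how that auxiliary fact is established---the paper argues via eigenvalue monotonicity $\lambda_i(A)\geq\lambda_i(B)\Rightarrow\lambda_i(A^{-1})\leq\lambda_i(B^{-1})$ and sums squared eigenvalues, whereas you first invoke inverse antitonicity and then use the trace expansion $\mathrm{tr}(A^2)-\mathrm{tr}(B^2)=2\,\mathrm{tr}(BC)+\mathrm{tr}(C^2)$; both are standard and equally valid.
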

From \eqref{ConvergenceSystemid9}, we know that:
\begin{equation}\label{convergenceSystemid6}
   H_0 \preceq...\preceq H_t \preceq ...\preceq H_{\infty}.
\end{equation}
We then use the following result, which is proven in Appendix \ref{App:prooflemma1}
\begin{lemma}\label{lemma1}
  Given two positive definite matrix $A,B$, if $A\succeq B$, then $\lVert A^{-1} \rVert_F\leq\lVert B^{-1} \rVert_F$. 
\end{lemma}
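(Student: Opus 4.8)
The plan is to reduce the Frobenius-norm inequality to a trace inequality between the inverses, exploiting the operator anti-monotonicity of matrix inversion together with the fact that the trace of a product of two positive semidefinite matrices is nonnegative.

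First I would establish the standard anti-monotonicity property: since $A \succeq B \succ 0$, it follows that $B^{-1} \succeq A^{-1} \succ 0$. This can be seen by congruence, since $A \succeq B$ is equivalent to $B^{-1/2} A B^{-1/2} \succeq I$, whose inverse satisfies $B^{1/2} A^{-1} B^{1/2} \preceq I$, and a further congruence by $B^{-1/2}$ then yields $A^{-1} \preceq B^{-1}$. Next I would set $X := A^{-1}$ and $Y := B^{-1}$, so that $0 \prec X \preceq Y$, and use the identity $\lVert M \rVert_F^2 = \mathrm{tr}(M^2)$ valid for symmetric $M$. The claim $\lVert A^{-1} \rVert_F \leq \lVert B^{-1} \rVert_F$ then becomes $\mathrm{tr}(X^2) \leq \mathrm{tr}(Y^2)$.

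To close the argument I would use the cyclic property of the trace to rewrite the difference as $\mathrm{tr}(Y^2) - \mathrm{tr}(X^2) = \mathrm{tr}\big((Y-X)(Y+X)\big)$. Since $Y - X \succeq 0$ and $Y + X \succ 0$ are both positive semidefinite, the trace of their product is nonnegative, because $\mathrm{tr}(PQ) = \mathrm{tr}(P^{1/2} Q P^{1/2}) \geq 0$ whenever $P, Q \succeq 0$. This gives $\mathrm{tr}(Y^2) \geq \mathrm{tr}(X^2)$ and hence the claim. An equally clean alternative for this last step is to invoke Weyl monotonicity: $X \preceq Y$ implies $\lambda_i(X) \leq \lambda_i(Y)$ for all $i$, and squaring these nonnegative quantities and summing yields $\sum_i \lambda_i(X)^2 \leq \sum_i \lambda_i(Y)^2$.

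The only point requiring genuine care — and arguably the crux — is the anti-monotonicity step, since the naive expectation that $A \succeq B$ implies $A^{-1} \succeq B^{-1}$ is \emph{false}; the Loewner ordering reverses under inversion, and every subsequent inequality depends on getting this direction right. The remaining steps are routine manipulations of the trace and the Frobenius norm.
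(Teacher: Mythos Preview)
Your proposal is correct. The paper's proof is essentially your alternative route via Weyl monotonicity: it observes that $A\succeq B\succ 0$ gives $\lambda_i(A)\ge\lambda_i(B)>0$, hence $\lambda_i(B^{-1})\ge\lambda_i(A^{-1})>0$, and then squares and sums these eigenvalues to compare the Frobenius norms. Your primary route through the trace identity $\mathrm{tr}(Y^2)-\mathrm{tr}(X^2)=\mathrm{tr}\big((Y-X)(Y+X)\big)$ is a slight variation that avoids invoking eigenvalue monotonicity directly, replacing it with the operator anti-monotonicity $A^{-1}\preceq B^{-1}$ (which you prove carefully by congruence) and the positivity of $\mathrm{tr}(PQ)$ for $P,Q\succeq 0$; both approaches are short and standard, and the difference is largely stylistic.
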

Using \eqref{convergenceSystemid6} and Lemma \ref{lemma1}, the following holds for every $t\in \mathbb{Z}_{+}$: 
\begin{equation}\label{Proof1}
    \Delta\theta_t^\mathrm{Upper}=\lVert \Delta\theta_0H_0\rVert_F\lVert H_{t}^{-1} \rVert_F\geq \lVert \Delta\theta_0H_0\rVert_F\lVert H_{t+1}^{-1} \rVert_F=\Delta\theta_{t+1}^\mathrm{Upper}.
\end{equation}
Then we know that $\Delta\theta_t^\mathrm{Upper}$ is non-increasing with respect to time index $t$ irrespective of any property of $H_t$, i.e.$\{d_t\}$. Lemma \ref{theorem21} is now proven.
\subsection{Proof of Theorem \ref{theorem2}}\label{App1}
Now we focus on the convergence of the estimation error introduced by recursive least squares under the assumption of a locally persistent sequence defined by $M,N,\alpha$. When the data sequence $\{d_t\}$ is locally persistent, \eqref{ConvergenceSystemid8} holds. Then we can get:
\begin{equation}\label{Convergence7}
   \begin{split}
       \lVert \Delta\theta_t \rVert_F&\leq\lVert \Delta\theta_0H_0\rVert_F\lVert H_{t}^{-1} \rVert_F=\Delta\theta_t^\mathrm{Upper}\\
        &\leq \lVert \Delta\theta_0H_0\rVert_F\left\lVert \left(H_{0} +\sum\limits_{j=0}^{\left\lfloor\frac{t}{M\left\lceil\frac{N}{M}\right\rceil}\right\rfloor}\sum\limits_{k=0}^{N-1}d_{k+j(M\left\lceil\frac{N}{M}\right\rceil)}d_{k+j(M\left\lceil\frac{N}{M}\right\rceil)}^\top\right)^{-1}\right\rVert_F\\ 
        &\leq \lVert \Delta\theta_0H_0\rVert_F\left\lVert \left(\left(\left\lfloor\frac{t}{M\left\lceil\frac{N}{M}\right\rceil}\right\rfloor+1\right)\alpha I_{n_x+n_u}\right)^{-1}\right\rVert_F\\   
        &\leq \frac{\sqrt{n_x+n_u}\lVert \Delta\theta_0H_0\rVert_F}{\sqrt{\left(\left\lfloor\frac{t}{M\left\lceil\frac{N}{M}\right\rceil}\right\rfloor+1\right)\alpha}}.    
   \end{split} 
\end{equation}
The second inequality comes from \eqref{ConvergenceSystemid9} and the positive semidefinitiness of $d_t d_t^\top$. The third inequalities comes from \eqref{ConvergenceSystemid8}.\\
From \eqref{Convergence7} we can conclude that the convergence of RLS under the assumption of a locally persistent sequence $\{d_t\}$ is $\lVert \Delta\theta_t \rVert_F \sim \mathcal{O}(\frac{1}{\sqrt{\left\lfloor\frac{t}{M\left\lceil\frac{N}{M}\right\rceil}\right\rfloor}})$
\subsection{Proof of Lemma \ref{lemma1}}\label{App:prooflemma1}
For every positive definite matrix $A$, $\lVert A \rVert_F^2=\mathop{trace}(A^2)$\cite[Eq.541]{Petersen2008}. Because $A\succeq B\succ\mathbf{0}$, then $\lambda_i(A)\geq\lambda_i(B)>0$, $\lambda_i(B^{-1})\geq\lambda_i(A^{-1})>0$. We get:
\begin{equation}\label{Prooflemma1}
  \lVert A^{-1} \rVert_F^2=\mathop{trace}\left((A^{-1})^2\right)=\sum_{i}\limits\lambda_i(A^{-1})^2\leq\sum_{i}\limits\lambda_i(B^{-1})^2=\mathop{trace}\left((B^{-1})^2\right)=\lVert B^{-1} \rVert_F^2.
\end{equation} 
That is, it holds $\lVert A^{-1} \rVert_F\leq\lVert B^{-1} \rVert_F$.\\

\section{Appendix 2}
\subsection{Proof of Theorem \ref{theorem3}}\label{ISSRLS}
From \eqref{Proof1} and with $H_0=aI$, we can get:
\begin{equation}\label{RLS1}
  \begin{split}
      \lVert  \Delta\theta_i\rVert_F&\leq \Delta\theta_i^\mathrm{Upper}=\lVert \Delta\theta_0H_0\rVert_F\lVert H_{i}^{-1} \rVert_F=a\lVert \Delta\theta_0\rVert_FH_{i}^{-1}=a\lVert \Delta\theta_0\rVert_F\sqrt{\mathop{trace}((H_{i}^{-1})^2)}\\
      &=a\lVert \Delta\theta_0\rVert_F\sqrt{\sum\limits_{j=1}^{n_x+n_u}\lambda_j^2(H_{i}^{-1})}=a\lVert \Delta\theta_0\rVert_F\sqrt{\sum\limits_{j=1}^{n_x+n_u}\frac{1}{\lambda_j^2(H_{i})}}\leq a\lVert \Delta\theta_0\rVert_F\sum\limits_{j=1}^{n_x+n_u}\frac{1}{\lambda_k(H_{i})}\\
      &=a\lVert \Delta\theta_0\rVert_F\sum\limits_{j=1}^{n_x+n_u}\frac{1}{\lambda_j\left(H_0+\sum\limits_{k=1}^{i}D_{k}\right)}=a\lVert \Delta\theta_0\rVert_F\sum\limits_{j=1}^{n_x+n_u}\frac{1}{a+\lambda_j\left(\sum\limits_{k=1}^{i}D_{k}\right)}.
  \end{split}
\end{equation}
The inequality in \eqref{RLS1} comes from the inequality of arithmetic and the positive eigenvalues of the positive definite matrix.
\begin{equation}\label{RLS2}
  \begin{split}
      \lVert  \Delta\theta_i\rVert_F&\leq \Delta\theta_i^\mathrm{Upper}\leq a\lVert \Delta\theta_0\rVert_F\sum\limits_{j=1}^{n_x+n_u}\frac{1}{a+\lambda_j\left(\sum\limits_{k=1}^{i}D_{k}\right)} \\
      &= a\lVert \Delta\theta_0\rVert_F\sum\limits_{j=1}^{n_x+n_u}\frac{1}{a+\lfloor \frac{i}{N_{\max}}\rfloor{\alpha_{\min}}-\lfloor \frac{i}{N_{\max}}\rfloor{\alpha_{\min}} +\lambda_j\left(\sum\limits_{k=1}^{i}D_{k}\right)}\\
      &= \frac{a(n_x+n_u)\lVert \Delta\theta_0\rVert_F}{a+\lfloor \frac{i}{N_{\max}}\rfloor{\alpha_{\min}}}+\sum\limits_{j=1}^{n_x+n_u}\underbrace{\frac{a\lVert \Delta\theta_0\rVert_F\left(\lfloor \frac{i}{N_{\max}}\rfloor{\alpha_{\min}} -\lambda_j\left(\sum\limits_{k=1}^{i}D_{k}\right)\right)}{\left(a+\lambda_j\left(\sum\limits_{k=1}^{i}D_{k}\right)\right)\left( a+\lfloor \frac{i}{N_{\max}}\rfloor{\alpha_{\min}}\right)}}_{\mathcal{E}(j)}.
  \end{split}
\end{equation}
We analyze $\mathcal{E}(j)$ by focusing on the eigenvalues $\lambda_j\left(\sum\limits_{k=1}^{i}D_{k}\right)$. Because $\lambda_j\left(\sum\limits_{k=1}^{i}D_{k}\right) \geq 0$ for all $j$, the denominator of $\mathcal{E}(j)$ is  positive for every $j$. According to Definition \ref{def4}, $j^\mathrm{non}_i$ of the total $n_u+n_x$ eigenvalues satisfy the inequality $\lfloor\frac{i}{N_{\max}}\rfloor\alpha_{\min}-\lambda_j\left(\sum\limits_{k=1}^{i}D_{k}\right) >0 $. This implies that this part of the numerator of $\mathcal{E}(j)$ is upper-bounded by $a\lVert \Delta\theta_0\rVert_F\lfloor\frac{i}{N_{\max}}\rfloor\alpha_{\min}$. The remaining part is instead less than or equal to $0$. We can further simplify \eqref{RLS2} as: 
\begin{equation}\label{RLS34}
  \begin{split}
      \lVert  \Delta\theta_i\rVert_F&\leq  \Delta\theta_i^\mathrm{Upper}\leq \frac{a(n_x+n_u)\lVert \Delta\theta_0\rVert_F}{a+\lfloor \frac{i}{N_{\max}}\rfloor{\alpha_{\min}}}+\frac{j^\mathrm{non}_i a\lVert \Delta\theta_0\rVert_F\lfloor \frac{i}{N_{\max}}\rfloor{\alpha_{\min}}}{\left(a+\lambda_j\left(\sum\limits_{k=1}^{i}D_{k}\right)\right)\left( a+\lfloor \frac{i}{N_{\max}}\rfloor{\alpha_{\min}}\right)}
  \end{split}
\end{equation}
If $i \geq N_{\max}$, then from \eqref{RLS34} we can get:
\begin{equation}\label{RLS35}
   \begin{split}
      \lVert  \Delta\theta_i\rVert_F &\leq \Delta\theta_i^\mathrm{Upper}\leq \frac{a(n_x+n_u)\lVert \Delta\theta_0\rVert_F}{a+\lfloor \frac{i}{N_{\max}}\rfloor{\alpha_{\min}}}+\frac{j^\mathrm{non}_ia\lVert \Delta\theta_0\rVert_F\lfloor\frac{i}{N_{\max}}\rfloor\alpha_{\min} }{a\left(\lfloor \frac{i}{N_{\max}}\rfloor{\alpha_{\min}}\right)}\\
      &= a(n_x+n_u)\frac{\lVert \Delta\theta_0\rVert_F}{a+\lfloor \frac{i}{N_{\max}}\rfloor{\alpha_{\min}}}+\lVert \Delta\theta_0\rVert_Fj^\mathrm{non}_i
   \end{split}
\end{equation}
The first inequality is obtained by leveraging positivity of the eigenvalues in the denominator of the second term in (\ref{RLS34}). 
For the case $0\leq i \leq N_{\max}$, 
we know that the inequality \eqref{RLS35}, which can be simplified as $\lVert  \Delta\theta_i\rVert_F\leq \Delta\theta_i^\mathrm{Upper}\leq(n_x+n_u)\lVert \Delta\theta_0\rVert_F$, still holds. The proof of Theorem \ref{theorem3} is concluded by using the fact that $j^\mathrm{non}_i\leq\lVert j^\mathrm{non} \rVert_\infty, \forall i\in\mathbb{Z}_+$.
\section{Appendix 3}\label{convergenceNominal}
\subsection{Proof of Theorem \ref{theorem4}}\label{Prooftheorem6}
Before proving Theorem \ref{theorem4}, we introduce the following lemma:
\begin{lemma}\label{lemma2}
  Given a stabilizable system dynamic $\left(\tilde{A},\tilde{B}\right)$, and $\tilde{P}^*_{\left(\tilde{A},\tilde{B}\right)}$ is the solution of the DARE \eqref{DARE} formulated by the $\left(\tilde{A},\tilde{B}\right)$, 
  \begin{enumerate}
    \item For any $\tilde{P}\succeq P^*_{\left(\tilde{A},\tilde{B}\right)}$, the feedback gain $\tilde{K}=-\left(R+\tilde{B}^\top \tilde{P}\tilde{B}\right)^{-1}\tilde{B}^\top \tilde{P}\tilde{A}$ stabilizes the system, i.e. $\lVert \tilde{A}+\tilde{B}\tilde{K}\rVert_2 <1 $,   
    \item Defining $\tilde{\mathcal{A}}\left(\tilde{P}\right):=\left[I_{n_x}\otimes I_{n_x}-\left(\tilde{A}^\top +\tilde{B}^\top \tilde{K}^\top  \right)\otimes \left(\tilde{A}^\top +\tilde{B}^\top \tilde{K}^\top \right)\right]$, then $\lVert\tilde{\mathcal{A}}^{-1}\left(\tilde{P}_i\right)\rVert_F \leq \frac{\sqrt{n_x}}{1-\lVert \tilde{A}+\tilde{B}\tilde{K} \rVert^2_2}$,
    \item For any initialized $\tilde{P}_0\succeq P^*_{\left(\tilde{A},\tilde{B}\right)}$ and $\tilde{P}_{i+1}:=\mathcal{L}^{-1}_{\left(\tilde{A},\tilde{B},\tilde{P}_i\right)}\left(Q+\tilde{\alpha}^\top\left(\tilde{P}_i\right)\tilde{\beta}^{-1}\left(\tilde{P}_i\right)R\tilde{\beta}^{-1}\left(\tilde{P}_i\right)\tilde{\alpha}\left(\tilde{P}_i\right)\right)$ where $\tilde{\alpha}\left(\tilde{P}_i\right):=\tilde{B}^\top \tilde{P}_i\tilde{A},\tilde{\beta}\left(\tilde{P}_i\right):=R+\tilde{B}^\top \tilde{P}_i\tilde{B}$, there exist a constant $\tilde{\epsilon}\in [0,1)$, such that $\frac{\sqrt{n_x}}{1-\tilde{\epsilon}^2}\geq \lVert\tilde{\mathcal{A}}\left(\tilde{P}_i\right)\rVert_F$, for all $i\in\mathbb{Z}_+$.  
  \end{enumerate}
\end{lemma}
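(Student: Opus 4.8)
The plan is to establish the three items in order, since each relies on the previous one, and to base everything on the completing-the-square identity that defines the improvement gain. Writing $\tilde{A}_K:=\tilde{A}+\tilde{B}\tilde{K}$ for the closed loop and $\mathcal{F}(\tilde{P}):=Q+\tilde{A}^\top \tilde{P}\tilde{A}-\tilde{A}^\top \tilde{P}\tilde{B}(R+\tilde{B}^\top \tilde{P}\tilde{B})^{-1}\tilde{B}^\top \tilde{P}\tilde{A}$ for the Riccati operator, the fact that $\tilde{K}$ minimises the policy-improvement cost (cf. \eqref{costPI2}) gives the identity
\begin{equation*}
  Q+\tilde{K}^\top R\tilde{K}+\tilde{A}_K^\top \tilde{P}\tilde{A}_K=\mathcal{F}(\tilde{P}),
\end{equation*}
and I will also use that the matrix $\tilde{\mathcal{A}}(\tilde{P})$ in the statement equals $I_{n_x}\otimes I_{n_x}-\tilde{A}_K^\top\otimes \tilde{A}_K^\top$, exactly as in \eqref{rerelationPseq2}.

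For item 1 I would rearrange the identity above into the Stein form $\tilde{P}-\tilde{A}_K^\top \tilde{P}\tilde{A}_K=(Q+\tilde{K}^\top R\tilde{K})-(\mathcal{F}(\tilde{P})-\tilde{P})$. Since $\tilde{P}\succeq P^*_{(\tilde{A},\tilde{B})}$ and $P^*_{(\tilde{A},\tilde{B})}$ is the maximal (stabilising) solution of the DARE \eqref{DARE}, the monotonicity of the Riccati recursion yields the super-solution property $\mathcal{F}(\tilde{P})\preceq \tilde{P}$; this is the classical Hewer/Kleinman fact already underlying Theorem \ref{theorem1} and \cite{HEWER1973226}. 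The right-hand side is therefore positive semidefinite, and strictly positive definite once detectability of $(\tilde{A}_K,(Q+\tilde{K}^\top R\tilde{K})^{1/2})$ is invoked, so the discrete-time Lyapunov theorem with the certificate $\tilde{P}\succ 0$ gives $\tilde{A}_K^\top \tilde{P}\tilde{A}_K\prec \tilde{P}$, i.e. $\tilde{A}_K$ is a contraction in the $\tilde{P}$-weighted norm, which is the precise sense in which $\lVert \tilde{A}_K\rVert_2<1$ should be read. Establishing $\mathcal{F}(\tilde{P})\preceq\tilde{P}$ for an \emph{arbitrary} $\tilde{P}\succeq P^*$ (rather than for a value function of a stabilising policy) is the delicate point here, and I would handle it through the maximality characterisation of $P^*_{(\tilde{A},\tilde{B})}$ rather than a one-step argument.

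Item 2 is then a direct computation: by item 1, $\rho(\tilde{A}_K^\top\otimes\tilde{A}_K^\top)=\lVert \tilde{A}_K\rVert_2^2<1$, so $\tilde{\mathcal{A}}^{-1}(\tilde{P})$ admits the Neumann expansion $\sum_{k\geq 0}(\tilde{A}_K^\top)^k\otimes(\tilde{A}_K^\top)^k$. I would bound each term using $\lVert X\otimes X\rVert_F=\lVert X\rVert_F^2$ together with $\lVert \tilde{A}_K^{\,k}\rVert_F\leq\sqrt{n_x}\,\lVert \tilde{A}_K\rVert_2^{\,k}$, and sum the resulting geometric series $\sum_{k\ge0}\lVert\tilde{A}_K\rVert_2^{2k}=1/(1-\lVert\tilde{A}_K\rVert_2^2)$ to obtain the claimed estimate $\lVert \tilde{\mathcal{A}}^{-1}(\tilde{P})\rVert_F\leq \sqrt{n_x}/(1-\lVert \tilde{A}_K\rVert_2^2)$.

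Item 3 is where the uniformity across the iteration must be produced, and I expect this to be the main obstacle. The strategy is: starting from $\tilde{P}_0\succeq P^*_{(\tilde{A},\tilde{B})}$, item 1 guarantees every improvement gain along the recursion $\tilde{P}_{i+1}=\mathcal{L}^{-1}_{(\tilde{A},\tilde{B},\tilde{P}_i)}(\cdot)$ is stabilising, so that each $\tilde{\mathcal{A}}(\tilde{P}_i)$ is invertible and item 2 applies at every $i$. Invoking the convergence $\tilde{P}_i\to P^*_{(\tilde{A},\tilde{B})}$ and $\tilde{K}_i\to K^*$ from Theorem \ref{theorem1}, the closed loops converge to the Schur matrix $\tilde{A}+\tilde{B}K^*$, at which $\tilde{\mathcal{A}}$ is invertible with continuous inverse; hence the scalar sequence $\{\lVert \tilde{\mathcal{A}}^{-1}(\tilde{P}_i)\rVert_F\}$ is convergent and therefore uniformly bounded by some finite $C$. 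I would then define $\tilde{\epsilon}:=\sqrt{\max\{0,\,1-\sqrt{n_x}/C\}}\in[0,1)$, so that $\sqrt{n_x}/(1-\tilde{\epsilon}^2)\geq C\geq \lVert \tilde{\mathcal{A}}^{-1}(\tilde{P}_i)\rVert_F$ for all $i$. The care required is precisely that the transient iterates need not satisfy $\lVert \tilde{A}_{K_i}\rVert_2<1$ even though each $\tilde{A}_{K_i}$ is Schur, so $\tilde{\epsilon}$ cannot be taken as $\sup_i\lVert \tilde{A}_{K_i}\rVert_2$; it must instead be backed out from the uniform bound on $\lVert \tilde{\mathcal{A}}^{-1}(\tilde{P}_i)\rVert_F$, which is what makes the convergence result of Theorem \ref{theorem1} indispensable at this step.
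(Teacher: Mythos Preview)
Your overall architecture matches the paper's, but there are two places where you diverge, one of which is a genuine gap.

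\textbf{Item 1.} You reduce everything to the super-solution inequality $\mathcal{F}(\tilde P)\preceq\tilde P$ and then defer its proof to an unspecified ``maximality characterisation''. This is precisely the step the paper supplies a concrete device for: it compares the improvement gain $\tilde K$ against the \emph{optimal} gain $K^*$. Since $\tilde K$ minimises $J_{\mathrm{PI}}$, substituting $K^*$ gives
\[
Q+\tilde K^\top R\tilde K+\tilde A_K^\top\tilde P\tilde A_K
\;\leq\;
Q+(K^*)^\top R K^*+(\tilde A+\tilde B K^*)^\top\tilde P(\tilde A+\tilde B K^*)
= P^*+(\tilde A+\tilde B K^*)^\top(\tilde P-P^*)(\tilde A+\tilde B K^*),
\]
and then the Schur stability of $\tilde A+\tilde B K^*$ (together with a cited matrix lemma) is what closes the Lyapunov decrease $V(x_{t+1})-V(x_t)\le 0$. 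Your Stein-equation rewriting is fine, but without this comparison trick you have no handle on $\mathcal{F}(\tilde P)-\tilde P$ for an arbitrary $\tilde P\succeq P^*$: monotonicity of $\mathcal{F}$ alone does not yield it, and ``maximality'' is not a one-line substitute. This is the missing idea.

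\textbf{Item 2.} Both you and the paper use the Neumann series, but the bookkeeping differs. The paper first passes to the spectral norm, $\lVert\tilde{\mathcal A}^{-1}\rVert_F\le\sqrt{n_x}\,\lVert\tilde{\mathcal A}^{-1}\rVert_2$, and then bounds $\lVert\sum_k(\tilde A_K^\top\otimes\tilde A_K^\top)^k\rVert_2\le\sum_k\lVert\tilde A_K\rVert_2^{2k}$. Your route bounds each Frobenius term directly via $\lVert X\otimes X\rVert_F=\lVert X\rVert_F^2$ and $\lVert\tilde A_K^k\rVert_F\le\sqrt{n_x}\,\lVert\tilde A_K\rVert_2^k$; but squaring this produces a prefactor $n_x$, not $\sqrt{n_x}$, so your stated conclusion does not follow from your stated bounds. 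Passing through the $2$-norm first, as the paper does, is what keeps the constant as written.

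\textbf{Item 3.} Here you actually take a different and more cautious route than the paper. The paper simply sets $\tilde\epsilon:=\sup_i\lVert\tilde A+\tilde B\tilde K_i\rVert_2$ and asserts it lies in $[0,1)$, relying on item 1. You instead argue that $\{\lVert\tilde{\mathcal A}^{-1}(\tilde P_i)\rVert_F\}$ is a convergent (hence bounded) sequence by Theorem~\ref{theorem1} and continuity, and back out $\tilde\epsilon$ from that uniform bound. Your version sidesteps the distinction between spectral radius and spectral norm that you flag; the paper's version is shorter but leans on the literal reading of item 1. Both deliver the uniform bound the lemma needs.
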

The proof of Lemma \ref{lemma2} is given in Section \ref{ProofLemma2}. We emphasize that Lemma \ref{lemma2} holds for any stabilizable system dynamic $\left(\tilde{A},\tilde{B}\right)$, for instance, the true system $(A,B)$ or a stabilizable estimate $\left(\hat{A}_i,\hat{B}_i\right)$.

Now we start to prove Theorem \ref{theorem4}. From item 1 in Lemma \ref{lemma2}, we know that $K_1$ stabilizes the system $(A,B)$ and thus Theorem \ref{theorem1} holds for the following analysis. By leveraging Theorem \ref{theorem1}, if $K_1$ stabilizes the system, then $ \forall i \in\mathbb{Z}_+, K_i$ also stabilizes the system. Thus, all the eigenvalues of $A+BK_{i}$ are inside the unit circle, i.e. $\lVert A+BK_{i}\rVert_2 <1 ~\forall i \in\mathbb{Z}_+$. Consequently, $\lVert \left(A^\top - \alpha^\top (P_i)\beta(P_i)^{-1}B^\top \right)\otimes \left(A^\top -\alpha^\top (P_i)\beta(P_i)^{-1}B^\top \right) \rVert_2<1$, as $A^\top - \alpha^\top (P_i)\beta(P_i)^{-1}B^\top =(A+BK_{i+1})^\top $. Then we can infer that the eigenvalues of $\mathcal{A}(P_i)$ are $1-\lambda\left(\left(A^\top - \alpha^\top (P_i)\beta(P_i)^{-1}B^\top \right)\otimes \left(A^\top -\alpha^\top (P_i)\beta(P_i)^{-1}B^\top \right)\right)$, all of which are greater than $0$. Thus, the matrix $\mathcal{A}(P_i),~\forall i=\mathbb{Z}_+$ is always invertible if $P_0\succeq P^*$. \\
Due to item 1 in Theorem \ref{theorem1}, we can conclude for any $P_i\succeq P^*$ that:
\begin{equation}\label{Convergence8}
  \lVert P_{i+1}-P^* \rVert_F \leq \lVert P_i-P^* \rVert_F.
\end{equation}
For any $P_i \neq P^*$, $P_i \neq P_{i+1}$ for all $i\in\mathbb{Z}_+$ because $P^*$ is the only equilibrium point of DARE \eqref{DARE}, we can conclude there always exists $\xi_i \in (0,1)$ such that:
\begin{equation}\label{Conevergence9}
   \lVert P_{i+1}-P^* \rVert_F \leq \xi_i \lVert P_i-P^* \rVert_F.
\end{equation}
For $P_i=P^*$, we can also find $\xi_i \in (0,1)$, such that \eqref{Conevergence9} holds. From item 4 in Theorem \ref{theorem1}, we know that $\lVert P_{i+1}-P^*\rVert_F \leq C\lVert P_{i}-P^*\rVert^2_F$, where $C$ is a constant. From item 3 in Theorem \ref{theorem1}, we know that there exists an index $i_c\geq 1$, such that $\lVert P_{i_c}-P^*\rVert_F <\frac{1}{C}$. Then we can always find a constant $c=\max \{\xi_1, \xi_2,...,\xi_{i_c},C\lVert P_{i_1}-P^*\rVert_F\}\in (0,1)$, such that $\lVert P_{i+1}-P^*\rVert_F \leq c\lVert P_{i}-P^*\rVert_F$ for all $i\in\mathbb{Z}_+$ and $P_0 \succeq P^*$.

\subsection{Proof of Lemma \ref{lemma2}}\label{ProofLemma2}
First, we proof item 1. We introduce the Lyapunov function $V(x_t):=x_t^\top \tilde{P}x_t$. Because of the stabilizability of $\left(\tilde{A},\tilde{B}\right)$, there exists a feedback gain $\tilde{K}^*$ that stabilizes the system and also minimizes the cost in \eqref{Cost}. From \eqref{costPI}, $\tilde{K}$ is the minimum of $J_\mathrm{PI}(K)$, then we have:
\begin{equation}\label{Lemma22}
  \begin{split}
     &x_t^\top Qx_t+\left(\tilde{K}x_t\right)^\top R\left(\tilde{K}x_t\right)+\left(\tilde{A}x_t+\tilde{B}\tilde{K}x_t\right)^\top \tilde{P}\left(\tilde{A}x_t+\tilde{B}\tilde{K}x_t\right)\\
       &\leq x_t^\top Qx_t+\left(\tilde{K}^*x_t\right)^\top R\left(\tilde{K}^*x_t\right)+\left(\tilde{A}x_t+\tilde{B}\tilde{K}^*x_t\right)^\top \tilde{P}\left(\tilde{A}x_t+\tilde{B}\tilde{K}^*x_t\right) \\
       & = x_t^\top P^*_{\left(\tilde{A},\tilde{B}\right)}x_t+\left(\tilde{A}x_t+\tilde{B}\tilde{K}^*x_t\right)^\top \left(\tilde{P}-P^*_{\left(\tilde{A},\tilde{B}\right)}\right)\left(\tilde{A}x_t+\tilde{B}\tilde{K}^*x_t\right)\\
       & = x_t^\top\tilde{P}x_t - \left(x_t^\top\left(\tilde{P}-P^*_{\left(\tilde{A},\tilde{B}\right)}\right)x_t-x_t^\top \left(\tilde{A}+\tilde{B}\tilde{K}^*\right)^\top \left(\tilde{P}-P^*_{\left(\tilde{A},\tilde{B}\right)}\right)\left(\tilde{A}+\tilde{B}\tilde{K}^*\right)x_t\right)
  \end{split}
\end{equation}
Subsequently, it follows that:
\begin{equation}\label{Lemma23}
  \begin{split}
     V(x_{t+1})-V(x_t) &= \left(\tilde{A}x_t+\tilde{B}\tilde{K}x_t\right)^\top \tilde{P}\left(\tilde{A}x_t+\tilde{B}\tilde{K}x_t\right)-x_t^\top \tilde{P}x_t\\
       &\leq -x_t^\top Qx_t-\left(\tilde{K}x_t\right)^\top R\left(\tilde{K}x_t\right)\\
       &-\left(x_t^\top \left(\tilde{P}-P^*_{\left(\tilde{A},\tilde{B}\right)}\right)x_t-x_t^\top \left(\tilde{A}+\tilde{B}\tilde{K}^*\right)^\top \left(\tilde{P}-P^*_{\left(\tilde{A},\tilde{B}\right)}\right)\left(\tilde{A}+\tilde{B}\tilde{K}^*\right)x_t\right)\\
       & \leq 0
  \end{split}
\end{equation}
The first inequality comes from \eqref{Lemma22} and the second inequality is because all eigenvalues of $\tilde{A}+\tilde{B}\tilde{K}^*$ are inside the unit circle. From \cite[Lemma 2.2]{articlematrix}, we know that $\left(\tilde{P}-P^*_{\left(\tilde{A},\tilde{B}\right)}\right)\succeq\left(\tilde{A}+\tilde{B}\tilde{K}^*\right)^\top \left(\tilde{P}-P^*_{\left(\tilde{A},\tilde{B}\right)}\right)\left(\tilde{A}+\tilde{B}\tilde{K}^*\right)$. From \eqref{Lemma23}, we can conclude that, $\tilde{K}$ stabilizes the system $\left(\tilde{A},\tilde{B}\right)$.\\
Now, we proceed with the proof of item 2. As established in item 1, all eigenvalues of the matrix $\left(\tilde{A}^\top+\tilde{B}^\top \tilde{K}^\top \right)\otimes \left(\tilde{A}^\top+\tilde{B}^\top \tilde{K}^\top \right)$ are inside of the unit circle with \cite[Eq.517]{Petersen2008}. Therefore, we can derive:
\begin{equation}\label{Lemma21}
  \begin{split}
     \lVert \tilde{\mathcal{A}}^{-1}\left(\tilde{P}\right)\rVert_F &\leq \sqrt{n_x}\lVert \tilde{\mathcal{A}}^{-1}\left(\tilde{P}\right)\rVert_2=\sqrt{n_x}\lVert \sum\limits_{k=0}^{\infty}[\left(\tilde{A}^\top +\tilde{B}^\top \tilde{K}^\top \right)\otimes \left(\tilde{A}^\top +\tilde{B}^\top \tilde{K}^\top \right)]^k\rVert_2\\
       &\leq \sqrt{n_x} \sum\limits_{k=0}^{\infty} \lVert [\left(\tilde{A}^\top +\tilde{B}^\top \tilde{K}^\top \right)\otimes \left(\tilde{A}^\top +\tilde{B}^\top \tilde{K}^\top \right)]^k\rVert_2\\
       &\leq  \sqrt{n_x}\sum\limits_{k=0}^{\infty} \lVert \tilde{A}^\top +\tilde{B}^\top \tilde{K}^\top  \rVert_2^{2k}=\sqrt{n_x}\sum\limits_{k=0}^{\infty} \lVert \tilde{A}^\top +\tilde{B}^\top \tilde{K}^\top  \rVert_2^{2k}\\
       &\leq \frac{\sqrt{n_x}}{1-\lVert \tilde{A}+\tilde{B}\tilde{K} \rVert^2_2}.
  \end{split}
\end{equation}
The first inequality follows from the definitions of induced 2-norm and Frobenius norm. The first equality comes from \cite[Eq. 186]{Petersen2008}. The second and third inequalities come from the sub-additive and sub-multiplicative of matrix norms.This shows Item 2 in Lemma \ref{lemma2}. By combining item 1 in Lemma \ref{lemma2} with Theorem \ref{theorem1}, it follows that $\hat{K}_1$ computed from $\hat{P}_0$ stabilizes the system and $\tilde{P}_i$ converges to $P^*_{\left(\tilde{A},\tilde{B}\right)}$, forming a convergent sequence $\{\tilde{P}_i\}$. Then we can always find $\tilde{\epsilon}:=\sup\{\lVert\tilde{A}+\tilde{B}\tilde{K}_i\rVert_2\}\in[0,1)$, such that $\frac{\sqrt{n_x}}{1-\tilde{\epsilon}^2}\geq \lVert\tilde{\mathcal{A}}\left(\tilde{P_i}\right)\rVert_F$, for all $i\in\mathbb{Z}_+$. Consequently, item 3 in Lemma \ref{lemma2} is also proven.
\section{Appendix 4}
\subsection{Proof of Theorem \ref{theorem5}}\label{prooftheorem5}
The pair $\left(A,B\right)$ is stabilizable, implying the existence of a gain $K_\mathrm{st}$ such that $\lVert A+BK_\mathrm{st}\rVert_2<1$. Let $L:=A+BK_\mathrm{st}$. According to Chapter 6 in \cite{meyer2023matrix}, the eigenvalues of the matrix $L\in \mathbb{R}^{n_x \times n_x}$ vary continuously with the entries of $L$. Therefore, there exists a $\Delta L$ such that $\lVert L+\Delta L \rVert_2<1$. Since $\lVert L\rVert_2+\lVert \Delta L \rVert_2\geq\lVert L+\Delta L \rVert_2$, it follows that if $\lVert \Delta L \rVert_2< 1-\lVert L\rVert_2$, then $\lVert L+\Delta L \rVert_2<1$. If $\lVert \Delta L\rVert_2\geq\lVert \Delta A_i+ \Delta B_i K_\mathrm{st}\rVert_2$, where $\Delta A_i:=\hat{A}_i -A$, $\Delta B_i:=\hat{B}_i -B$ and $\left(\hat{A}_i,\hat{B}_i\right)$ are the estimates from RLS, we can conclude $\lVert A+BK_\mathrm{st}+ \Delta A_i+ \Delta B_i K_\mathrm{st}\rVert_2=\lVert \hat{A}_i+\hat{B}_i K_\mathrm{st}\rVert_2<1$. Because of $\Delta\theta_i=\left[\Delta A_i,\Delta B_i\right]$, we know that:
\begin{equation}\label{T51}
   \lVert \Delta A_i \rVert_2 ~\mathrm{or} ~\lVert \Delta B_i \rVert_2 \leq \lVert \Delta\theta_i \rVert_2.
\end{equation}
If $\lVert \Delta\theta_i \rVert_2 \leq \frac{\lVert \Delta L \rVert_2}{1+\lVert K_\mathrm{st} \rVert_2}$, then we have:
\begin{equation}\label{T52}
  \begin{split}
     \lVert \Delta A_i+ \Delta B_i K_\mathrm{st}\rVert_2 &\leq \lVert \Delta A_i\rVert_2 + \lVert \Delta B_i \rVert_2 \lVert K_\mathrm{st}\rVert_2 \\
       & \leq \lVert  \Delta\theta_i \rVert_2 + \lVert  \Delta\theta_i  \rVert_2 \lVert K_\mathrm{st}\rVert_2\leq\lVert \Delta L \rVert_2.
  \end{split}
\end{equation}
From \eqref{T52}, $K_\mathrm{st}$ stabilizes the system $\left(\hat{A}_i,\hat{B}_i\right)$, and thus, $\left(\hat{A}_i,\hat{B}_i\right)$ is stabilizable. By Theorem \ref{theorem2}, we know that when the sequence is locally persistent with respect to any $N,M, \alpha$, then $\lim \limits_{i\rightarrow\infty}\Delta\theta_i=0$. This means there  exists an index $i_\mathrm{st}$ such that for all $i\geq i_\mathrm{st}$, $\lVert \Delta\theta_i \rVert_2 \leq \frac{\lVert \Delta L \rVert_2}{1+\lVert K_\mathrm{st} \rVert_2}$, then the estimates $\left(\hat{A}_i,\hat{B}_i\right)$ are always stabilizable. The proof is complete.
\subsection{Proof of Theorem \ref{theorem6}}\label{App2}
First, we prove that the IPI algorithm formulated by \eqref{IPI2a}  \eqref{procedure} and \eqref{IPI2d} can be equivalently written as \eqref{IPI2}. For the recursive least squares state representation \eqref{IPI21c}, we replace the expressions of $H_i$ in \eqref{IPI2c} by \eqref{IPI2b}. For the policy iteration state representation \eqref{IPI21d}, we can replace the expression of $K_i$ in \eqref{IPI2a} by \eqref{IPI2d} with shifting index from $i$ to $i-1$. However, we must check if the inverse operation $\mathcal{L}^{-1}_{\left(\hat{A}_i,\hat{B}_i,P_i\right)}$ still exists. With Assumption \ref{assumption 1}, Assumption \ref{assumption 2} and item 1 in Lemma \ref{lemma2}, we know that $\hat{K}_i$ stabilizes the system $\left(\hat{A}_i,\hat{B}_i\right)$. Thus, the inverse operation $\mathcal{L}^{-1}_{\left(\hat{A}_i,\hat{B}_i,\hat{P}_i\right)}$ exists and $\hat{P}_{i+1}$ can be expressed as \eqref{IPI21d}.\\

Then, we proceed to proving the result in \eqref{ISSCA_TOT}. For \eqref{ISSCA2} we can simply use Theorem \eqref{theorem3}. Indeed this holds for any data sequence in input to the RLS scheme, hence also for the one generated by IPI. We can the focus just on \eqref{ISSCA}. 
At episode $i$, $\hat{P}_{i+1}$ is updated with the estimates $\left(\hat{A}_i,\hat{B}_i\right)$ and $\hat{P}_{i}$.  
$\hat{P}_{i+1}$ can be calculated by \eqref{IPI21d} as follows : 
\begin{equation}\label{ISS1}
\begin{split}
   \hat{P}_{i+1} &= \mathcal{L}^{-1}_{\left(\hat{A}_i,\hat{B}_i,\hat{P}_i\right)}\left(Q+\alpha_i^\top \left(\hat{P}_i\right)\beta_i^{-1}\left(\hat{P}_i\right)R\beta_i^{-1}\left(\hat{P}_i\right)\alpha_i\left(\hat{P}_i\right)\right).
\end{split}
\end{equation}
We rewrite \eqref{ISS1} as two terms, one term denotes the update of $\hat{P}_i$ with respect to \eqref{ISS3} formulated by the true system $\left(A,B\right)$, the other term denotes the difference between \eqref{ISS3} formulated by $\left(\hat{A}_i,\hat{B}_i\right)$ and $\left(A,B\right)$:
\begin{equation}\label{ISS111}
\begin{split}
   \hat{P}_{i+1}  &= \mathcal{L}^{-1}_{\left(A,B,\hat{P}_i\right)}\left(Q+\alpha^\top P\left(\hat{P}_i\right)\beta^{-1}\left(\hat{P}_i\right)R\beta^{-1}\left(\hat{P}_i\right)\alpha\left(\hat{P}_i\right)\right)+\varepsilon\left(\Delta A_i,\Delta B_i\right),
\end{split}
\end{equation}
where \begin{equation}\label{iss5}
  \begin{split}
      \varepsilon\left(\Delta A_i,\Delta B_i\right):=&\mathcal{L}^{-1}_{\left(\hat{A}_i,\hat{B}_i,\hat{P}_i\right)}\left(Q+\alpha_i^\top \left(\hat{P}_i\right)\beta_i^{-1}\left(\hat{P}_i\right)R\beta_i^{-1}\left(\hat{P}_i\right)\alpha_i\left(\hat{P}_i\right)\right)\\
      &-\mathcal{L}^{-1}_{\left(A,B,\hat{P}_i\right)}\left(Q+\alpha^\top \left(\hat{P}_i\right)\beta^{-1}\left(\hat{P}_i\right)R\beta^{-1}\left(\hat{P}_i\right)\alpha\left(\hat{P}_i\right)\right).
  \end{split} 
\end{equation}
Now, we aim to find the upper bound of $\lVert \varepsilon\left(\Delta A_i,\Delta B_i\right)\rVert_F$ in terms of $\lVert \Delta\theta_i \rVert_F $. The following lemma from \cite[Equation 19]{9444823} is used, when $X,X+\Delta X$ are invertible:
\begin{equation}\label{ISS5}
   \lVert X^{-1}Y-\left(X+\Delta X\right)^{-1}\left(Y+\Delta Y\right) \rVert_F \leq \lVert X^{-1}\rVert_F\left(\lVert \Delta Y\rVert_F+\lVert \left(X+\Delta X\right)^{-1}\rVert_F\lVert \left(Y+\Delta Y\right)\rVert_F\lVert \Delta X\rVert_F\right).
\end{equation}
With \eqref{convergence6} and \eqref{Proof1}, we can get:
\begin{equation}\label{ISS9}
   \lVert \Delta A_i \rVert_F ~\mathrm{or} ~\lVert \Delta B_i \rVert_F \leq \lVert \Delta\theta_i \rVert_F\leq \Delta\theta^\mathrm{Upper}_i \leq \Delta\theta^\mathrm{Upper}_0.
\end{equation}
We define the updated gain $\hat{K}_{i+1}$ as the gain computed with the current estimated system matrix $\left(\hat{A}_i,\hat{B}_i\right)$ and $\hat{P}_i$ and $\bar{K}_{i+1}$ as  the gain computed with $\left(A,B\right)$ and $\hat{P}_i$.
\begin{equation}\label{ISS6}
  \begin{split}
     \lVert \Delta K_{i+1} \rVert_F&=\lVert \hat{K}_{i+1}-\bar{K}_{i+1}\rVert_F \\
       &=\lVert \left(R+B^\top \hat{P}_iB\right)^{-1}B^\top \hat{P_i}A-\left(R+B^\top \hat{P}_iB+\Delta_{\beta_i}\right)^{-1}\left(B^\top \hat{P}_iA+\Delta_{\alpha_i}\right) \rVert_F\\
       &\leq \lVert \left(R+B^\top \hat{P}_iB\right)^{-1}\rVert_F\left(\lVert \Delta_{\alpha_i}\rVert_F+\lVert \left(R+\hat{B}_i^\top \hat{P}_i\hat{B}_i\right)^{-1}\rVert_F\lVert \hat{B}_i^\top \hat{P}_i\hat{A}_i\rVert_F\lVert \Delta_{\beta_i}\rVert_F\right),
  \end{split}
\end{equation}
where $\Delta_{\beta_i},\Delta_{\alpha_i}$ are given as
\begin{equation}\label{ISS7}
   \Delta_{\beta_i}=\Delta B_i^\top  \hat{P}_i \Delta B_i+ B^\top \hat{P}_i \Delta B_i+\Delta  B_i^\top  \hat{P}_i  B,
\end{equation}
\begin{equation}\label{ISS8}
  \Delta_{\alpha_i}=\Delta B_i^\top  \hat{P}_i \Delta A_i+ B^\top \hat{P}_i \Delta A_i+\Delta  B_i^\top \hat{P}_i  A. 
\end{equation}
The inequality in \eqref{ISS6} comes from \eqref{ISS5}. Given any stabilizable $\left(\hat{A}_i,\hat{B}_i\right)$, from Theorem \ref{theorem1} and Lemma \ref{lemma2}, we know that $\lVert \hat{P}_{i+1} \rVert_F$ is bounded by $\lVert \hat{P}_{i} \rVert_F$. Then we can get $\lVert \hat{P}_i \rVert_F \leq\lVert \hat{P}_0 \rVert_F$. Together with \eqref{ISS7} \eqref{ISS8} and \eqref{ISS9}, the norm of $\lVert \Delta_{\beta_i}\rVert_F,\lVert \Delta_{\alpha_i}\rVert_F$ are bounded by:
\begin{equation}\label{ISS10}
  \begin{split}
      \lVert \Delta_{\beta_i} \rVert_F&=\lVert\Delta B_i^\top  \hat{P}_i \Delta B_i+ B^\top \hat{P}_i \Delta B_i+\Delta  B_i^\top  \hat{P}_i  B\rVert_F\\
      &\leq\lVert \Delta\theta_i \rVert_F\underbrace{\lVert\hat{P}_0\rVert_F\left(\Delta\theta^\mathrm{Upper}_i+2\lVert B \rVert_F\right)}_{=:\delta_1}=\delta_1\lVert \Delta\theta_i \rVert_F,
  \end{split}  
\end{equation}
\begin{equation}\label{ISS11}
  \begin{split}
      \lVert\Delta_{\alpha_i}\rVert_F&=\lVert\Delta B_i^\top  \hat{P}_i \Delta A_i+ B^\top \hat{P}_i \Delta A_i+\Delta  B_i^\top  \hat{P}_i  A \rVert_F\\
      &\leq\lVert \Delta\theta_i \rVert_F \underbrace{\lVert \hat{P}_{0}\rVert_F\left(\Delta\theta^\mathrm{Upper}_i+\lVert B\rVert_F+\lVert A\rVert_F\right)}_{=:\delta_2}=\delta_2\lVert \Delta\theta_i \rVert_F.
  \end{split}
\end{equation}
The inequalities in \eqref{ISS10} and \eqref{ISS11} come from the sub-multiplicative property and item 1 in Theorem \ref{theorem1}. Then \eqref{ISS6} can be further simplified as:
\begin{equation}\label{ISS12}
  \begin{split}
    \lVert \hat{K}_{i+1}-\bar{K}_{i+1}\rVert_F &\leq \lVert \left(R+B^\top \hat{P}_iB\right)^{-1}\rVert_F\left(\lVert \Delta_{\alpha_i}\rVert_F+\lVert \left(R+\hat{B}_i^\top \hat{P}_i\hat{B}_i\right)^{-1}\rVert_F\lVert \hat{B}_i^\top \hat{P}_i\hat{A}_i\rVert_F\lVert \Delta_{\beta_i}\rVert_F\right) \\
    &\leq \lVert R^{-1}\rVert_F\left(\lVert \Delta_{\alpha_i}\rVert_F+\lVert R^{-1}\rVert_F\lVert \Delta B_i+B\rVert_F\lVert\hat{P}_i\rVert_F\lVert\Delta A_i+A\rVert_F\lVert \Delta_{\beta_i}\rVert_F\right) \\
     &\leq \delta_3\lVert \Delta \theta_i \rVert_F,
  \end{split}
\end{equation}
where $\delta_3:=\delta_{33}\left(\Delta\theta^\mathrm{Upper}_i\right)^3+\delta_{32}\left(\Delta\theta^\mathrm{Upper}_i\right)^2+\delta_{31}\Delta\theta^\mathrm{Upper}_i+\delta_{30}$, with $\delta_{33}:=\lVert R^{-1}\rVert_F^2\lVert \hat{P}_{0}\rVert_F^2$, \\
$\delta_{32}:=2\lVert R^{-1}\rVert_F^2\lVert \hat{P}_{0}\rVert_F^2\left(\lVert A\rVert_F+\lVert B\rVert_F\right)$, 
$\delta_{31}:=\lVert R^{-1}\rVert_F\lVert \hat{P}_{0}\rVert_F+\lVert R^{-1}\rVert_F^2\lVert \hat{P}_{0}\rVert_F^2\left(\left(\lVert A\rVert_F+\lVert B\rVert_F\right)^2+\lVert A\rVert_F\lVert B\rVert_F\right)$,\\
$\delta_{30}:=2\lVert R^{-1}\rVert_F\lVert \hat{P}_{0}\rVert_F\lVert B\rVert_F+\lVert R^{-1}\rVert_F^2\lVert \hat{P}_{0}\rVert_F^2\lVert A\rVert_F\lVert B\rVert_F\left(\lVert A\rVert_F+\lVert B\rVert_F\right)$.\\

The first inequality comes from the positive semidefiniteness of $B^\top \hat{P}_iB$ and $\hat{B}_i^\top \hat{P}_i\hat{B}_i$. The second inequality comes from \eqref{ISS9} \eqref{ISS10} and \eqref{ISS11}. In \eqref{rerelationPseq2}, $\mathcal{A}\left(\hat{P}_i\right)=\left[{I}_{n_x}\otimes{I}_{n_x}-\left(A^\top + \bar{K}_i^\top B^\top\right)\otimes \left(A^\top + \bar{K}_i^\top B^\top \right)\right]$. Similarly, we define $\mathcal{A}_i\left(\hat{P}_i\right):=\left[{I}_{n_x}\otimes{I}_{n_x}-\left(\hat{A}_i^\top + \hat{K}_i^\top \hat{B}_i^\top \right)\otimes \left(\hat{A}_i^\top + \hat{K}_i^\top \hat{B}_i^\top \right)\right]$ and $\Delta\mathcal{A}_i\left(\hat{P}_i\right):=\mathcal{A}_i\left(\hat{P}_i\right)-\mathcal{A}\left(\hat{P}_i\right)$, 
\begin{equation}\label{ISS13}
\begin{split}
     \lVert \Delta\mathcal{A}_i\left(\hat{P_i}\right)\rVert_F&=\lVert \mathcal{A}\left(\hat{P_i}\right)- \mathcal{A}_i\left(\hat{P_i}\right)\rVert_F \\
     &=\lVert \left(\hat{A}_i+ \hat{B}_i\tilde{K}_{i+1}\right)\otimes \left(\hat{A}_i+ \hat{B}_i\tilde{K}_{i+1}\right)- \left(A+ B\bar{K}_{i+1}\right)\otimes \left(A+ B\bar{K}_{i+1}\right)\rVert_F \\
     &=\lVert \left(A+\Delta A_i+ \left(B+\Delta B_i\right)\left(\bar{K}_{i+1}+\Delta{K}_{i+1}\right)\right)\otimes \left(\hat{A}_i+ \hat{B}_i\tilde{K}_{i+1}\right)-(\underbrace{A+ B\bar{K}_{i+1}}_{=:A_{K_{i+1}}}) \otimes (A_{K_{i+1}})\rVert_F \\
     &=\lVert (A_{K_{i+1}}+\underbrace{\Delta A_i+B\Delta{K}_{i+1}+ \Delta B_i(\bar{K}_{i+1}+\Delta{K}_{i+1}}_{=:\mathcal{C}_i}))\otimes (A_{K_{i+1}}+\mathcal{C}_i)- \left(A_{K_{i+1}}\right)\otimes \left(A_{K_{i+1}}\right) \rVert_F.
\end{split}
\end{equation}
 Because $\lVert \bar{K}_{i+1}\rVert_F\leq \lVert R^{-1}\rVert_F \lVert B\rVert_F\lVert \hat{P}_i\rVert_F\lVert A\rVert_F \leq \lVert R^{-1}\rVert_F \lVert B\rVert_F\lVert \hat{P}_0\rVert_F\lVert A\rVert_F$, then $\lVert \mathcal{C}_i \rVert_F \leq \delta_4 \lVert \Delta \theta_i \rVert_F$, where $\delta_4:=\delta_{44}\left(\Delta\theta^\mathrm{Upper}_i\right)^4+\delta_{43}\left(\Delta\theta^\mathrm{Upper}_i\right)^3+ \delta_{42}\left(\Delta\theta^\mathrm{Upper}_i\right)^2+\delta_{41}\Delta\theta^\mathrm{Upper}_i+\delta_{40}$, with $\delta_{44}:=\delta_{33}$, $\delta_{43}:=\delta_{32}+\lVert B\rVert_F\delta_{33}$, $\delta_{42}:=\delta_{31}+\lVert B\rVert_F\delta_{32}$, $\delta_{41}:=\delta_{30}+\lVert B\rVert_F\delta_{31}$, $\delta_{40}:=\lVert B\rVert_F\delta_{30}+\lVert R^{-1}\rVert_F\lVert A\rVert_F\lVert B\rVert_F \lVert\hat{P}_{0}\rVert_F+1$.
 
Consequently, \eqref{ISS13} can be further simplified as: 
\begin{equation}\label{ISS14}
\begin{split}
     \lVert \Delta\mathcal{A}_i\left(\hat{P_i}\right)\rVert_F &=\lVert\mathcal{C}_i\otimes\left(A_{K_{i+1}}\right)+\left(A_{K_{i+1}}\right)\otimes\mathcal{C}_i+\mathcal{C}_i\otimes\mathcal{C}_i\rVert_F\\
     &\leq 2\lVert\mathcal{C}_i\rVert_F\lVert A+B\bar{K}_{i+1} \rVert_F+\lVert\mathcal{C}_i\rVert_F^2\\
     &\leq \delta_5 \lVert \Delta \theta_i \rVert_F, 
\end{split}
\end{equation}
where $\delta_5:=\delta_{59}\left(\Delta\theta^\mathrm{Upper}_i\right)^9+\delta_{58}\left(\Delta\theta^\mathrm{Upper}_i\right)^8+\delta_{57}\left(\Delta\theta^\mathrm{Upper}_i\right)^7+\delta_{56}\left(\Delta\theta^\mathrm{Upper}_i\right)^6+\delta_{55}\left(\Delta\theta^\mathrm{Upper}_i\right)^5+ \delta_{54}\left(\Delta\theta^\mathrm{Upper}_i\right)^4+\delta_{53}\left(\Delta\theta^\mathrm{Upper}_i\right)^3+ \delta_{52}\left(\Delta\theta^\mathrm{Upper}_i\right)^2+\delta_{51}\Delta\theta^\mathrm{Upper}_i+\delta_{50}$, with $\delta_{59}:=\delta_{44}\delta_{44}$, $\delta_{58}:=2\delta_{44}\delta_{43}$, $\delta_{57}:=2\delta_{44}\delta_{42}+\delta_{43}\delta_{43}$, $\delta_{56}:=2\delta_{44}\delta_{41}+2\delta_{43}\delta_{42}$,$\delta_{55}:=2\delta_{44}\delta_{40}+2\delta_{43}\delta_{41}+\delta_{42}\delta_{42}$,
$\delta_{54}:= 2\delta_{43}\delta_{40}+2\delta_{42}\delta_{41}+2\left(\lVert A\rVert_F+\lVert R^{-1}\rVert_F\lVert A\rVert_F\lVert B\rVert_F^2 \lVert\hat{P}_{0}\rVert_F\right)\delta_{44}$, $\delta_{53}:=2\delta_{42}\delta_{40}+\delta_{41}\delta_{41}+2\left(\lVert A\rVert_F+\lVert R^{-1}\rVert_F\lVert A\rVert_F\lVert B\rVert_F^2 \lVert\hat{P}_{0}\rVert_F\right)\delta_{43}$, $\delta_{52}:=2\delta_{41}\delta_{40}+2\left(\lVert A\rVert_F+\lVert R^{-1}\rVert_F\lVert A\rVert_F\lVert B\rVert_F^2 \lVert\hat{P}_{0}\rVert_F\right)\delta_{42}$, $\delta_{51}:=\delta_{40}\delta_{40}+2\left(\lVert A\rVert_F+\lVert R^{-1}\rVert_F\lVert A\rVert_F\lVert B\rVert_F^2 \lVert\hat{P}_{0}\rVert_F\right)\delta_{41}$, $\delta_{50}:=2\left(\lVert A\rVert_F+\lVert R^{-1}\rVert_F\lVert A\rVert_F\lVert B\rVert_F^2 \lVert\hat{P}_{0}\rVert_F\right)\delta_{40}$.\\

The first inequality comes from $\lVert M \otimes N\rVert_F=\lVert M \rVert_F\lVert N\rVert_F$ and the second inequality comes from \eqref{ISS12}.
Defining $\tilde{K}_{i+1}^{\top }R\tilde{K}_{i+1}:=\bar{K}_{i+1}^{\top }R\bar{K}_{i+1}+\Delta_{KRK}$ and together with \eqref{ISS12}, we can also get:
\begin{equation}\label{ISS15}
  \begin{split}
     \lVert \Delta_{KRK} \rVert_F &= \lVert \left(\bar{K}_{i+1}+\Delta{K}_{i+1}\right)^{\top }R\left(\bar{K}_{i+1}+\Delta{K}_{i+1}\right)-\bar{K}_{i+1}^{\top }R\bar{K}_{i+1} \rVert_F \\
       &= \lVert \bar{K}_{i+1}^{\top } R \Delta{K}_{i+1}+\Delta{K}_{i+1}^{\top } R\bar{K}_{i+1}+ \Delta{K}_{i+1}^{\top } R \Delta{K}_{i+1}\rVert_F \\
       &\leq \delta_{6}\lVert \Delta \theta_i \rVert_F,
  \end{split}
\end{equation}
where $\delta_6:=\delta_{67} \left(\Delta\theta^\mathrm{Upper}_i\right)^7+\delta_{66} \left(\Delta\theta^\mathrm{Upper}_i\right)^6+\delta_{65}\left(\Delta\theta^\mathrm{Upper}_i\right)^5+\delta_{64}\left(\Delta\theta^\mathrm{Upper}_i\right)^4+\delta_{63}\left(\Delta\theta^\mathrm{Upper}_i\right)^3+ \delta_{62}\left(\Delta\theta^\mathrm{Upper}_i\right)^2+\delta_{61}\Delta\theta^\mathrm{Upper}_i+\delta_{60}$, with $\delta_{67}:=\lVert R\rVert_F\delta_{33}\delta_{33}$, $\delta_{66}:=2\lVert R\rVert_F\delta_{33}\delta_{32}$, $\delta_{65}:=2\lVert R\rVert_F\delta_{33}\delta_{31}+\lVert R\rVert_F\delta_{32}\delta_{32}$, $\delta_{64}:=2\lVert R\rVert_F\delta_{33}\delta_{30}+2\lVert R\rVert_F\delta_{32}\delta_{31}$, 
$\delta_{63}:=\lVert R\rVert_F\delta_{31}\delta_{31}+2\lVert B\rVert_F\lVert \hat{P}_0\rVert_F \lVert R\rVert_F \lVert R^{-1}\rVert_F\lVert A\rVert_F\delta_{33}$, $\delta_{62}:=2\lVert R\rVert_F\delta_{31}\delta_{30}+2\lVert B\rVert_F\lVert \hat{P}_0\rVert_F \lVert R\rVert_F \lVert R^{-1}\rVert_F\lVert A\rVert_F\delta_{32}$, 
$\delta_{61}:=\lVert R\rVert_F\delta_{30}\delta_{30}+2\lVert B\rVert_F\lVert \hat{P}_0\rVert_F \lVert R\rVert_F \lVert R^{-1}\rVert_F\lVert A\rVert_F\delta_{31}$, $\delta_{60}:=2\lVert B\rVert_F\lVert \hat{P}_0\rVert_F\lVert R\rVert_F \lVert R^{-1}\rVert_F \lVert A\rVert_F\delta_{30}$.\\

We plug \eqref{iss5} \eqref{ISS5} \eqref{ISS14} \eqref{ISS15} into \eqref{ISS5}, then we can obtain:
\begin{equation}\label{ISS4}
  \begin{split}
     \lVert \varepsilon\left(\Delta A_i,\Delta B_i\right) \rVert_F &= \lVert \mathcal{A}^{-1}\left(\hat{P_i}\right)\left(Q+\bar{K}_{i+1}^{\top }R\bar{K}_{i+1}\right)-\mathcal{A}^{-1}_i\left(\hat{P_i}\right)\left(Q+\tilde{K}_{i+1}^{\top }R\tilde{K}_{i+1}\right)\rVert_F\\
       &=\lVert \mathcal{A}^{-1}\left(\hat{P}_i\right)\left(Q+\bar{K}_{i+1}^{\top }R\bar{K}_{i+1}\right)-\left(\mathcal{A}\left(\hat{P}_i\right)+\Delta\mathcal{A}_i\left(\hat{P}_i\right)\right)^{-1}\left(Q+\bar{K}_{i+1}^{\top }R\bar{K}_{i+1}+\Delta_{KRK}\right)\rVert_F\\
       &\leq \lVert \mathcal{A}^{-1}\left(\hat{P}_i\right) \rVert_F \left(\delta_{6}+\delta_{5}\lVert \mathcal{A}^{-1}_i\left(\hat{P}_i\right) \rVert_F \lVert Q+\bar{K}_{i+1}^{\top }R\bar{K}_{i+1}\rVert_F \right)\lVert \Delta \theta_i \rVert_F.
  \end{split}
\end{equation}
From item 3 in Lemma \ref{lemma2}, we define $\epsilon := \sup\{\lVert A+B\hat{K}_i\rVert_2\}\in[0,1)$, such that $\frac{\sqrt{n_x}}{1-\epsilon^2}\geq \lVert\mathcal{A}^{-1}\left(\hat{P_i}\right)\rVert_F$, for all $i\in\mathbb{Z}_+$. From item 2 in Lemma \ref{lemma2}, we know that $\hat{K}_i$ stabilizes the system $\left(\hat{A}_i,\hat{B}_i\right)$, i.e. $\epsilon_i:=\lVert \hat{A}_i+\hat{B}_i\hat{K}_i \rVert_2\in[0,1)$, for all $i\in\mathbb{Z}_+$. Then we can always find a $\epsilon_{\max}:=\sup\{\epsilon_i\}$, such that $\lVert \mathcal{A}^{-1}_i\left(\hat{P_i}\right) \rVert_F\leq\frac{\sqrt{n_x}}{1-\epsilon_{\max}^2}$. Now we can further simplify \eqref{ISS4}:
\begin{equation}\label{ISS44}
  \begin{split}
     \lVert \varepsilon\left(\Delta A_i,\Delta B_i\right) \rVert_F &\leq \lVert \mathcal{A}^{-1}\left(\hat{P_i}\right) \rVert_F \left(\delta_{6}+\delta_{5}\lVert \mathcal{A}^{-1}_i\left(\hat{P_i}\right) \rVert_F \lVert Q+\bar{K}_{i+1}^{\top }R\bar{K}_{i+1}\rVert_F \right)\lVert \Delta \theta_i \rVert_F\\
       &\leq \underbrace{\frac{\sqrt{n_x}}{1-\epsilon^2} \left(\delta_{6}+\delta_{5}\frac{\sqrt{n_x}}{1-\epsilon_{\max}^2} \left(\lVert Q\rVert_F+\lVert\bar{K}_{i+1}\rVert_F \lVert R \rVert_F \lVert \bar{K}_{i+1}\rVert_F\right) \right)}_{=:\sigma_i} \lVert \Delta \theta_i \rVert_F\\
       &=\sigma_i \lVert \Delta \theta_i \rVert_F \leq \sigma_i \Delta\theta_i^\mathrm{Upper},
  \end{split}
\end{equation}
where $\sigma_i=\sum\limits_{k=0}^{9}\rho_k\left(\Delta\theta^\mathrm{Upper}_i\right)^k$ and the detailed expressions of $\rho_k>0$ are given as follows: \\
$\rho_9:=\frac{\sqrt{n_x}}{1-\epsilon^2}\frac{\sqrt{n_x}}{1-\epsilon_{max}^2}\left(\lVert Q\rVert_F+\lVert A\rVert_F^2\lVert B\rVert_F^2\lVert R\rVert_F\lVert R^{-1}\rVert_F^2\lVert \hat{P}_0\rVert_F^2\right)\delta_{59}$,\\
$\rho_8:=\frac{\sqrt{n_x}}{1-\epsilon^2}\frac{\sqrt{n_x}}{1-\epsilon_{max}^2}\left(\lVert Q\rVert_F+\lVert A\rVert_F^2\lVert B\rVert_F^2\lVert R\rVert_F\lVert R^{-1}\rVert_F^2\lVert \hat{P}_0\rVert_F^2\right)\delta_{58}$, \\
$\rho_7:=\frac{\sqrt{n_x}}{1-\epsilon^2}\delta_{67}+\frac{\sqrt{n_x}}{1-\epsilon^2}\frac{\sqrt{n_x}}{1-\epsilon_{max}^2}\left(\lVert Q\rVert_F+\lVert A\rVert_F^2\lVert B\rVert_F^2\lVert R\rVert_F \lVert R^{-1}\rVert_F^2\lVert\hat{P}_0\rVert_F^2\right)\delta_{57}$, \\
$\rho_6:=\frac{\sqrt{n_x}}{1-\epsilon^2}\delta_{66}+\frac{\sqrt{n_x}}{1-\epsilon^2}\frac{\sqrt{n_x}}{1-\epsilon_{max}^2}\left(\lVert Q\rVert_F+\lVert A\rVert_F^2\lVert B\rVert_F^2\lVert R\rVert_F \lVert R^{-1}\rVert_F^2\lVert\hat{P}_0\rVert_F^2\right)\delta_{56}$, \\
$\rho_5:=\frac{\sqrt{n_x}}{1-\epsilon^2}\delta_{65}+\frac{\sqrt{n_x}}{1-\epsilon^2}\frac{\sqrt{n_x}}{1-\epsilon_{max}^2}\left(\lVert Q\rVert_F+\lVert A\rVert_F^2\lVert B\rVert_F^2\lVert R\rVert_F \lVert R^{-1}\rVert_F^2 \lVert \hat{P}_0\rVert_F^2\right)\delta_{55}$,\\
$\rho_4:=\frac{\sqrt{n_x}}{1-\epsilon^2}\delta_{64}+\frac{\sqrt{n_x}}{1-\epsilon^2}\frac{\sqrt{n_x}}{1-\epsilon_{max}^2}\left(\lVert Q\rVert_F+\lVert A\rVert_F^2\lVert B\rVert_F^2\lVert R\rVert_F\lVert R^{-1}\rVert_F^2\lVert \hat{P}_0\rVert_F^2\right)\delta_{54}$,\\
$\rho_3:=\frac{\sqrt{n_x}}{1-\epsilon^2}\delta_{63}+\frac{\sqrt{n_x}}{1-\epsilon^2}\frac{\sqrt{n_x}}{1-\epsilon_{max}^2}\left(\lVert Q\rVert_F+\lVert A\rVert_F^2\lVert B\rVert_F^2\lVert R\rVert_F\lVert R^{-1}\rVert_F^2 \lVert\hat{P}_0\rVert_F^2\right)\delta_{53}$,\\
$\rho_2:=\frac{\sqrt{n_x}}{1-\epsilon^2}\delta_{62}+\frac{\sqrt{n_x}}{1-\epsilon^2}\frac{\sqrt{n_x}}{1-\epsilon_{max}^2}\left(\lVert Q\rVert_F+\lVert A\rVert_F^2\lVert B\rVert_F^2\lVert R\rVert_F \lVert R^{-1}\rVert_F^2\lVert\hat{P}_0\rVert_F^2\right)\delta_{52}$,\\
$\rho_1:=\frac{\sqrt{n_x}}{1-\epsilon^2}\delta_{61}+\frac{\sqrt{n_x}}{1-\epsilon^2}\frac{\sqrt{n_x}}{1-\epsilon_{max}^2}\left(\lVert Q\rVert_F+\lVert A\rVert_F^2\lVert B\rVert_F^2\lVert R\rVert_F \lVert R^{-1}\rVert_F^2\lVert\hat{P}_0\rVert_F^2\right)\delta_{51}$,\\
$ \rho_0:=\frac{\sqrt{n_x}}{1-\epsilon^2}\delta_{60}+\frac{\sqrt{n_x}}{1-\epsilon^2}\frac{\sqrt{n_x}}{1-\epsilon_{max}^2}\left(\lVert Q\rVert_F+\lVert A\rVert_F^2\lVert B\rVert_F^2\lVert R\rVert_F\lVert R^{-1}\rVert_F^2\lVert \hat{P}_0\rVert_F^2\right)\delta_{50}.$\\

Now, we can go back to \eqref{ISS111}: for any $i\in \mathbb{Z}_+$
\begin{equation}\label{ISS16}
  \begin{split}
     \lVert \hat{P}_{i+1}- P^*\rVert_F &\leq \lVert \mathcal{L}^{-1}_{\left(A,B,\hat{P}_i\right)}\left(Q+\alpha^\top\left(\hat{P}_i\right)\beta^{-1}\left(\hat{P}_i\right)R\beta^{-1}\left(\hat{P}_i\right)\alpha\left(\hat{P}_i\right)\right)-P^*\rVert_F+ \lVert \varepsilon\left(\Delta A_i,\Delta B_i\right) \rVert_F \\
       &\leq c \lVert \hat{P}_{i}- P^*\rVert_F+\sigma_i \lVert \Delta \theta_i \rVert_F\leq c \lVert \hat{P}_{i}- P^*\rVert_F+\underbrace{\sigma_i \Delta\theta_i^\mathrm{Upper}}_{=:\Omega_i}.
  \end{split}
\end{equation}
The first inequality comes from taking the norm of \eqref{ISS111} and Theorem \ref{theorem5}, and the second inequality comes from \eqref{ISS16}. Then we recursively apply \eqref{ISS16} and get:
\begin{equation}\label{ISS17}
  \begin{split}
     \lVert \hat{P}_{i}- P^*\rVert_F &\leq c \lVert \hat{P}_{i-1}- P^*\rVert_F+\Omega_{i-1}\leq c^2 \lVert \hat{P}_{i-2}- P^*\rVert_F+\left(1+c\right)\lVert \Omega\rVert_{\infty}\\
     &\leq...\leq c^{i}\lVert \hat{P}_{0}- P^*\rVert_F+\left(1+c+c^2+...+c^{i-1}\right)\lVert \Omega \rVert_{\infty} \leq c^{i}\lVert \hat{P}_{0}- P^*\rVert+ \frac{1}{1-c}\lVert \Omega\rVert_{\infty}.
  \end{split}
\end{equation}
Then we finish the proof and get the final result:
\begin{equation}\label{ISS18}
   \lVert \hat{P}_{i}- P^*\rVert_F \leq \beta\left(\lVert \hat{P}_{0}- P^*\rVert_F,i\right)+\gamma(\lVert \Omega\rVert_{\infty}),
\end{equation}
where $\beta\left(\lVert \hat{P}_{0}- P^*\rVert,i\right):=c^i\lVert \hat{P}_{0}- P^*\rVert_F$ is $\mathcal{KL}$ function and $\gamma\left(\lVert \Omega\rVert_{\infty}\right):=\frac{1}{1-c}\lVert \Omega\rVert_{\infty}$ is a $\mathcal{K}$ function. 
\subsection{Proof of Corollary \ref{Coro1}}\label{App3}
From Lemma \ref{theorem21} and Corollary \ref{Coro2}, we know that the sequence $\Delta\theta_i^\mathrm{Upper}$ is non-increasing and converges to $0$ if the sequence is locally persistent with respect to any $N,M,\alpha$. For any $\epsilon>0$, there exists an index $i_1\in \mathbb{Z}_+$, such that $\Delta\theta_{i_1}^\mathrm{Upper}<\frac{\epsilon_1(1-c)}{2\sigma_0}$. Consequently, $\sup\{\Delta\theta_i^\mathrm{Upper}\}_{i=i_1}^{\infty}<\frac{\epsilon_1(1-c)}{2\sigma_0}$. Now, consider $i_2\geq i_1$. For $k\geq i_2$, because $\hat{P}_i$ is bounded, we have the following from \eqref{reformulated}:
\begin{equation}\label{ISS19}
   \begin{split}
      \lVert \hat{P}_{i}- P^*\rVert_F &\leq \beta(\lVert \hat{P}_{i_2}- P^*\rVert_F,i-i_2)+\frac{\epsilon_1}{2}\\
      &\leq \beta(\epsilon_{2},i-i_2)+\frac{\epsilon_1}{2},
   \end{split}
\end{equation}
for some $\epsilon_{2}>0$. Since $\lim\limits_{i\rightarrow\infty}\beta(\epsilon_{2},i-i_2)=0$, there exists an $i_3\geq i_2$ such that $\beta(\epsilon_{2},i-i_2)<\frac{\epsilon_1}{2}$ for all $i\geq i_3$. This completes the proof $\lim\limits_{i\rightarrow\infty}\hat{P}_i=P^*$. Together with Corollary \ref{Coro2}, $\lim\limits_{i\rightarrow\infty}\theta_i=\theta$, we can conclude $\lim\limits_{i\rightarrow\infty}\hat{K}_i=K^*$.
\end{document}